\title{Trevisan's extractor in the presence of quantum side information}
\author[1]{Anindya De\thanks{Supported by the Berkeley fellowship for
    Graduate study and NSF-CCF-1017403.}}
\author[2,3]{Christopher
  Portmann\thanks{Supported by the Swiss
    National Science Foundation (via grant Nos.~200021-119868 and
    200020-135048 and the National Centre of Competence in Research
    `Quantum Science and Technology') and the European Research
    Council --- ERC (grant no.~258932).}\thanks{Supported in part by
    Vienna Science and Technology Fund (WWTF) through project
    ICT10-067 (HiPANQ).}}
\author[4]{Thomas Vidick\thanks{Supported by the National Science
    Foundation under Grant No.~0844626.}}
\author[2]{Renato Renner$^\dag$}
\affil[1]{Computer Science Division, University of California,
  Berkeley, CA, USA. \mailto{anindya@cs.berkeley.edu}}
\affil[2]{Institute for Theoretical Physics, ETH Zurich, 8093
  Zurich, Switzerland. \texttt{\{\href{mailto:chportma@phys.ethz.ch}{chportma},\href{mailto:renner@phys.ethz.ch}{renner}\}@phys.ethz.ch}}
\affil[3]{Group of Applied Physics, University of Geneva, 1211
  Geneva, Switzerland.}
\affil[4]{Computer Science and Artificial Intelligence
  Laboratory, Massachusetts Institute of Technology, Cambridge, MA,
  USA. \mailto{vidick@csail.mit.edu}.}
\date{\today}
\begin{document}

\maketitle

\begin{abstract}
  Randomness extraction involves the processing of purely classical
  information and is therefore usually studied in the framework of
  classical probability theory. However, such a classical treatment is
  generally too restrictive for applications where side information
  about the values taken by classical random variables may be
  represented by the state of a quantum system. This is particularly
  relevant in the context of cryptography, where an adversary may make
  use of quantum devices.  Here, we show that the well known
  construction paradigm for extractors proposed by Trevisan is sound
  in the presence of quantum side information.

  We exploit the modularity of this paradigm to give several concrete
  extractor constructions, which, e.g., extract all the conditional
  (smooth) min-entropy of the source using a seed of length
  poly-logarithmic in the input, or only require the seed to be weakly
  random.
\end{abstract}

% {\bf Keywords: } Key words?

\section{Introduction}
\label{sec:intro}

Randomness extraction is the art of generating (almost) uniform
randomness from any weakly random source $X$.  More precisely, a
\emph{randomness extractor} (or, simply \emph{extractor}) is a
function $\Ext$ that takes as input $X$ together with a uniformly
distributed (and usually short) string $Y$, called the \emph{seed},
and outputs a string $Z$. One then requires $Z$ to be almost uniformly
distributed whenever the min-entropy of $X$ is larger than some
threshold $k$, i.e.,
\begin{align} \label{eq:reqstand}
  \HminOp(X) \geq k  \, \implies \, Z \coloneqq \Ext(X,Y) \text{ statistically close to uniform}.
\end{align}
The min-entropy of a random variable $X$ is directly related to the
probability of correctly guessing the value of $X$ using an optimal
strategy: $2^{-\Hmin{X}} = \max_x P_X(x)$. Hence \critref{eq:reqstand}
can be interpreted operationally: if the maximum probability of
successfully guessing the input of the extractor, $X$, is sufficiently
low then its output is statistically close to uniform.

The randomness of a value $X$ always depends on the information one
has about it, in the following called \emph{side information}.  In
cryptography, for instance, a key is supposed to be uniformly random
from the point of view of an adversary, who may have access to
messages exchanged by the honest parties, which we would therefore
consider as side information. Here, extractors are typically used for
\emph{privacy amplification}~\cite{BBR88,BBCM95}, i.e., to turn a
partially secure raw key (about which the adversary may have
non-trivial information) into a perfectly secure key. We thus demand
that the extractor output be uniform with respect to the side
information held by the adversary.  Another example is
\emph{randomness recycling} in a computation, which can be done using
extractors~\cite{IZ89}. The aim is that the recycled randomness be
independent of the outputs of previous computations, which are
therefore considered as side information.

In the following, we make side information explicit and denote it by
$E$. The notions of randomness we are going to use, such as the
\emph{guessing probability}, \emph{min-entropy} or the
\emph{uniformity} of a random variable, must then be defined with
respect to $E$.  We can naturally
reformulate~\critref{eq:reqstand} as
\begin{align} \label{eq:reqside}
  \HminOp(X|E) \geq k  \, \implies \, Z \coloneqq \Ext(X,Y) \ & \text{statistically close to uniform}\\
  &\text{conditioned on $E$,}\notag
\end{align}
where $\Hmin{X|E}$ is the conditional min-entropy, formally defined in
\secref{subsec:min-entropy}. This conditioning naturally extends the
operational interpretation of the min-entropy to scenarios with
explicit side information, i.e., $2^{-\Hmin{X|E}}$ is the maximum
probability of correctly guessing $X$, given access to side
information $E$~\cite{KRS09}.

Interestingly, the relationship between the two
Criteria~\eqref{eq:reqstand} and \eqref{eq:reqside} depends on the
physical nature of the side information $E$, i.e., whether $E$ is
represented by the state of a classical or a quantum system. In the
case of purely classical side information, $E$ may be modeled as a
random variable and it is known that the two criteria are essentially
equivalent (see \lemref{lem:KT08} for a precise statement). But in the
general case where $E$ is a quantum system, \critref{eq:reqside} is
\emph{strictly stronger} than~\eqref{eq:reqstand}: it was shown
in~\cite{GKKRW07} that there exist extractors that
fulfill~\eqref{eq:reqstand} but for which~\eqref{eq:reqside} fails
(see also~\cite{KR07} for a discussion).

Since our world is inherently non-classical, it is of particular
importance that~\eqref{eq:reqside} rather than the
weaker~\critref{eq:reqstand} be taken as the relevant criterion for
the definition of extractors. In cryptography, for instance, there is
generally nothing that prevents an adversary from holding quantum side
information. In fact, even if a cryptographic scheme is purely
classical, an adversary may acquire information using a non-classical
attack strategy. Hence, when using extractors for privacy
amplification, \critref{eq:reqstand} does not generally imply
security. A similar situation may arise in the context of randomness
recycling. If we run a (simulation of) a quantum system $E$ using
randomness $X$, approximately $\Hmin{X|E}$ bits of $X$ can be
reused. If we now, in an attempt to recycle the randomness, apply a
function $\Ext$ which fulfills \eqref{eq:reqstand} but
not~\eqref{eq:reqside}, the output $Z$ may still be correlated to the
system $E$.

It is known that the conditional min-entropy accurately characterizes
the maximum amount of uniform randomness that can be extracted from
$X$ while being independent from $E$. (More precisely, the
\emph{smooth conditional min-entropy}, an entropy measure derived from
$\Hmin{X|E}$ by maximizing the latter over all states in an
$\eps$-neighborhood, is an upper bound on the amount of uniform
randomness that can be extracted; see ~\secref{subsec:min-entropy}
and~\cite{Ren05} for details). In other words, the characterization of
extractors in terms of $\Hmin{X|E}$ is essentially optimal, and one
may thus argue that \critref{eq:reqside} is indeed the correct
definition for randomness extraction (see
also~\cite{Ren05,KR07,KT08}). In this work, we follow this line of
argument and call an extractor \emph{quantum-proof} if it
satisfies~\critref{eq:reqside} (see \secref{subsec:extractors.defs}).

We note that there have been alternative proposals in the literature
for defining extractors in the context of quantum side information,
which do however not satisfy the above optimality condition. One
prominent example is the bounded storage model
(see~\secref{subsec:local}), where the (quantum) side information $E$ is
characterized by the number of qubits, $H_0(E)$, required to store
it. In this model, the entropy $\Hmin{X|E}$ of a source $X$
conditioned on $E$ is lower-bounded by $\Hmin{X}-H_0(E)$.  However,
this characterization of side information is strictly weaker than that
using $\Hmin{X|E}$: there are sources $X$ and nontrivial side
information $E$ such that $\Hmin{X}-H_0(E) \ll
\Hmin{X|E}$.\footnote{This can easily be seen by considering the
  following example. Let $X$ be uniformly distributed on $\{0,1\}^n$
  and $E$ be $X$ with each bit flipped with constant probability $\eps
  < 1/2$. Then $\Hmin{X|E} = \Theta(n)$, but $\Hmin{X}-H_0(E) = 0$.}
In particular, even if an extractor can provably extract
$\Hmin{X}-H_0(E)$ bits of uniform (with respect to $E$) randomness
from a source $X$, we do not know whether the same extractor can
attain the optimal $\Hmin{X|E}$ bits. Note also that the same
considerations apply to the purely classical case. In fact, no recent
work defines classical extractors for randomness sources with side
information stored in bounded classical memories.\footnote{Restricting
  the class of randomness sources further than by bounding their
  min-entropy can have advantages. For example, if we consider only
  bit-fixing sources, or sources generated by a random walk on a
  Markov chain, then the extractor can be deterministic. (See
  \cite{Sha02} for a brief overview of restricted families of sources
  studied in the literature.) There is however no known advantage
  (e.g., in terms of seed length) in considering only input sources
  with side information stored in a memory of bounded size, whether it
  is classical or quantum.}

Finally we remark that the increased generality attained by the notion
of quan\-tum-proof extractors used here is crucial for applications. For
example in quantum key distribution, where extractors are used for
privacy amplification~\cite{Ren05}, it is generally impossible to
bound the adversary's memory size.

\subsection{Related results}

In the standard literature on randomness extraction, constructions of
extractors are usually shown to fulfill \critref{eq:reqstand}, for
certain values of the threshold $k$ (see~\cite{Zuc90} as well
as~\cite{Sha02} for an overview). However, only a few constructions
have been shown to fulfill \critref{eq:reqside} with arbitrary
quantum side information $E$. Among them is two-universal
hashing~\cite{Ren05,TSSR10}, constructions based on the
sample-and-hash approach~\cite{KR07}, as well as all extractors with
one-bit output~\cite{KT08}.

Recently, Ta-Shma~\cite{Ta09} studied Trevisan's~\cite{Tre01}
construction of extractors in the bounded quantum storage model.  The
result was a breakthrough because it, for the first time, implied the
existence of quantum-proof extractors requiring only short seeds
(logarithmic in the input length). Unfortunately, Ta-Shma's result is
proved in the bounded quantum storage model. More precisely, he
requires the output length to be much smaller than the min-entropy of
the original data: it scales as $(\Hmin{X}/H_0(E))^{1/c}$, where $c>1$
is a constant.

Subsequent to this work, Ben-Aroya and Ta-Shma~\cite{BT12} showed how
two versions of Trevisan's extractor, shown quantum-proof in this
paper, can be combined to extract a constant fraction of the
min-entropy of an $n$-bit source with a seed of length $O(\log n)$,
when $\Hmin{X|E} > n/2$. This is better than the straightforward
application of Trevisan's extractor analyzed here, which requires
$O(\log^2 n)$ bits of seed for the same output size (but works for any
$\Hmin{X|E}$).

\subsection{Our contribution}
In this work, we show that the performance of Trevisan's extractor
does not suffer in the presence of quantum side information.  This
improves on the best previously known result~\cite{Ta09} in two major
ways. First, we prove our results in the most general model, where the
min-entropy of the source is measured relative to quantum side
information (\critref{eq:reqside}). Second, we show that the output
length of the extractor can be close to the optimal conditional
min-entropy $\Hmin{X|E}$ (see \corref{cor:optimalentropyloss2} for the
exact parameters).\footnote{In the conference version of this
  paper~\cite{DV10}, two of us showed that a similar result could be
  obtained in the more restricted bounded-storage model.} This
provides the first proof of soundness for an extractor with
poly-logarithmic seed meeting \critref{eq:reqside} in the presence of
arbitrary quantum side information.

More generally, we show that a whole class of extractors is
quantum-proof. It has been observed, by, e.g., Lu~\cite{Lu04} and
Vadhan~\cite{Vad04}, that Trevisan's extractor~\cite{Tre01} (and
variations of it, such as \cite{RRV02}) can be seen as a concatenation
of the outputs of a one-bit extractor with different pseudo-random
seeds. Since the proof of the extractor property is independent of the
type of the underlying one-bit extractor (and to some extent the
construction of the pseudo-random seeds), our result is valid for a
generic scheme (defined in \secref{subsec:genericscheme},
\defref{def:genericscheme}). We find that the performance of this
generic scheme in the context of quantum side information
(\secref{subsubsec:security.uniform}, \thmref{thm:security}) is roughly
equivalent to the (known) case of purely classical side
information~\cite{RRV02}.

In practical situations where quantum-proof extractors are used, e.g.,
privacy amplification in quantum key distribution~\cite{Ren05}, the
players do not necessarily have access to a uniform source of
randomness. We therefore analyze separately the situation where the
seed is only weakly random, and show that Trevisan's extractor is
quantum-proof in that setting as well
(\secref{subsubsec:security.non-uniform},
\thmref{thm:security.non-uniform}).

By ``plugging'' various one-bit extractors and pseudo-random seeds
into the ge\-ner\-ic scheme, we obtain different final constructions,
optimized for different needs, e.g., maximizing the output length,
minimizing the seed, or using a non-uniform seed. In
\tableref{tab:constructions} we give a brief overview of the final
constructions proposed.

\begin{table}[tb]
%  \begin{center}
  \begin{tabular}{| l | c | c | c | p{2.2cm} |} \hline & Min-entropy &
    Output length & Seed length & Note \\ \hline
    \shortcorref{cor:optimalentropyloss2} & any $k$ & $m = k - 4 \log
    1/\eps$ & $d = O(\log^3 n)$ & optimized output length \\[3pt]
    \hline \shortcorref{cor:logseed} & $k = n^\alpha$ & $m = n^{\alpha -
      \gamma}$ & $d = O(\log n)$ & optimized seed length \\ \hline
    \shortcorref{cor:localextractor} & $k = \alpha n$ & $m =
    (\alpha - \gamma)n$ & $d = O(\log^2 n)$ & local extractor\\
    \hline \shortcorref{cor:weak-random-seed} &$k = n^\alpha$ & $m =
    n^{\alpha - \gamma}$ & $d = O(\log n)$ & seed with
    min-entropy $\beta d$ \\ \hline
  \end{tabular}
%  \end{center}
  \caption[Overview of concrete constructions]{Plugging various weak
    designs and $1$-bit extractors in Trevisan's construction, we
    obtain these concrete extractors. Here $n$ is the input
    length, $\eps = \poly{1/n}$ the error, $\alpha$ and $\gamma$ are
    arbitrary constants such that $0 < \gamma < \alpha \leq 1$, and
    $\frac{1}{2} < \beta < 1$ is a specific constant.}
  \label{tab:constructions}
\end{table}

\subsection{Proof technique}

The proof proceeds by contradiction. We first assume that a player
holding the side information $E$ can distinguish the output from
uniform with probability greater than $\eps$. We then show that such a
player can reconstruct the input $X$ with high probability, which
means that $X$ must have low min-entropy ($\Hmin{X|E} < k$). Taking
the contrapositive proves that the extractor is sound.

Trevisan~\cite{Tre01} originally proved the soundness of his
extractor this way. His construction starts by encoding the source $X$
using a list-decodable code $C$. The output of
the extractor then consists of certain bits of $C(X)$, which are
specified by the seed and a construction called a (weak)
design~\cite{NW94,RRV02}. (See \secref{subsec:genericscheme} for a
precise description of Trevisan's extractor.) His proof can then be
broken down in two steps. He first shows that a player who can
distinguish the output from uniform can guess a random bit of
$C(X)$. In the second step, he shows that such a player can
reconstruct $X$.

Proving the soundness of Trevisan's extractor in the quantum min-entropy
framework requires some important changes. 
In order to better explain these new elements, it will be
useful to first give a brief overview of the main steps that go into
Ta-Shma's proof~\cite{Ta09}. For the sake of contradiction, assume that there is a
test $T$ which performs a measurement on the side information $E$ in
order to distinguish the output from uniform with advantage $\eps$. Using a
standard hybrid argument, along with properties of the (weak) design,
one can then construct a new test $T'$ (using a little extra classical
advice about $X$) which predicts a random bit of $C(X)$ with
probability $\frac{1}{2}+\frac{\eps}{m}$, where $m$ is the number of
output bits. Further, $T'$ makes exactly \emph{one} query to $T$.

The proof in~\cite{Ta09}
proceeds by showing how from such a test, one can construct another
test $T''$ which predicts any bit of $X$ with probability $0.99$ and
queries $T'$ at most $q=(m/\eps)^c$ times ($c=15$ for the code
in~\cite{Ta09}). This gives a random access code (RAC)~\cite{ANTV99}
for $X$; however, since it requires $q$ queries to the side
information $E$, the no-cloning theorem forces us to see it as
querying a single system of length $qH_0(E)$ (recall that Ta-Shma's
result was proved in the bounded storage model, where one bounds the
information provided by $E$ by its number of qubits $H_0(E)$). Finally, 
using a new bound on the dimension of RACs~\cite{Ta09}, one
finds that $\Hmin{X} \gtrsim m^c H_0(E)$, hence $m \lesssim
(\Hmin{X}/H_0(E))^{1/c}$, where for simplicity we have taken the error
$\eps$ to be a constant.

Our proof improves upon Ta-Shma's through two major changes. 
First, we model the side information $E$
explicitly, instead of viewing it as an oracle which one queries. Indeed, the
measurement performed by the test $T'$ to predict the bits of $C(X)$
will be different from the measurement performed by $T''$ to
reconstruct $X$, and this cannot be captured by the
``oracle side-information'' model of Ta-Shma. We
thus show (in \secref{subsec:security}, \propref{prop:guessing-cx}) that
if the output of the extractor can be distinguished from uniform with
probability $\frac{1}{2}+\eps$ by a player holding the side
information $E$, then the bits of $C(X)$ can be guessed with
probability $\frac{1}{2}+\frac{\eps}{m}$ by a player holding $E$ and
some extra small classical information $G$.

Second, we depart from the reconstruction paradigm at the heart of the
second half of the proof of both Trevisan's and Ta-Shma's
results. Instead of defining explicitly the measurement and
computation necessary to reconstruct $X$, we use the fact that for any
list-decodable code $C : \{0,1\}^n \to \{0,1\}^{\bar{n}}$, the
function
\begin{align*}
    C' : \{0,1\}^n \times [\bar{n}] & \to \{0,1\} \\
    (x,i) & \mapsto C(x)_i
\end{align*}
is a one-bit extractor according to \critref{eq:reqstand}
(see \appendixref{sec:codesRextractors} for more details). It was
however proved by K\"onig and Terhal~\cite{KT08}, that in the one-bit
setting the more general \critref{eq:reqside} is essentially
equivalent to the usual \critref{eq:reqstand}. This result lets us
conclude directly that the input $X$ must have low min-entropy
relative to the quantum side information $E$.

This proof structure results in a very modular extractor construction
paradigm, which allows arbitrary one-bit extractors and pseudo-random
seeds to be plugged in, producing many different final
constructions, some of which are given in \tableref{tab:constructions}
and detailed in \secref{sec:plugging}.

\subsection{Organization of the paper}

We first define the necessary
technical tools in \secref{sec:preliminaries}, in particular the
conditional min-entropy. In \secref{sec:extractors} we give formal
definitions of extractors and discuss how much randomness can be
extracted from a given source. \secref{sec:main} contains the
description of Trevisan's extractor construction paradigm and our main
result: a proof that this construction paradigm is sound in the presence of
quantum side information, in the cases of both uniform and weakly
random seeds. Then in \secref{sec:plugging} we plug into Trevisan's
construction various one-bit extractors and pseudo-random seed
constructions, resulting in various different extractors. For example,
\secref{subsec:optimalentropyloss} contains a construction which is
nearly optimal in the amount of randomness extracted (which is
identical to the best known bound in the classical case~\cite{RRV02}
for Trevisan's extractor), and \secref{subsec:weak-random-seed} gives an
extractor which is still sound if there is a small linear entropy loss
in the seed. Finally, in \secref{sec:others}, we give a brief outlook
on further work. In particular, we mention a few classical results
which modify and improve Trevisan's extractor, but for which the
soundness in the presence of quantum side information does not seem to
follow immediately from this work.

The appendix contains many technical sections and lemmas which are not
essential for understanding Trevisan's extractor, but are nonetheless
an important part of the construction and
proof. \appendixref{sec:more} develops a bit more the general theory
of extractors: it contains two subsections which, respectively, define
extractors for weakly random seeds and show how to compose extractors
to obtain more randomness from the same
source. In \appendixref{sec:lemmas} we state several technical lemmas:
min-entropy chain rules and the details of the reduction from
Trevisan's construction to the underlying one-bit extractor. Finally,
in \appendixref{sec:codesRextractors} we give a proof that
list-decodable codes are one-bit extractors.

\section{Technical preliminaries}
\label{sec:preliminaries}

\subsection{Notation}
\label{subsec:notation}

We write $[N]$ for the set of integers $\{1,\dotsc,N\}$. If $x \in
\{0,1\}^n$ is a string of length $n$, $i \in [n]$ an integer, and $S
\subseteq [n]$ a set of integers, we write $x_i$ for the \ith{i} bit
of $x$, and $x_S$ for the string formed by the bits of $x$ at the
positions given by the elements of $S$.

\hilbert\ always denotes a finite-dimensional Hilbert space. We denote
by $\po{}$ the set of positive semi-definite operators on \hilbert. We
define the set of normalized quantum states $\no{} \coloneqq \{ \rho
\in \po{} : \tr \rho = 1\}$ and the set of sub-normalized quantum
states $\sno{} \coloneqq \{ \rho \in \po{}: \tr \rho \leq 1\}$.

We write $\hilbert_{AB} = \hilbert_A \tensor \hilbert_B$ for a
bipartite quantum system and $\rho_{AB} \in \po{AB}$
for a bipartite quantum state. $\rho_A = \trace[B]{\rho_{AB}}$ and
$\rho_B = \trace[A]{\rho_{AB}}$ denote the corresponding reduced
density operators.

If a classical random variable $X$ takes the value $x \in \cX$
with probability $p_x$, it can be represented by the state $\rho_X =
\sum_{x \in X} p_x \proj{x}$, where $\{\ket{x}\}_{x \in \cX}$
is an orthonormal basis of a Hilbert space $\hilbert_X$. If the
classical system $X$ is part of a composite system $XB$, any state of
that composite system can be written as $\rho_{XB} = \sum_{x \in
  \cX} p_x \proj{x} \tensor \rho^x_B$.

$\trnorm{\cdot}$ denotes the trace norm and is defined by
$\trnorm{A} \coloneqq \tr \sqrt{\hconj{A}A}$.

% Note that if a classical system $X$ has dimension $2$, then
% \begin{equation} \label{eq:binarypguess} \trnorm{\rho_{XB} - \rho_{U_1}
%     \tensor \rho_B} = \trnorm{\rho_B^0 - \rho_B^1}. \end{equation}

\subsection{Min-entropy}
\label{subsec:min-entropy}

To measure how much randomness a source contains and can be extracted,
we need to use the \emph{smooth conditional min-entropy}. This entropy
measure was first defined by Renner~\cite{Ren05}, and represents the
optimal measure for randomness extraction in the sense that it is
always possible to extract that amount of almost-uniform randomness
from a source, but never more. Before defining this notion, we first
state a \emph{non-smooth} version.

\begin{deff}[conditional min-entropy~\cite{Ren05}]
  \label{def:min-entropy}
  Let $\rho_{AB} \in \sno{AB}$. The
  \emph{min-entropy} of $A$ conditioned on $B$ is defined as
  \begin{equation*}
%    \label{eq:min-entropy}
    \Hmin[\rho]{A|B} \coloneqq \max \{\lambda
    \in \R :  \exists \sigma_B \in \no{B} \,\,\mathrm{s.t.}\,\, 2^{-\lambda} \1_A \tensor \sigma_B \geq \rho_{AB}\}.
  \end{equation*}
\end{deff}

We will often drop the subscript $\rho$ when there is no doubt about what
underlying state is meant.

This definition has a simple operational interpretation when the
first system is classical, which is the case we consider. K\"onig et
al.~\cite{KRS09} showed that for a state $\rho_{XB} = \sum_{x \in
  \cX} p_x \proj{x} \tensor \rho^x_B$ classical on
$X$, \begin{equation} \label{eq:hmin=pguess} \Hmin[\rho]{X|B} = - \log
  \pguess[\rho]{X|B},\end{equation} where $\pguess{X|B}$ is the
maximum probability of guessing $X$ given $B$, namely
\begin{equation*}
%  \label{eq:pguess} 
  \pguess[\rho]{X|B} \coloneqq
  \max_{\{E^x_B\}_{x \in \cX}} \left( \sum_{x \in \cX}
    p_x \trace{E^x_B\rho^x_B} \right),
\end{equation*}
where the maximum is taken over all POVMs $\{E^x_B\}_{x \in \cX}$ on $B$.
If the system $B$ is empty, then the min-entropy of $X$ reduces to the
Renyi entropy of order infinity, $\Hmin{X} = - \log \max_{x \in \cX}
p_x$ (sometimes written $H_\infty(X)$). In this case the connection to
the guessing probability is particularly obvious: when no side
information is available, the best guess we can make is simply the
value $x \in \cX$ with highest probability.

The \emph{smooth} min-entropy then consists in maximizing the
min-entropy over all sub-normalized states $\eps$-close to the actual
state $\rho_{XB}$ of the system considered. Thus by introducing an
extra error $\eps$, we have a state with potentially much more
entropy. (See \secref{subsec:more-randomness} for more details.)

% As distance metric on the set of sub-normalized states, we will use
% the \emph{purified distance}, as defined in~\cite{TCR09}.

% \begin{deff}[purified distance]
%   \label{def:purifieddistance}
%   Let $\rho,\sigma \in \sno{}$. We define the \emph{purified distance}
%   between $\rho$ and $\sigma$ as
%   \[ P(\rho,\sigma) \coloneqq \sqrt{1 - \bar{F}(\rho,\sigma)^2}, \]
%   where $\bar{F}(\cdot,\cdot)$ denotes the generalized fidelity
%   \begin{align*} \bar{F}(\rho,\sigma) & = \trnorm{\sqrt{\rho \oplus (1-\tr\rho)}
%     \sqrt{\sigma \oplus (1-\tr\sigma)}} \\ & = \trnorm{\sqrt{\rho}
%     \sqrt{\sigma}} + \sqrt{(1-\tr\rho) (1-\tr\sigma)}.\end{align*}
% \end{deff}

% Note that the purified distance is larger than the trace distance (for
% a proof, we refer to \cite{TCR09}), namely for any two states
% $\rho,\sigma \in \sno{}$,
% \begin{equation} \label{eq:pdtracenorm}
%   P(\rho,\sigma) \geq \frac{1}{2}\trnorm{\rho - \sigma}. 
% \end{equation}

% This allows us to define an $\eps$-ball of states. 
% \begin{deff}
%   \label{eq:epsilon-ball}
%   Let $\eps \geq 0$ and $\rho \in \sno{}$ with $\sqrt{\tr \rho} >
%   \eps$. Then, we define an $\eps$-ball in \hilbert\ around
%   $\rho$ as
%   \begin{equation*}
% %    \label{eq:epsilon-ball}
%     \cB^\eps(\rho) \coloneqq \{\sigma \in
%     \sno{} : P(\rho,\sigma) \leq
%     \eps \}.
%   \end{equation*}
% \end{deff}

\begin{deff}[smooth min-entropy~\cite{Ren05,TCR09}]
  \label{def:smooth-min-entropy}
  Let $\eps \geq 0$ and $\rho_{AB} \in \sno{AB}$, then the
  \emph{$\eps$-smooth min-entropy} of $A$ conditioned on $B$ is defined as
  \begin{equation*}
%    \label{eq:smooth-min-entropy}
    \HminSmooth[\rho]{\eps}{A|B} \coloneqq \max_{\tilde{\rho}_{AB}
      \in \cB^\eps(\rho_{AB})} \Hmin[\tilde{\rho}]{A|B},
  \end{equation*}
  where $\cB^\eps(\rho_{AB}) \subseteq \sno{AB}$ is a ball of
  sub-normalized states of radius $\eps$ around
  $\rho_{AB}$.\footnote{The distance measure used in this definition
    is the \emph{purified distance}~\cite{TCR09}, $P(\rho,\sigma)
    \coloneqq \sqrt{1 - F(\rho,\sigma)^2}$, where $F(\cdot,\cdot)$ is
    the fidelity. The only property of the purified distance we need
    in this work is that it upper bounds the trace distance, i.e.,
    $P(\rho,\sigma) \geq \frac{1}{2}\trnorm{\rho - \sigma}$. We refer
    to \cite{TCR09} for a formal definition of the purified distance
    (and fidelity) on sub-normalized states and a discussion of its
    advantages.}
\end{deff}

\section{Extractors}
\label{sec:extractors}

\subsection{Extractors, side information, and privacy amplification}
\label{subsec:extractors.defs}

An extractor $\Ext: \{0,1\}^n \times \{0,1\}^d \to \{0,1\}^m$ is a
function which takes a weak source of randomness $X$ and a uniformly
random, short seed $Y$, and produces some output $\Ext(X,Y)$, which is
almost uniform. The extractor is said to be strong, if the output is
approximately independent of the seed.

\begin{deff}[strong extractor~\cite{NZ96}]
  \label{def:extractor}
  A function $\Ext: \{0,1\}^n \times \{0,1\}^d \to \{0,1\}^m$ is a
  \emph{$(k,\eps)$-strong extractor with uniform seed}, if for all
  distributions $X$ with min-entropy $\Hmin{X} \geq k$ and a uniform seed $Y$, we
  have\footnote{A more standard classical notation would be
    $\frac{1}{2} \left\| \Ext(X,Y) \circ Y - U_m \circ Y \right\| \leq
    \eps$, where the distance metric is the variational
    distance. However, since classical random variables can be
    represented by quantum states diagonal in the computational basis,
    and the trace distance reduces to the variational distance, we use
    the quantum notation for compatibility with the rest of this
    work.} \[\frac{1}{2} \trnorm{ \rho_{\Ext(X,Y)Y} - \rho_{U_m}
    \tensor \rho_Y} \leq \eps, \] where $\rho_{U_m}$ is the fully
  mixed state on a system of dimension $2^m$.
\end{deff}

Using the connection between min-entropy and guessing probability
(\eqnref{eq:hmin=pguess}), a $(k,\eps)$-strong extractor can be
seen as a function which guarantees that if the guessing probability 
of $X$ is not too high ($\pguess{X} \leq 2^{-k}$), then it
produces a random variable which is approximately uniform and
independent from the seed $Y$.

As discussed in the introduction, we consider here a more general
situation involving side information, denoted by $E$, which may be
represented by the state of a quantum system.  A function $\Ext$ is
then an extractor if, when the probability of guessing $X$ \emph{given
  $E$} is not too high, $\Ext$ can produce a random variable
$\Ext(X,Y)$ which is approximately uniform and independent from the
seed $Y$ and the side information $E$. Equivalently, one may think of
a \emph{privacy amplification} scenario~\cite{BBR88,BBCM95}, where $E$
is the information available to an adversary and where the goal is to
turn weakly secret data $X$ into a \emph{secret} key $\Ext(X,Y)$,
where the seed $Y$ is assumed to be public. (In typical key agreement
protocols, the seed is chosen by the legitimate parties and exchanged
over public channels.)

The following definition covers the general situation where the side
information $E$ may be represented quantum-mechanically. The case of
purely classical side information is then formulated as a restriction
on the nature of $E$.

\begin{deff}[quantum-proof strong
  extractor~\protect{\cite[Section 2.6]{KR07}}]
  \label{def:extractorwithadversary}
  A function $\Ext: \{0,1\}^n \times \{0,1\}^d \to \{0,1\}^m$ is a
  \emph{quantum-proof} (or simply \emph{quantum})
  \emph{$(k,\eps)$-strong extractor with uniform seed}, if for all
  states $\rho_{XE}$ classical on $X$ with $\Hmin[\rho]{X|E} \geq k$,
  and for a uniform seed $Y$, we
  have\footnoteremember{fn:TSSRdef}{\cite{TSSR10} substitute $\exists
    \sigma_{YE} \text{\ s.t.\ } \frac{1}{2} \trnorm{
      \rho_{\Ext(X,Y)YE} - \rho_{U_m} \tensor \sigma_{YE}} \leq \eps$
    for \eqnref{eq:extractorwithadversary}. This results in a weaker
    definition, which does not offer the same composability
    guarantees. In particular, \lemref{lem:composition} does not hold
    with the same parameters when extractors are defined as in
    \cite{TSSR10}.}
\begin{equation} \label{eq:extractorwithadversary} \frac{1}{2} \trnorm{
    \rho_{\Ext(X,Y)YE} - \rho_{U_m} \tensor \rho_Y \tensor \rho_E}
  \leq \eps, \end{equation} where $\rho_{U_m}$ is the fully mixed state on a
  system of dimension $2^m$.

  The function $\Ext$ is a \emph{classical-proof $(k,\eps)$-strong
    extractor with uniform seed} if the same holds with the system $E$
  restricted to classical states.
\end{deff}

It turns out that if the system $E$ is restricted to
classical information about $X$, then this definition is essentially
equivalent to the conventional \defref{def:extractor}.

\begin{lem}[\protect{\cite[Section 2.5]{KR07},\cite[Proposition 1]{KT08}}]\label{lem:KT08}
  Any $(k,\eps)$-strong extractor is a classical-proof $(k+\log
  1/\eps,2\eps)$-strong extractor.
\end{lem}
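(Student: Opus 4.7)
The plan is to reduce the classical-side-information case to the unconditional extractor property by averaging over the classical values of $E$, at the cost of a standard min-entropy splitting argument that converts $\Hmin{X|E} \geq k + \log 1/\eps$ into a statement of the form ``for all but an $\eps$-probability of values $e$, the conditional distribution $X|E=e$ has min-entropy at least $k$.''

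\textbf{Step 1: Decompose the joint state.} Since $E$ is classical, write $\rho_{XE} = \sum_e p_e\, \rho_X^e \tensor \proj{e}$, where $\rho_X^e = \sum_x p_{x|e}\proj{x}$ encodes the conditional distribution of $X$ given $E=e$. Let $k_e \coloneqq \Hmin[\rho^e]{X} = -\log \max_x p_{x|e}$.

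\textbf{Step 2: A min-entropy splitting lemma.} By the operational interpretation~\eqref{eq:hmin=pguess} applied to classical $E$, the hypothesis $\Hmin[\rho]{X|E} \geq k + \log 1/\eps$ is equivalent to
\[
\pguess[\rho]{X|E} \;=\; \sum_e p_e \, 2^{-k_e} \;\leq\; \eps \cdot 2^{-k}.
\]
Call $e$ \emph{bad} if $k_e < k$, i.e. $2^{-k_e} > 2^{-k}$. Then
\[
\Pr[E \text{ bad}] \cdot 2^{-k} \;<\; \sum_{e \text{ bad}} p_e\, 2^{-k_e} \;\leq\; \eps\cdot 2^{-k},
\]
so $\Pr[E \text{ bad}] < \eps$.

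\textbf{Step 3: Apply the classical extractor bound conditionally and average.} For every good $e$, the conditional distribution $\rho_X^e$ has $\Hmin{X} \geq k$, so the $(k,\eps)$-strong extractor property (\defref{def:extractor}) yields
\[
\tfrac{1}{2}\trnorm{\rho_{\Ext(X,Y)Y \mid E=e} - \rho_{U_m}\tensor \rho_Y} \leq \eps.
\]
For bad $e$ we use the trivial bound of $1$ on the trace distance. Since $E$ is classical, the overall trace distance splits as
\[
\tfrac{1}{2}\trnorm{\rho_{\Ext(X,Y)YE} - \rho_{U_m}\tensor \rho_Y \tensor \rho_E}
= \sum_e p_e \cdot \tfrac{1}{2}\trnorm{\rho_{\Ext(X,Y)Y \mid E=e} - \rho_{U_m}\tensor \rho_Y},
\]
which is at most $\Pr[E \text{ good}]\cdot \eps + \Pr[E \text{ bad}]\cdot 1 \leq \eps + \eps = 2\eps$, as required.

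The only mildly subtle point is Step 2, which is a short Markov-type calculation relating the average guessing probability to the tail of the pointwise conditional min-entropies; once this is in hand, the proof is essentially a case split combined with the given classical extractor guarantee. I do not expect any real obstacle, since no quantum features are needed and no new extractor machinery is invoked.
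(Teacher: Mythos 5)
Your proof is correct. The paper does not prove \lemref{lem:KT08} itself --- it is imported from \cite{KR07,KT08} --- and your argument (decompose over the classical values $e$, use the identity $2^{-\Hmin{X|E}}=\sum_e p_e 2^{-\Hmin{X|E=e}}$ with a Markov-type tail bound to show the bad set has probability less than $\eps$, then apply the unconditional extractor guarantee on the good set and average) is exactly the standard proof given in those references, with all the minor points (block-diagonality of the trace norm over classical $E$, independence of $Y$ from $E$) handled correctly.
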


However, if the system $E$ is quantum, this does not necessarily
hold. Gavinsky et al.~\cite{GKKRW07} give an example of a
$(k,\eps)$-strong extractor, which breaks down in the presence of
quantum side information, even when $\Hmin{X|E}$ is significantly
larger than $k$.

\begin{rem} \label{rem:weak-seed} In this section we defined
  extractors with a uniform seed, as this is the most common way of
  defining them. Instead one could use a seed which is only weakly
  random, but require it to have a min-entropy larger than a given
  threshold, $\Hmin{Y} \geq s$. The seed must still be independent
  from the input and the side information. Since having access to a
  uniform seed is often an unrealistic assumption, it is much more
  useful for practical applications to define and prove the soundness
  of extractors with a weakly random seed. We redefine extractors
  formally this way in \appendixref{subsec:weak-seed}, and show in
  \secref{subsubsec:security.non-uniform} that Trevisan's extractor is
  still quantum-proof in this setting.

  All the considerations of this section, in particular
  \lemref{lem:KT08} and the gap between classical and quantum
  side-information, also apply if the seed is only weakly random. In
  the following, when we talk about a strong extractors without
  specifying the nature of the seed, we are referring to both uniform
  seeded and weakly random seeded extractors.
\end{rem}

\subsection{Extracting more randomness}
\label{subsec:more-randomness}

Radhakrishnan and Ta-Shma~\cite{RT00} have shown that a
$(k,\eps)$-strong extractor $\Ext: \{0,1\}^n \times \{0,1\}^d \to
\{0,1\}^m$ will necessarily have \begin{equation} \label{eq:RT00} m
  \leq k - 2 \log 1/\eps + O(1).\end{equation} However, in some
situations we can extract much more randomness than the
min-entropy. For example, let $X$ be distributed on $\{0,1\}^n$ with
$\Pr[X = x_0] = 1/n$ and for all $x \neq x_0$, $\Pr[X = x] =
\frac{n-1}{n(2^n - 1)}$. We have $\Hmin{X} = \log n$, so using a
$(\log n, 1/n)$-strong extractor we could obtain at most $\log n$ bits
of randomness. But $X$ is already $1/n$-close to uniform, since
$\frac{1}{2} \trnorm{\rho_X - \rho_{U_n}} \leq \frac{1}{n}$. So we
already have $n$ bits of nearly uniform randomness, exponentially more
than the min-entropy suggests.

In the case of quantum extractors, similar examples can be found,
e.g., in \cite[Remark~22]{TCR09}. However, an upper bound on the
extractable randomness can be obtained by replacing the min-entropy by
the \emph{smooth} min-entropy (\defref{def:smooth-min-entropy}). More
precisely, the total number of $\eps$-uniform bits that can be
extracted in the presence of side information $E$ can never exceed
$\HminSmooth{\eps}{X|E}$~\cite[Section 5.6]{Ren05}.

Conversely, the next lemma implies that an extractor which is known to
extract $m$ bits from any source such that $\Hmin{X|E}\geq k$ can in
fact extract the same number of bits, albeit with a slightly larger
error, from sources which only satisfy $\HminSmooth{\eps'}{X|E}\geq
k$, a much weaker requirement in some cases.

\begin{lem}
  \label{lem:smooth-min-entropy}
  If $\Ext: \{0,1\}^n \times \{0,1\}^d \to \{0,1\}^m$ is a
  quantum-proof $(k,\eps)$-strong extractor, then for any state
  $\rho_{XE}$ and any $\eps' > 0$ with $\HminSmooth[\rho]{\eps'}{X|E}
  \geq k$, \[\frac{1}{2} \trnorm{\rho_{\Ext(X,Y)YE} - \rho_{U_m} \tensor
    \rho_Y \tensor \rho_E} \leq \eps + 2\eps'.\]
\end{lem}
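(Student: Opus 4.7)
The plan is a direct triangle-inequality reduction to the quantum-proof extractor property, applied to a state $\tilde\rho_{XE}$ achieving the supremum defining the smooth min-entropy.

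First, by \defref{def:smooth-min-entropy}, the hypothesis $\HminSmooth[\rho]{\eps'}{X|E} \geq k$ yields a sub-normalized $\tilde\rho_{XE} \in \cB^{\eps'}(\rho_{XE})$ with $\Hmin[\tilde\rho]{X|E} \geq k$. A short pinching argument---dephasing in the computational basis of $X$ leaves $\rho_{XE}$ invariant because it is already classical on $X$, cannot increase $P(\rho_{XE},\cdot)$ by monotonicity of the purified distance under CPTP maps, and cannot decrease $\Hmin{\cdot|E}$ because the operator inequality of \defref{def:min-entropy} is preserved under the dephasing channel---lets me assume that $\tilde\rho_{XE}$ is also classical on $X$. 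The footnote of \defref{def:smooth-min-entropy} then gives $\tfrac{1}{2}\trnorm{\rho_{XE}-\tilde\rho_{XE}} \leq P(\rho_{XE},\tilde\rho_{XE}) \leq \eps'$.

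Next, since $\tilde\rho_{XE}$ is classical on $X$ with $\Hmin[\tilde\rho]{X|E}\geq k$, the quantum-proof $(k,\eps)$-extractor property of \defref{def:extractorwithadversary} applied to $\tilde\rho_{XE}$ (with the independent uniform seed $Y$) gives
\[
  \tfrac{1}{2}\trnorm{\tilde\rho_{\Ext(X,Y)YE} - \rho_{U_m}\otimes\rho_Y\otimes\tilde\rho_E} \le \eps.
\]
I then apply a triangle inequality in three pieces:
\begin{align*}
  & \tfrac{1}{2}\trnorm{\rho_{\Ext(X,Y)YE} - \rho_{U_m}\otimes\rho_Y\otimes\rho_E} \\
  & \le \tfrac{1}{2}\trnorm{\rho_{\Ext(X,Y)YE} - \tilde\rho_{\Ext(X,Y)YE}} \\
  & \quad + \tfrac{1}{2}\trnorm{\tilde\rho_{\Ext(X,Y)YE} - \rho_{U_m}\otimes\rho_Y\otimes\tilde\rho_E} \\
  & \quad + \tfrac{1}{2}\trnorm{\rho_{U_m}\otimes\rho_Y\otimes(\tilde\rho_E - \rho_E)}.
\end{align*}
The middle term is $\le\eps$ by the extractor bound above. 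The first and third are each $\le\eps'$: the first by monotonicity of the trace norm under the CPTP map that samples $Y$ uniformly and applies $\Ext$ (noting that $Y$ is independent of $X,E$ in both $\rho$ and $\tilde\rho$), and the third by monotonicity under partial trace over $XY$ together with the fact that tensoring with a normalized state preserves the trace norm. Summing yields the claimed $\eps + 2\eps'$.

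The only real obstacle is sub-normalization: $\tilde\rho_{XE}$ may have $\tr\tilde\rho < 1$, whereas \defref{def:extractorwithadversary} is written for normalized inputs. This is a cosmetic issue, since inequality \eqnref{eq:extractorwithadversary} extends verbatim to sub-normalized inputs (both sides scale linearly in $\tr\tilde\rho$), and in any case the $2\eps'$ slack in the conclusion comfortably absorbs any residual discrepancy, since $\tr\tilde\rho \ge 1 - 2\eps'$ follows from the purified-distance hypothesis of step one.
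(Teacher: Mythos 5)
Your proof is correct and follows essentially the same route as the paper's: smooth to a nearby state $\tilde\rho_{XE}$ with $\Hmin[\tilde\rho]{X|E}\geq k$, apply the extractor guarantee to it, and combine a three-term triangle inequality with monotonicity of the trace distance under trace-preserving maps and the fact that the purified distance upper-bounds the trace distance. Your additional care about the classicality of the smoothed state and about sub-normalization addresses technicalities that the paper's proof leaves implicit.
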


\begin{proof}
  Let $\tilde{\rho}_{XE}$ be the state $\eps'$-close to $\rho_{XE}$
  for which $\Hmin[\tilde{\rho}]{X|E}$ reaches its maximum. Then
  \begin{align*}
    & \frac{1}{2} \trnorm{\rho_{\Ext(X,Y)YE} - \rho_{U_m} \tensor \rho_Y \tensor \rho_E} \\
    & \qquad  \leq \frac{1}{2} \trnorm{\rho_{\Ext(X,Y)YE} -
      \tilde{\rho}_{\Ext(X,Y)YE}} + \frac{1}{2}
    \trnorm{\tilde{\rho}_{\Ext(X,Y)YE} - \rho_{U_m} \tensor \rho_Y \tensor
      \tilde{\rho}_E} \\ & \qquad  \qquad + \frac{1}{2} \trnorm{
      \rho_{U_m} \tensor \rho_Y \tensor \tilde{\rho}_E - \rho_{U_m}
      \tensor \rho_Y \tensor \rho_E} \\
    & \qquad  \leq \frac{1}{2} \trnorm{\tilde{\rho}_{\Ext(X,Y)YE} - \rho_{U_m} \tensor
      \rho_Y \tensor \tilde{\rho}_E} + \trnorm{\rho_{XE} -
      \tilde{\rho}_{XE}} \\
    & \qquad  \leq \eps + 2 \eps'.  \end{align*}
  In the second inequality above we used twice the fact that a
  trace-preserving quantum operation can only decrease the trace
  distance. And in the last line we used the fact that the purified
  distance --- used in the
  smooth min-entropy definition (\defref{def:smooth-min-entropy}) ---
  upper bounds the trace distance.  \end{proof}

% \begin{lem}
%   \label{lem:smooth-min-entropy}
%   If $\Ext: \{0,1\}^n \times \{0,1\}^d \to \{0,1\}^m$ is a
%   quantum-proof $(k,\eps)$-strong extractor, then for any state
%   $\rho_{XE}$ and any $\eps' > 0$ with $\HminSmooth[\rho]{\eps'}{X|E} \geq
%   k$, \[\frac{1}{2} \trnorm{ \rho_{\Ext(X,Y)YE} - \rho_{U_m} \tensor
%     \rho_Y \tensor \rho_E} \leq \eps + 2 \eps'.\]
% \end{lem}

% \begin{proof}
%   Let $\tilde{\rho}_{XE}$ be the state $\eps'$-close to $\rho_{XE}$
%   for which $\Hmin[\tilde{\rho}]{X|E}$ reaches its maximum. Then
%   \begin{align*}
%     & \frac{1}{2} \trnorm{\rho_{\Ext(X,Y)YE} - \rho_{U_m} \tensor \rho_Y \tensor \rho_E} \\
%     & \qquad  \leq \frac{1}{2} \trnorm{\rho_{\Ext(X,Y)YE} -
%       \tilde{\rho}_{\Ext(X,Y)YE}} + \frac{1}{2}
%     \trnorm{\tilde{\rho}_{\Ext(X,Y)YE} - \rho_{U_m} \tensor \rho_Y \tensor
%       \tilde{\rho}_E} \\ & \qquad  \qquad + \frac{1}{2} \trnorm{
%       \rho_{U_m} \tensor \rho_Y \tensor \tilde{\rho}_E - \rho_{U_m}
%       \tensor \rho_Y \tensor \rho_E} \\
%     & \qquad  \leq \frac{1}{2} \trnorm{\tilde{\rho}_{\Ext(X,Y)YE} - \rho_{U_m} \tensor
%       \rho_Y \tensor \tilde{\rho}_E} + \trnorm{\rho_{XE} -
%       \tilde{\rho}_{XE}} \\
%     & \qquad  \leq \eps + 2 \eps'.
%   \end{align*}
%   In the second inequality above we used (twice) the fact that a
%   trace-preserving quantum operation can only decrease the trace
%   distance. And in the last line we used the fact that the purified
%   distance --- used in the smooth min-entropy definition
%   (\defref{def:smooth-min-entropy}) --- upper bounds the trace
%   distance.  \end{proof}

\begin{rem}
\label{rem:entropy-loss}
Since a $(k,\eps)$-strong extractor can be applied to any source with
smooth min-entropy $\HminSmooth{\eps'}{X|E} \geq k$, we can measure
the entropy loss of the extractor --- namely how much entropy was not
extracted --- with \[ \Delta \coloneqq k - m, \] where $m$ is the size
of the output. From \eqnref{eq:RT00} we know that an extractor has
optimal entropy loss if $\Delta = 2 \log 1/\eps + O(1)$.\end{rem}

\section{Constructing \emph{m}-bit extractors from one-bit extractors
  and weak designs}
\label{sec:main}

In this section we prove our main result: we show that Trevisan's
extractor paradigm~\cite{Tre01} --- which shows how to construct an
$m$-bit extractor from any (classical) $1$-bit strong
extractor --- is sound in the presence of quantum side information.

This construction paradigm can be seen as a derandomization of the
simple concatenation of the outputs of a $1$-bit extractor applied $m$
times to the same input with different (independent) seeds. The
construction with independent seeds needs a total seed of length
$d=mt$, where $t$ is the length of the seed of the $1$-bit
extractor. Trevisan~\cite{Tre01} shows how to do this using only $d =
\poly{t, \log m}$ bits of seed, and proves it is sound when no side
information is present.\footnote{Trevisan's original paper does not
  explicitly define his extractor as a pseudo-random concatenation of
  a $1$-bit extractor. It has however been noted in, e.g.,
  \cite{Lu04,Vad04}, that this is basically what Trevisan's extractor
  does.} We combine a combinatorial construction called weak designs
by Raz et al.~\cite{RRV02}, which they use to improve Trevisan's
extractor, and a previous observation by two of the
authors~\cite{DV10}, that since $1$-bit extractors were shown to be
quantum-proof by by K\"onig and Terhal~\cite{KT08}, Trevisan's
extractor is also quantum-proof.

This results in a generic scheme, which can be based on any weak design and
$1$-bit strong extractor. We define it in \secref{subsec:genericscheme}, then
prove bounds on the min-entropy and error in \secref{subsec:security}.

\subsection{Description of Trevisan's construction}
\label{subsec:genericscheme}

In order to shorten the seed while still outputting $m$ bits, in 
Trevisan's extractor construction paradigm the seed is treated
as a string of length $d <mt$, which is then split in $m$
overlapping blocks of $t$ bits, each of which is used as a (different)
seed for the $1$-bit extractor. Let $y \in \{0,1\}^d$ be the total
seed. To specify the seeds for each application of the $1$-bit
extractor we need $m$ sets $S_1,\cdots,S_m \subset [d]$ of size $|S_i|
= t$ for all $i$. The seeds for the different runs of the $1$-bit
extractor are then given by $y_{S_i}$, namely the bits of $y$ at the
positions specified by the elements of $S_i$.

The seeds for the different outputs of the $1$-bit extractor must
however be nearly independent. To achieve this, Nisan and
Wigderson~\cite{NW94} proposed to minimize the overlap $|S_i \cap
S_j|$ between the sets, and Trevisan used this idea in his original
work~\cite{Tre01}. Raz et al.~\cite{RRV02} improved this, showing that
it is sufficient for these sets to meet the conditions of a \emph{weak
  design}.\footnote{The second condition of the weak design was
  originally defined as $\sum_{j = 1}^{i-1} 2^{|S_j \cap S_i|} \leq
  r(m-1)$. We prefer to use the version of \cite{HR03}, since it
  simplifies the notation without changing the design constructions.}

\begin{deff}[weak design~\protect{\cite{RRV02}}]
  \label{def:weakdesign} A family of sets $S_1,\dotsc,S_m \subset [d]$
  is a \emph{weak $(t,r)$-design} if
  \begin{enumerate}
    \item For all $i$, $|S_i| = t$.
    \item For all $i$,  $\sum_{j = 1}^{i-1} 2^{|S_j \cap S_i|} \leq
      rm$.
  \end{enumerate}  
\end{deff}

We can now describe Trevisan's generic extractor construction.

\begin{deff}[Trevisan's extractor~\cite{Tre01}] \label{def:genericscheme} For a
  one-bit extractor $C : \{0,1\}^n \times \{0,1\}^t \to \{0,1\}$,
  which uses a (not necessarily uniform) seed of length $t$, and for a
  weak $(t,r)$-design $S_1,\dotsc,S_m \subset [d]$, we define the
  $m$-bit extractor $\Ext_C : \{0,1\}^n \times \{0,1\}^d \to
  \{0,1\}^m$ as \[\Ext_C(x,y) \coloneqq C(x,y_{S_1}) \dotsb
  C(x,y_{S_m}).\] % $y_{S_i}$ is the string composed of the bits of $y$
%   at the positions specified by the elements in $S_i$.
\end{deff}

\begin{rem}
  \label{rem:seedsize}
  The length of the seed of the extractor $\Ext_C$ is $d$, one of the
  parameters of the weak design, which in turn depends on $t$, the
  size of the seed of the $1$-bit extractor $C$. In
  \secref{sec:plugging} we will give concrete instantiations of weak
  designs and $1$-bit extractors, achieving various entropy losses and
  seed sizes. The size of the seed will always be $d = \poly{\log n}$,
  if the error is $\eps = \poly{1/n}$. For example, to achieve a near
  optimal entropy loss (\secref{subsec:optimalentropyloss}), we need
  $d = O(t^2 \log m)$ and $t = O(\log n)$, hence $d = O(\log^3 n)$.
\end{rem}

\subsection{Analysis}
\label{subsec:security}

We now prove that the extractor defined in the previous section is a
quantum-proof strong extractor. The first step follows the structure
of the classical proof~\cite{Tre01,RRV02}. We show that a player
holding the side information and who can distinguish the output of the
extractor $\Ext_C$ from uniform can --- given a little extra
information --- distinguish the output of the underlying $1$-bit
extractor $C$ from uniform. This is summed up in the following
proposition:

\begin{prop} \label{prop:guessing-cx} Let $X$ be a classical random
  variable correlated to some quantum system $E$, let $Y$ be a
  (not necessarily uniform) seed, independent from $XE$, and let
  \begin{equation} \label{eq:guessing-cx.prop.1} 
    \trnorm{\rho_{\Ext_C(X,Y)YE}
      - \rho_{U_m} \tensor \rho_Y \tensor \rho_E} > \eps,
  \end{equation} where $\Ext_C$ is the extractor from
  \defref{def:genericscheme}. Then there exists a fixed partition of the
  seed $Y$ in two substrings $V$ and $W$, and a classical random
  variable $G$, such that $G$ has size $H_0(G) \leq rm$,
  where $r$ is one of the parameters of the weak design
  (\defref{def:weakdesign}), $V \leftrightarrow W
  \leftrightarrow G$ form a Markov chain,\footnote{Three random
    variables are said to form a Markov chain  $X \leftrightarrow Y
    \leftrightarrow Z$ if for all $x,y,z$ we have $P_{Z|YX}(z|y,x) =
    P_{Z|Y}(z|y)$, or equivalently $P_{ZX|Y}(z,x|y) =
    P_{Z|Y}(z|y)  P_{X|Y}(x|y)$.} and
  \begin{equation} \label{eq:guessing-cx.prop.2}
     \trnorm{\rho_{C(X,V)VWGE} - \rho_{U_1} \tensor
         \rho_{VWGE}} > \frac{\eps}{m}.
   \end{equation}
\end{prop}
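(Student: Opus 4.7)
The plan is to perform a quantum-aware version of the Raz--Reingold--Vadhan hybrid argument. The key observation is that if $\Ext_C(X,Y) = C(X,Y_{S_1}) \dotsb C(X,Y_{S_m})$ is $\eps$-distinguishable from uniform given $YE$, then one of the coordinate bits must already be $(\eps/m)$-distinguishable from uniform \emph{conditioned on its predecessors and $YE$}; the weak-design property of $S_1,\dotsc,S_m$ then allows me to compress those predecessors into a short piece of classical advice $G$, turning the conditional bias into a bias relative to the one-bit extractor seed $V = Y_{S_{i^*}}$ and the remaining seed bits $W$.

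For the hybrid step, I define states $\rho^{(i)} := \rho_{B_1 \dotsb B_i\, U_{i+1} \dotsb U_m\, YE}$ for $i = 0, 1, \dotsc, m$, where $B_j := C(X,Y_{S_j})$ and $U_{i+1}, \dotsc, U_m$ are fresh independent uniform bits. Since $\rho^{(0)} = \rho_{U_m} \tensor \rho_Y \tensor \rho_E$ and $\rho^{(m)} = \rho_{\Ext_C(X,Y)YE}$, the triangle inequality combined with \eqnref{eq:guessing-cx.prop.1} produces a fixed index $i^* \in [m]$ with $\trnorm{\rho^{(i^*)} - \rho^{(i^*-1)}} > \eps/m$. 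These two hybrids agree on every register except the $i^*$-th output bit (real versus fresh uniform), and tracing out the trailing uniform bits $U_{i^*+1}, \dotsc, U_m$ (independent of and identical across both hybrids) preserves the trace norm, giving
\[\trnorm{\rho_{B_1 \dotsb B_{i^*-1}\, C(X,V)\, Y E} - \rho_{B_1 \dotsb B_{i^*-1}\, U_1\, Y E}} > \eps/m,\]
where I set $V := Y_{S_{i^*}}$ and let $W$ denote the bits of $Y$ indexed outside $S_{i^*}$; this fixes the claimed partition.

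The remaining step is the weak-design compression. For each $j < i^*$, the bits of $Y$ on which $B_j$ depends split as $S_j \cap S_{i^*} \subset V$ and $S_j \setminus S_{i^*} \subset W$, so once $x$ and $w$ are fixed the map $v \mapsto C(x, y_{S_j})$ is a function of at most $|S_j \cap S_{i^*}|$ bits and its truth table has size $2^{|S_j \cap S_{i^*}|}$. Let $G_j$ be that truth table and $G := (G_1, \dotsc, G_{i^*-1})$; by \defref{def:weakdesign},
\[H_0(G) \leq \sum_{j=1}^{i^*-1} 2^{|S_j \cap S_{i^*}|} \leq rm.\]
The deterministic map $(V,W,G) \mapsto (B_1, \dotsc, B_{i^*-1}, V, W)$ is a classical (hence CPTP) channel that sends $\rho_{C(X,V)VWGE}$ to $\rho_{B_1 \dotsb B_{i^*-1} C(X,V)YE}$ and $\rho_{U_1} \tensor \rho_{VWGE}$ to $\rho_{B_1 \dotsb B_{i^*-1} U_1 YE}$, so monotonicity of the trace norm upgrades the previous bound to \eqnref{eq:guessing-cx.prop.2}. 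For the Markov condition, $G$ is a deterministic function of $X$ and $W$, and since $Y = (V,W)$ is independent of $XE$, conditioning on $W$ leaves $G$ a function of $X$ alone, hence independent of $V$.

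The only genuine subtlety beyond classical bookkeeping is ensuring that the quantum register $E$ remains passive throughout: because $E$ is never acted on and $G$ is produced from purely classical data $(X,W)$, all the steps above are classical channels on the ancillary registers, and the trace-norm data-processing inequality applies verbatim. The weak-design bound is the only nontrivial ingredient, and it slots in exactly at the truth-table encoding stage.
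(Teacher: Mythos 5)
Your proposal is correct and follows essentially the same route as the paper's proof: the same hybrid argument over the output bits (the paper isolates it as a quantum version of Yao's lemma, \lemref{lem:Yao's-lem}), the same truth-table encoding of the maps $v \mapsto C(x,y_{S_j})$ bounded via the weak-design condition, the same data-processing step replacing the evaluated bits by the full advice $G$, and the same observation that $G$ is a function of $(X,W)$ to obtain the Markov chain. The only difference is presentational --- you run the hybrid inline and phrase the final step explicitly as monotonicity under a classical channel, which is exactly the justification the paper implicitly uses.
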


We provide a proof of \propref{prop:guessing-cx}
in \appendixref{sec:security.reduction}, where it is restated as
\propref{prop:guessing-cx-bis}.\footnote{Note that Ta-Shma~\cite{Ta09} has
  already implicitly proved that this proposition must hold in the
  presence of quantum side information, by arguing that the side information
  can be viewed as an oracle. The present statement is a strict
  generalization of that reasoning, which allows conditional
  min-entropy as well as non-uniform seeds to be used.}

For readers familiar with Trevisan's scheme~\cite{Tre01,RRV02}, we
briefly sketch the correspondence between the variables of
\propref{prop:guessing-cx} and quantities analyzed in Trevisan's
construction. Trevisan's proof proceeds by assuming by contradiction
that there exists a player, holding $E$, who can distinguish
between the output of the extractor and the uniform distribution
(\eqnref{eq:guessing-cx.prop.1}). Part of the seed is then fixed (this
corresponds to $W$ in the above statement) and some classical advice
is taken (this corresponds to $G$ in the above statement) to construct
another player who can distinguish a specific bit of the output
from uniform. But since a specific bit of Trevisan's extractor is just
the underlying $1$-bit extractor applied to a substring of the seed
($V$ in the above statement), this new player (who holds $WGE$) can
distinguish the output of the $1$-bit extractor from uniform
(\eqnref{eq:guessing-cx.prop.2}).

In the classical case \propref{prop:guessing-cx} would be sufficient
to prove the soundness of Trevisan's scheme, since it shows that if
a player can distinguish $\Ext_C$ from uniform, then he can
distinguish $C$ from uniform given a few extra advice bits, which
contradicts the assumption that $C$ is an extractor.\footnote{In the
  classical case, \cite{Tre01,RRV02} still show that a player who
  can distinguish $C(X,V)$ from uniform can reconstruct $X$ with high
  probability. But this is nothing else than proving that $C$ is an
  extractor.}  But since our assumption is that the underlying $1$-bit
extractor is only classical-proof, we still need to show that the
quantum player who can distinguish $C(X,V)$ from uniform is not
more powerful than a classical player, and so if he can distinguish
the output of $C$ from uniform, so can a classical player. This has
already been done by K\"onig and Terhal~\cite{KT08}, who show that
$1$-bit extractors are quantum-proof.

\begin{thm}[\protect{\cite[Theorem III.1]{KT08}}]
  \label{thm:1-bit-against-Q}
  Let $C : \{0,1\}^n \times \{0,1\}^t \to \{0,1\}$ be a
  $(k,\eps)$-strong extractor. Then $C$ is a quantum-proof $(k + \log
  1/\eps,3\sqrt{\eps})$-strong extractor.\footnote{This result holds
    whether the seed is uniform or not.}
\end{thm}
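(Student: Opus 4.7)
I would prove the theorem by contrapositive, reducing to the classical-proof guarantee of Lemma~\ref{lem:KT08}. Suppose $\rho_{XE}$ is a state with classical $X$, $\Hmin[\rho]{X|E} \geq k + \log 1/\eps$, yet the quantum distance
\[ \delta := \frac{1}{2} \trnorm{\rho_{C(X,Y)YE} - \rho_{U_1}\otimes\rho_Y\otimes\rho_E} > 3\sqrt{\eps}. \]
The aim is to extract from $E$ a classical side-information $E'$ for which $\Hmin{X|E'}$ is still at least $k + \log 1/\eps$ but the classical distance against $E'$ exceeds the $2\eps$ threshold guaranteed by Lemma~\ref{lem:KT08}.

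Since $Y$ is classical and independent of $(X,E)$, the trace distance decomposes as $\delta = \sum_y P_Y(y) \delta_y$, where $\delta_y := \frac{1}{2}\trnorm{\rho_{C(X,y)E} - \rho_{U_1}\otimes\rho_E}$. Because $C(X,y)$ is a single bit, an elementary calculation shows $\delta_y = \frac{1}{2}\trnorm{\tilde\rho_E^{0,y} - \tilde\rho_E^{1,y}}$, with $\tilde\rho_E^{z,y}$ the sub-normalised state of $E$ conditional on $C(X,y) = z$. This distance is achieved operationally by the binary Helstrom POVM $\{\Lambda_y^0, \Lambda_y^1\}$ on $E$, whose outcome $\hat Z$ correctly predicts $C(X,y)$ with probability $\frac{1}{2} + \delta_y$.

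I would next fold these $y$-dependent Helstrom measurements into a single $Y$-independent quantum-to-classical channel on $E$: draw a fresh uniform index $Y'$ (independent of $X$, $E$, and the extractor seed $Y$), apply $\Lambda_{Y'}$ to $E$, and record the classical pair $E' := (Y', \hat Z)$. Data-processing for the min-entropy --- any guess for $X$ from the classical outcome $E'$ can be simulated by a POVM on $E$, so the optimal guessing probability can only increase when $E'$ is replaced by $E$ --- gives $\Hmin{X|E'} \geq \Hmin{X|E} \geq k + \log 1/\eps$. Lemma~\ref{lem:KT08} applied to the classical random variable $E'$ and uniform seed $Y$ then yields
\[ \frac{1}{2} \trnorm{\rho_{C(X,Y)YE'} - \rho_{U_1}\otimes\rho_Y\otimes\rho_{E'}} \leq 2\eps. \]

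The remaining and hardest step is the bridge from this classical bound back to the original quantum distance $\delta$. The difficulty is intrinsic: the Helstrom measurement attaining $\delta_y$ depends on the seed $y$, while any admissible $E'$ must be independent of the extractor seed, so a single measurement on $E$ can only match the Helstrom advantage on a limited portion of the seed space. The factor $3\sqrt{\eps}$ in the theorem statement (rather than $O(\eps)$) is precisely the slack needed to absorb this mismatch: a careful choice of the auxiliary measurement, combined with a Cauchy--Schwarz-type estimate relating the classical distance against $E'$ to a quantity controlled by $\sum_y P_Y(y)\delta_y^2$ and hence by $(\sum_y P_Y(y) \delta_y)^2 = \delta^2$, produces a contradiction $\delta \leq O(\sqrt{\eps})$ with the hypothesis $\delta > 3\sqrt{\eps}$. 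Carrying out this final estimate with the correct constants is the technical heart of the argument and reproduces the bound stated by König and Terhal.
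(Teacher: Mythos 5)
First, a remark on scope: the paper does not prove \thmref{thm:1-bit-against-Q} at all --- it is imported verbatim from K\"onig and Terhal \cite[Theorem III.1]{KT08} and used as a black box. So your proposal is not an alternative to an in-paper argument; it is an attempt to reprove the cited result, and it has to be judged on its own.

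The first half of your argument is fine: the decomposition $\delta = \sum_y P_Y(y)\delta_y$, the identity $\delta_y = \frac{1}{2}\trnorm{\tilde\rho_E^{0,y}-\tilde\rho_E^{1,y}}$ with its Helstrom interpretation, and the data-processing bound $\Hmin{X|E'}\geq\Hmin{X|E}$ for any classical $E'$ obtained by measuring $E$ with independent auxiliary randomness. The gap is in the bridge, and your chosen $E'=(Y',\hat Z)$ cannot close it. The outcome $\hat Z$ of the Helstrom measurement for seed $Y'$ is guaranteed to correlate with $C(X,y)$ only on the event $\{Y'=y\}$, which has probability $2^{-t}$; for $Y'\neq y$ the bit $\hat Z$ predicts $C(X,Y')$, which may be uncorrelated with (or anti-correlated with) $C(X,y)$ under the posterior. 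The best general lower bound is therefore $\frac{1}{2}\trnorm{\rho_{C(X,Y)YE'}-\rho_{U_1}\tensor\rho_Y\tensor\rho_{E'}}\geq 2^{-t}\sum_y P_Y(y)\delta_y = 2^{-t}\delta$, and combining this with \lemref{lem:KT08} yields only $\delta\leq 2^{t+1}\eps$, which is vacuous for any nontrivial seed length. The Cauchy--Schwarz step you gesture at would indeed finish the proof \emph{if} you had an admissible classical $E''$ (independent of the extractor seed) satisfying $\frac{1}{2}\trnorm{\rho_{C(X,Y)YE''}-\rho_{U_1}\tensor\rho_Y\tensor\rho_{E''}}\geq\sum_y P_Y(y)\delta_y^2\geq\delta^2$, but your construction does not have this property and no such $E''$ is produced. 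This is precisely where the difficulty you correctly identify --- the seed-dependence, hence mutual incompatibility, of the optimal measurements $\{\Lambda_y\}$ --- actually bites; it is the technical core of K\"onig and Terhal's proof and cannot be circumvented by merely randomizing over which Helstrom measurement to perform. As written, the proposal establishes the (known) reduction scaffolding but not the theorem.
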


We now need to put \propref{prop:guessing-cx} and
\thmref{thm:1-bit-against-Q} together to prove that Trevisan's
extractor is quantum-proof. The cases of uniform and weak random seeds
differ somewhat in the details. We therefore give two separate proofs
for these two cases in \secref{subsubsec:security.uniform} and
\secref{subsubsec:security.non-uniform}.

% The cases of uniform and weak random seeds
% are very similar, however the parameters in the latter case are
% slightly worse, making this result inapplicable to the former. The
% proofs of the two cases need therefore to be treated separately; so we
% subdivide this section. \secref{subsubsec:security.uniform} treats the
% case of a uniform seed and \secref{subsubsec:security.non-uniform} the
% non-uniform seed.

\subsubsection{Uniform seed}
\label{subsubsec:security.uniform}

We show that Trevisan's extractor is a quantum-proof strong extractor with uniform seed
with the following parameters.

\begin{thm}
  \label{thm:security}
  Let $C : \{0,1\}^n \times \{0,1\}^t \to \{0,1\}$ be a
  $(k,\eps)$-strong extractor with uniform seed and $S_1,\dotsc,S_m
  \subset [d]$ a weak $(t,r)$-design. Then the extractor given in
  \defref{def:genericscheme}, $\Ext_C : \{0,1\}^n \times \{0,1\}^d \to
  \{0,1\}^m$, is a quantum-proof $(k + rm + \log 1/\eps,
  3m\sqrt{\eps})$-strong extractor.
\end{thm}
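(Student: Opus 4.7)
The plan is to prove the theorem by contradiction, combining the three main ingredients already assembled in the paper: Proposition \ref{prop:guessing-cx} to reduce a distinguisher for $\Ext_C$ to a distinguisher for the underlying one-bit extractor $C$, Theorem \ref{thm:1-bit-against-Q} to translate that distinguisher into a lower bound on the guessing probability of $X$ against a quantum adversary, and a min-entropy chain rule to account for the extra classical advice used in the reduction.

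Concretely, I would start by assuming there exists a state $\rho_{XE}$ classical on $X$ with $\Hmin[\rho]{X|E} \geq k + rm + \log 1/\eps$, and a uniform seed $Y$ independent of $XE$, such that the strong extractor condition fails with parameter $3m\sqrt{\eps}$. Unfolding the factor of $\tfrac{1}{2}$ in Definition \ref{def:extractorwithadversary}, this means
\[
\trnorm{\rho_{\Ext_C(X,Y)YE} - \rho_{U_m}\tensor\rho_Y\tensor\rho_E} > 6m\sqrt{\eps}.
\]
Plugging this into Proposition \ref{prop:guessing-cx} (with its $\eps$ set to $6m\sqrt{\eps}$) yields a partition $Y=(V,W)$ and a classical random variable $G$ with $H_0(G)\leq rm$ and Markov chain $V\leftrightarrow W\leftrightarrow G$, such that
\[
\trnorm{\rho_{C(X,V)VWGE} - \rho_{U_1}\tensor\rho_{VWGE}} > 6\sqrt{\eps},
\]
i.e., $\tfrac{1}{2}\trnorm{\cdot} > 3\sqrt{\eps}$. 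One should check at this point that, because $Y$ is uniform and independent of $XE$ and because of the Markov chain, the substring $V$ remains a uniform seed independent of the combined side information $(W,G,E)$; this is the only piece of bookkeeping between the two black-box results and it follows from the definition of $G$ in the proposition together with independence of $Y$ from $(X,E)$.

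Now invoke Theorem \ref{thm:1-bit-against-Q} in its contrapositive form, applied to $C$ with side information $E':=WGE$. Since $C$ is a classical $(k,\eps)$-strong extractor, it is a quantum-proof $(k+\log 1/\eps, 3\sqrt{\eps})$-strong extractor; the failure of the one-bit uniformity bound by strictly more than $3\sqrt{\eps}$ therefore forces
\[
\Hmin[\rho]{X|WGE} < k + \log \tfrac{1}{\eps}.
\]
To conclude, I would use the chain rule $\Hmin{X|WGE} \geq \Hmin{X|WE} - H_0(G)$ for classical side information $G$ (the version stated in Appendix \ref{sec:lemmas}), together with the fact that $W$, being a deterministic function of the seed $Y$ which is independent of $XE$, satisfies $\Hmin{X|WE}=\Hmin{X|E}$. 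Chaining these gives
\[
\Hmin[\rho]{X|E} - rm \ \leq\ \Hmin[\rho]{X|WGE} \ <\ k + \log\tfrac{1}{\eps},
\]
i.e.\ $\Hmin[\rho]{X|E} < k + rm + \log 1/\eps$, contradicting our hypothesis and establishing the theorem.

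The conceptual work has really been done in Proposition \ref{prop:guessing-cx} and Theorem \ref{thm:1-bit-against-Q}, so the main obstacle here is the careful tracking of three quantitative reductions (the $\eps \mapsto \eps/m$ loss from the hybrid/design argument, the $\eps \mapsto 3\sqrt{\eps}$ loss from passing through classical- to quantum-proofness of $C$, and the $k \mapsto k + rm + \log 1/\eps$ shift coming from conditioning on the advice $G$) and verifying that the Markov structure output by the proposition is strong enough to treat $V$ as an honest uniform seed against the enlarged quantum side information $WGE$. Once those are in place, the theorem is an immediate two-step reduction.
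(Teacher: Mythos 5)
Your proposal is correct and follows essentially the same route as the paper's own proof: apply Proposition~\ref{prop:guessing-cx}, note that uniformity of $Y$ plus the Markov condition $V\leftrightarrow W\leftrightarrow G$ makes $V$ a legitimate uniform seed against the side information $WGE$, invoke Theorem~\ref{thm:1-bit-against-Q} to bound $\Hmin{X|WGE}$, and finish with Lemma~\ref{lem:min-entropy.rule.3} and $\Hmin{X|WE}=\Hmin{X|E}$. If anything, your bookkeeping of the factor of $2$ between the $\frac{1}{2}\trnorm{\cdot}$ convention of Definition~\ref{def:extractorwithadversary} and the bare trace norms in Proposition~\ref{prop:guessing-cx} is more careful than the paper's.
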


% This theorem reduces the quantum security of Trevisan's extractor to
% the classical security of the underlying $1$-bit extractor. Hence by
% plugging in any weak design and secure $1$-bit extractor, the result
% is a quantum $m$-bit extractor.\footnote{If the parameters of the
%   $1$-bit extractor and weak design are badly chosen, e.g., $k + rm +
%   \log 1/\eps \geq n$, the resulting $m$-bit extractor might not be
%   useful.}

\begin{proof}
  In \propref{prop:guessing-cx}, if the seed $Y$ is uniform, then $V$
  is independent from $W$ and hence by the Markov chain property from
  $G$ as well, so \eqnref{eq:guessing-cx.prop.2} can be rewritten as
  \[ \trnorm{\rho_{C(X,V)VWGE} - \rho_{U_1} \tensor \rho_V \tensor
    \rho_{WGE}} > \frac{\eps}{m}, \] which corresponds to the exact
  criterion of the definition of a quantum-proof extractor.

  Let $C$ be a $(k,\eps)$-strong extractor with uniform seed, and
  assume that a player holds a system $E$ such that
 \[\trnorm{\rho_{\Ext_C(X,Y)YE}
    - \rho_{U_m} \tensor \rho_Y \tensor \rho_E} > 3m\sqrt{\eps}.\] Then by
  \propref{prop:guessing-cx} and because $Y$ is uniform, we know that
  there exists a classical system $G$ with $H_0(G) \leq
  rm$, and a partition of $Y$ in $V$ and $W$, such that,
  \begin{equation} \label{eq:security.thm.1} \trnorm{\rho_{C(X,V)VWGE}
      - \rho_{U_1} \tensor \rho_V \tensor \rho_{WGE}} > 3
    \sqrt{\eps}.\end{equation}

  Since $C$ is a $(k,\eps)$-strong extractor, we know from
  \thmref{thm:1-bit-against-Q} that we must have $\Hmin{X|WGE} < k +
  \log 1/\eps$ for \eqnref{eq:security.thm.1} to hold. Hence by
  \lemref{lem:min-entropy.rule.3}, $\Hmin{X|E} = \Hmin{X|WE} \leq
  \Hmin{X|WGE} +H_0(G) < k + rm + \log 1/\eps$.  \end{proof}

\subsubsection{Weak random seed}
\label{subsubsec:security.non-uniform}

We show that Trevisan's extractor is a quantum-proof strong
extractor with weak random seed, with the following parameters.

\begin{thm}
  \label{thm:security.non-uniform}
  Let $C : \{0,1\}^n \times \{0,1\}^t \to \{0,1\}$ be a
  $(k,\eps)$-strong extractor with an $s$-bit seed --- i.e., the seed
  needs at least $s$ bits of min-entropy --- and $S_1,\dotsc,S_m
  \subset [d]$ a weak $(t,r)$-design. Then the extractor given in
  \defref{def:genericscheme}, $\Ext_C : \{0,1\}^n \times \{0,1\}^d \to
  \{0,1\}^m$, is a quantum-proof $(k + rm + \log 1/\eps,
  6m\sqrt{\eps})$-strong extractor for any seed with min-entropy
  $d-(t-s-\log \frac{1}{3\sqrt{\eps}})$.
\end{thm}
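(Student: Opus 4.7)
The proof follows the template of Theorem~\ref{thm:security}, but needs adjustments because when $Y$ is only weakly random, the substrings $V$ and $W$ obtained from Proposition~\ref{prop:guessing-cx} are no longer independent. The plan is as follows.

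First, I would assume for contradiction that $\Hmin{X|E} \geq k + rm + \log 1/\eps$ while the adversary's distinguishing advantage exceeds $6m\sqrt{\eps}$. Applying Proposition~\ref{prop:guessing-cx} to $\Ext_C$ yields a partition $Y = (V, W)$ with $|V|=t$, a classical random variable $G$ with $H_0(G) \leq rm$, a Markov chain $V \leftrightarrow W \leftrightarrow G$, and the bound
\[
\trnorm{\rho_{C(X,V)VWGE} - \rho_{U_1} \otimes \rho_{VWGE}} > 6\sqrt{\eps}.
\]
Next, I would exploit the seed's min-entropy assumption: a classical chain rule gives $\Hmin{V|W} \geq \Hmin{Y} - (d-t) \geq s + \log(1/(3\sqrt{\eps}))$, and the Markov chain implies $\Hmin{V|WG} = \Hmin{V|W}$, so the effective seed $V$ still carries $s + \log(1/(3\sqrt{\eps}))$ bits of min-entropy relative to the full classical side information $WG$.

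The extra $\log(1/(3\sqrt{\eps}))$ bits of seed entropy are then spent to convert this conditional (averaged) bound into a worst-case bound via a Markov-type smoothing argument: the set of $(w,g)$ for which $\Hmin{V|W{=}w, G{=}g} \geq s$ has probability at least $1 - 3\sqrt{\eps}$, so restricting the joint state to this set yields a sub-normalized $\tilde{\rho}$ within trace distance $3\sqrt{\eps}$ of $\rho$. On this good set, for each fixed $(w,g)$, the conditional state $\tilde{\rho}^{w,g}_{XVE}$ satisfies the hypotheses of Theorem~\ref{thm:1-bit-against-Q}: the seed $V$ has min-entropy at least $s$ and is independent of $(X,E)$, using that $Y \perp XE$ together with the way $G$ is constructed in the proof of Proposition~\ref{prop:guessing-cx} (from $W$ and $X$ only, so that the conditional independence $V \perp (X,E) \mid (W,G)$ is preserved). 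Applying the pointwise $(k+\log 1/\eps, 3\sqrt{\eps})$ quantum-proof extractor property of $C$ and averaging over $(w,g)$, together with the $3\sqrt{\eps}$ smoothing loss, I expect to conclude that $\Hmin{X|WGE} < k + \log 1/\eps$; otherwise the aggregated trace-norm bound would contradict the inequality from the proposition. Finally, Lemma~\ref{lem:min-entropy.rule.3} combined with $\Hmin{X|E} = \Hmin{X|WE}$ (since $W \perp XE$) gives $\Hmin{X|E} < k + rm + \log 1/\eps$, the desired contradiction.

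The main obstacle is exactly this smoothing/aggregation step. Establishing the conditional independence $V \perp (X,E) \mid (W,G)$ requires inspecting the construction of $G$ in Proposition~\ref{prop:guessing-cx}, and combining the per-$(w,g)$ bound on $\frac{1}{2}\trnorm{\cdot}$ with the averaged definition of the quantum conditional min-entropy $\Hmin{X|WGE}$ must be done carefully: the tight accounting of the $3\sqrt{\eps}$ coming from the seed smoothing and the $3\sqrt{\eps}$ coming from the one-bit extractor error is what produces the factor-of-two blow-up of the error (from $3m\sqrt{\eps}$ in the uniform case to $6m\sqrt{\eps}$ here), and the extra $\log(1/(3\sqrt{\eps}))$ in the seed min-entropy hypothesis is precisely what the smoothing spends.
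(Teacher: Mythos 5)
Your proposal is correct and follows essentially the same route as the paper: the paper simply packages your Markov-inequality smoothing over the classical side information $(W,G)$ about the seed into \lemref{lem:non-uniform-seed}, whose contrapositive yields exactly the disjunction you describe (either $\Hmin{V|W}$ is too small, contradicting the assumed seed min-entropy via $\Hmin{Y}\leq\Hmin{V|W}+H_0(W)$, or the conditional min-entropy of $X$ is too small, giving the bound on $\Hmin{X|E}$ after removing $G$ with the chain rule). The only cosmetic difference is that the paper keeps the second branch pointwise --- concluding $\Hmin{X|GE,W{=}w} < k+\log 1/\eps$ for \emph{some} $w$ and applying the chain rule to that conditional state --- which sidesteps the slightly delicate passage from pointwise failures to the averaged bound $\Hmin{X|WGE} < k+\log 1/\eps$ asserted near the end of your write-up.
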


The main difference between this proof and that of
\thmref{thm:security}, is that since the seed $Y$ is not uniform in
\propref{prop:guessing-cx}, the substring $W$ of the seed not used by
the $1$-bit extractor $C$ is correlated to the seed $V$ of $C$, and
acts as classical side information about the seed. To handle this, we
show in \lemref{lem:non-uniform-seed} that with probability $1-\eps$
over the values of $W$, $V$ still contains a lot of min-entropy,
roughly $s'-d'$, where $d'$ is the length of $W$ and $s'$ is the
min-entropy of $Y$. And hence a player holding $WGE$ can
distinguish the output of $C$ from uniform, even though the seed has
enough min-entropy.

\begin{proof}
  Let $C$ be a $(k,\eps)$-strong extractor with $s$ bits of
  min-entropy in the seed, and assume that a player holds a system
  $E$ such that
  \[\trnorm{\rho_{\Ext_C(X,Y)YE} - \rho_{U_m} \tensor \rho_Y \tensor
    \rho_E} > 6m\sqrt{\eps}.\] Then by
  \propref{prop:guessing-cx} we have
  \begin{equation} \label{eq:security.thm.2} \trnorm{\rho_{C(X,V)VWGE}
      - \rho_{U_1} \tensor \rho_{VWGE } } > 6 \sqrt{\eps}. \end{equation}

  Since this player has classical side-information $W$ about the seed
  $V$, we need an extra step to handle
  it. \lemref{lem:non-uniform-seed} tells us that from
  \eqnref{eq:security.thm.2} and because by
  \thmref{thm:1-bit-against-Q}, $C$ is a quantum $(k+\log
  1/\eps,3\sqrt{\eps})$-strong extractor, we must have either for some
  $w$, $\Hmin{X|GEW=w} < k + \log 1/\eps$
  and hence \begin{multline*}\Hmin{X|E} = \Hmin{X|EW=w} \\
    \leq \Hmin{X|GEW=w} +H_0(G) < k + rm + \log 1/\eps,\end{multline*}
  or $\Hmin{V|W} < s + \log \frac{1}{3\sqrt{\eps}}$, from which we
  obtain using \lemref{lem:min-entropy.rule.1}
  \[\Hmin{Y}
  \leq \Hmin{V|W} + H_0(W) < s + \log \frac{1}{3\sqrt{\eps}} + d
  -t. \qedhere \] \end{proof}

\section{Concrete constructions}
\label{sec:plugging}

Depending on what goal has been set --- e.g., maximize the output,
minimize the seed length --- different $1$-bit extractors and weak
designs will be needed. In this section we give a few examples of what
can be done, by taking various classical extractors and designs, and
plugging them into \thmref{thm:security} (or
\thmref{thm:security.non-uniform}), to obtain bounds on the seed size
and entropy loss in the presence of quantum side information.

The results are usually given using the $O$-notation. This is always
meant with respect to all the free variables, e.g., $O(1)$ is a
constant independent of the input length $n$, the output length $m$,
and the error $\eps$. Likewise, $o(1)$ goes to $0$ for both $n$ and
$m$ large.

We first consider the problem of extracting all the min-entropy of the
source in \secref{subsec:optimalentropyloss}. This was achieved in the
classical case by Raz et al.~\cite{RRV02}, so we use the same
$1$-bit extractor and weak design as them.

In \secref{subsec:logseed} we give a scheme which uses a seed of length
$d = O(\log n)$, but can only extract part of the entropy. This is
also based on Raz et al.~\cite{RRV02} in the classical case.

In \secref{subsec:local} we combine an extractor and design which are
locally computable (from Vadhan~\cite{Vad04} and Hartman and
Raz~\cite{HR03} respectively), to produce a quantum $m$-bit extractor,
such that each bit of the output depends on only $O(\log(m/\eps))$
bits of the input.

And finally in \secref{subsec:weak-random-seed} we use a $1$-bit
extractor from Raz~\cite{Raz05}, which only requires a weakly random
seed, resulting in a quantum $m$-bit extractor,
which also works with a weakly random seed.

These constructions are summarized in \tableref{tab:constructions} on
\pref{tab:constructions}.

\subsection{Near optimal entropy loss}
\label{subsec:optimalentropyloss}

To achieve a near optimal entropy loss we need to combine a $1$-bit
extractor with near optimal entropy loss and a weak $(t,1)$-design. We use
the same extractor and design as Raz et al.~\cite{RRV02} to do so.

\begin{lem}[\protect{\cite[Lemma 17]{RRV02}}\footnote{Hartman and
    Raz~\cite{HR03} give a more efficient construction of this lemma,
    namely in time $\poly{\log m,t}$ and space $\poly{\log m + \log
      t}$, with the extra minor restriction that $m > t^{\log
      t}$.}] \label{lem:optimalweakdesign} For every $t,m \in \N$ there
  exists a weak $(t,1)$-design $S_1,\dotsc,S_m \subset [d]$ such that
  $d = t \left\lceil \frac{t}{\ln 2} \right\rceil \left\lceil\log 4m
  \right\rceil = O(t^2 \log m)$. Moreover, such a design can be found
  in time $\poly{m,d}$ and space $\poly{m}$.
\end{lem}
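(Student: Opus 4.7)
The plan is to build the design greedily, producing $S_1, \dotsc, S_m$ one at a time and showing at each step that a valid $S_i$ exists. The key probabilistic input is that if $S$ is a uniformly random $t$-subset of a set $B$ with $|B| \geq t \lceil t/\ln 2 \rceil$ and $T \subseteq B$ is any fixed $t$-subset, then by the negative association of sampling without replacement,
\[
\mathbb{E}\bigl[2^{|S \cap T|}\bigr] \;\leq\; \prod_{e \in T} \mathbb{E}\bigl[2^{\1[e \in S]}\bigr] \;=\; (1 + t/|B|)^t \;\leq\; e^{t^2/|B|} \;\leq\; 2.
\]
Making this quantity at most $2$ is precisely what pins down the block size $t \lceil t/\ln 2 \rceil$ appearing in $d$.

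The product form of $d$ then suggests partitioning $[d]$ into $b = \lceil \log 4m \rceil$ disjoint blocks of size $t \lceil t/\ln 2 \rceil$ each, and placing each $S_i$ entirely inside one block. Sets lying in different blocks are automatically disjoint and contribute exactly $1$ to the design sum $\sum_{j<i} 2^{|S_j \cap S_i|}$, so the inter-block contribution is at most $i-1$. For the intra-block contribution I would combine the bound above with a balanced assignment of sets to blocks (placing each new $S_i$ in a currently least-loaded block), so that the number of earlier sets sharing the block with $S_i$ is $O((i-1)/b)$. A greedy choice of the elements of $S_i$ inside its block, realized concretely through the method of conditional expectations, then keeps the intra-block excess within the remaining budget $m - (i-1)$.

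The main obstacle I expect is the tail of the argument: as $i$ approaches $m$ the slack $m - (i-1)$ drops to $O(1)$, so the intra-block contribution has to be controlled more tightly than the plain averaging bound naively gives. I anticipate closing this either by restricting the balanced assignment to blocks whose current load is strictly minimal, or by a more delicate conditional-expectation analysis that uses the factor of $2$ in the bound above quantitatively; the $\lceil \log 4m \rceil$ factor in $d$ is precisely what supplies enough blocks to make the balancing work. For the algorithmic claim, once the existence argument is derandomized via conditional expectations, each of the $mt$ element choices requires evaluating a sum of at most $m$ easily computable terms, so the whole construction runs in time $\poly{m,d}$ and space $\poly{m}$.
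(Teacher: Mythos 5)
The paper itself gives no proof of this lemma --- it is imported verbatim from \cite[Lemma~17]{RRV02} --- so your proposal has to be measured against that construction. Your two local ingredients are fine: the bound $\mathbb{E}\bigl[2^{|S\cap T|}\bigr]\le(1+t/|B|)^t\le 2$ for a random $t$-subset of a block of size $t\lceil t/\ln 2\rceil$ is correct (RRV02 get the same bound by drawing one element from each of $t$ disjoint sub-blocks of size $\lceil t/\ln 2\rceil$, which makes the coordinates independent and sidesteps negative association), and the method of conditional expectations is exactly how the intra-block choice is derandomized there.

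The genuine gap is the assignment of sets to blocks: a balanced (least-loaded) assignment cannot work, and no refinement of the conditional-expectation analysis will rescue it. With $b$ blocks each holding roughly $m/b$ sets, look at the last set $S_m$, lying in a block with $a\approx m/b$ earlier sets. Each of the $m-1-a$ earlier sets in \emph{other} blocks contributes exactly $2^{0}=1$, and each intra-block term is also at least $1$, so $\sum_{j<m}2^{|S_j\cap S_m|}\ge m-1$ no matter what; to stay within the budget $rm=m$ the $a$ intra-block overlaps would have to be empty up to a single excess bit, i.e.\ the block would have to contain about $m/\log m$ pairwise disjoint $t$-sets, which is impossible in a universe of size $t\lceil t/\ln 2\rceil=O(t^2)$ once $m$ is large. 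The correct assignment is deliberately \emph{unbalanced and ordered}: the first $\lceil m/2\rceil$ sets go in block $1$, the next $\lceil m/4\rceil$ in block $2$, and so on, halving each time. For a set in block $h$ the inter-block contribution is then at most $m(1-2^{1-h})$ and the intra-block contribution is at most $2\cdot m2^{-h}=m2^{1-h}$ by your expectation bound, for a total of at most $m$; the factor $\lceil\log 4m\rceil$ in $d$ is precisely the number of halvings needed for the loads to decay to $1$, not a load-balancing parameter. With that replacement the rest of your argument (greedy intra-block choice, conditional expectations, the $\poly{m,d}$ running time) goes through as in \cite{RRV02}.
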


As $1$-bit extractor, Raz et al.~\cite{RRV02} (and
Trevisan~\cite{Tre01} too) used the bits of a list-decodable code. We
give the parameters here as \propref{prop:codeextractor} and refer
to \appendixref{sec:codesRextractors} for details on the construction
and proof.

\begin{prop}
  \label{prop:codeextractor}
  For any $\eps > 0$ and $n \in \N$ there exists a $(k,\eps)$-strong
  extractor with uniform seed $\Ext_{n,\eps} : \{0,1\}^n \times
  \{0,1\}^t \to \{0,1\}$ with $t = O(\log (n / \eps))$ and $k = 3 \log
  1/\eps$.
\end{prop}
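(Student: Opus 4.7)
The plan is to realize the extractor as $\Ext_{n,\eps}(x,i) \coloneqq C(x)_i$, where $C \colon \{0,1\}^n \to \{0,1\}^{\bar{n}}$ is a binary $(1/2-\eps/2, L)$-list-decodable code of block length $\bar{n} = \poly{n/\eps}$ and list size $L = O(1/\eps^2)$; such codes are standard (for instance, the concatenation of Reed--Solomon with Hadamard, or Reed--Muller codes analyzed along the lines of Sudan--Trevisan--Vadhan). Identifying $[\bar{n}]$ with bit strings of length $t = \log \bar{n} = O(\log(n/\eps))$ yields the claimed seed length.

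For correctness I would argue by contrapositive: suppose that some source $X$ with $\Hmin{X} \geq k$ fools the extractor with error larger than $\eps$. Since the output is a single bit, a short trace-distance computation reduces an optimal distinguisher to a predictor $f \colon \{0,1\}^t \to \{0,1\}$ satisfying
\[ \Pr_{X,Y}[\, C(X)_Y = f(Y)\,] > \tfrac{1}{2} + \eps. \]
Viewing $f$ as a word in $\{0,1\}^{\bar{n}}$, this is equivalent to $\mathbb{E}_X[\mathrm{agr}(f, C(X))] > \tfrac{1}{2} + \eps$, where $\mathrm{agr}(\cdot,\cdot)$ denotes fractional Hamming agreement.

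A Markov-style step then isolates the high-agreement codewords. Set
\[ G \coloneqq \{x \in \{0,1\}^n : \mathrm{agr}(f, C(x)) > \tfrac{1}{2} + \tfrac{\eps}{2}\}. \]
Combining $\mathbb{E}[\mathrm{agr}(f,C(X))] > 1/2 + \eps$ with the a priori upper bound $\mathrm{agr} \leq 1$ forces $\Pr[X \in G] > \eps/(1-\eps) > \eps$. By the list-decodability of $C$, $|G| \leq L$, so
\[ \eps < \Pr[X \in G] \leq |G| \cdot 2^{-\Hmin{X}} \leq L \cdot 2^{-k}, \]
which rearranges to $k < \log(L/\eps) = 3\log(1/\eps) + O(1)$, contradicting $k = 3\log(1/\eps)$ up to an additive constant that is absorbed by a minor tuning of the code parameters.

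The only genuinely nontrivial ingredient is the underlying list-decodable code and its quantitative trade-off between block length, list size, and decoding radius; once that is in hand, the reduction above is routine, and this is presumably what the appendix \appendixref{sec:codesRextractors} fills in. Note that this proposition is purely classical, with no side information; its quantum-proof upgrade is supplied later by \thmref{thm:1-bit-against-Q}, and the amplification to an $m$-bit output is then carried out through the quantum-proof Trevisan construction of \thmref{thm:security}.
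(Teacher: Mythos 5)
Your proposal is correct and takes essentially the same route as the paper: the paper also realizes the extractor as a bit of an $(\eps/2,O(1/\eps^2))$-list-decodable code (\lemref{lem:ecc} combined with \thmref{thm:codesRextractors}), converts the distinguishing advantage into the majority-predictor string $\alpha$ with average agreement above $\tfrac12+\eps$ (\lemref{lem:c-approximating-cx}), runs the identical Markov step to get $\Pr[\mathrm{agr}>\tfrac12+\tfrac{\eps}{2}]>\eps$, and then bounds $\Hmin{X}$ by guessing uniformly within the list of at most $L$ codewords, which is your union bound $\Pr[X\in G]\le L\cdot 2^{-\Hmin{X}}$ in disguise. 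Like you, the paper silently absorbs the additive $O(1)$ in $k$.
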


Plugging this into \thmref{thm:security} we get a quantum extractor
with parameters similar to Raz et al.~\cite{RRV02}.

\begin{cor}
  \label{cor:optimalentropyloss}
  Let $C : \{0,1\}^n \times \{0,1\}^t \to
  \{0,1\}$ be the extractor from Proposition~\ref{prop:codeextractor} with error $\eps' =
  \frac{\eps^2}{9m^2}$ and let $S_1,\dotsc,S_m \subset [d]$ be
  the weak $(t,1)$-design from
  \lemref{lem:optimalweakdesign}. Then
  \begin{align*}
    \Ext : \ & \{0,1\}^n  \times \{0,1\}^d \to \{0,1\}^m \\
     & (x,y) \mapsto C(x,y_{S_1}) \dotsb C(x,y_{S_m})
  \end{align*}
  is a quantum-proof $(m + 8 \log m + 8 \log 1/ \eps +
  O(1),\eps)$-strong extractor with uniform seed, with $d = O(\log^2
  (n/\eps) \log m)$.
\end{cor}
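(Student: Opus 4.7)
The plan is to simply instantiate \thmref{thm:security} with the two ingredients at hand: the weak $(t,1)$-design from \lemref{lem:optimalweakdesign} (so $r = 1$) and the one-bit strong extractor from \propref{prop:codeextractor}, and then carefully track the parameters so that the error and entropy loss come out as stated.

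First I would set the error budget for the one-bit extractor. \thmref{thm:security} converts a $(k',\eps')$-strong one-bit extractor into a quantum-proof $(k' + rm + \log 1/\eps', 3m\sqrt{\eps'})$-strong extractor. To make the final error equal to $\eps$, choose $\eps' = \eps^2/(9m^2)$, so that $3m\sqrt{\eps'} = \eps$ and the one-bit extractor from \propref{prop:codeextractor} exists with $k' = 3\log 1/\eps'$ and seed length $t = O(\log(n/\eps'))$. Since $\eps' = \poly{\eps/m}$ and we may assume $m \leq n$, we have $t = O(\log(n/\eps))$.

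Next I would plug in and compute. With $r = 1$, the min-entropy threshold becomes
\[
k = k' + m + \log 1/\eps' = m + 4\log 1/\eps' = m + 4\log(9m^2/\eps^2) = m + 8\log m + 8\log 1/\eps + O(1),
\]
which matches the statement. For the seed length, \lemref{lem:optimalweakdesign} gives $d = O(t^2 \log m)$, and substituting $t = O(\log(n/\eps))$ yields $d = O(\log^2(n/\eps) \log m)$, as claimed.

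There is no real obstacle here beyond bookkeeping: the work has already been done in \thmref{thm:security} (for the quantum-proof reduction), in \propref{prop:codeextractor} (for the one-bit extractor with near-optimal entropy loss $3\log 1/\eps'$), and in \lemref{lem:optimalweakdesign} (for the $r = 1$ weak design that prevents the $rm$ term from inflating beyond $m$). The only point requiring a bit of care is the choice $\eps' = \eps^2/(9m^2)$, which is what forces the additive $8\log m$ term --- a smaller loss in $m$ is not possible with this approach, because the hybrid-argument step inside \propref{prop:guessing-cx} loses a factor of $m$ in the distinguishing advantage and \thmref{thm:1-bit-against-Q} then takes a square root.
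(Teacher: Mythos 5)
Your proposal is correct and is exactly the paper's (implicit) argument: the corollary is obtained by instantiating \thmref{thm:security} with the $(t,1)$-design of \lemref{lem:optimalweakdesign} and the one-bit extractor of \propref{prop:codeextractor} at error $\eps' = \eps^2/(9m^2)$, and your parameter bookkeeping ($3m\sqrt{\eps'} = \eps$, $k = m + 4\log 1/\eps' = m + 8\log m + 8\log 1/\eps + O(1)$, $d = O(t^2\log m) = O(\log^2(n/\eps)\log m)$) checks out.
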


For $\eps = \poly{1/n}$ the seed has length $d = O(\log^3 n)$.
The entropy loss is $\Delta = 8 \log m + 8 \log 1/ \eps + O(1)$, which
means that the input still has this much randomness left in it
(conditioned on the output). We can extract a bit more by now applying
a second extractor to the input. For this we will use the extractor by
Tomamichel et al~\cite{TSSR10}, which is a quantum $(k',\eps')$-strong
extractor\footnote{\cite{TSSR10} define quantum-proof extractors a
  little differently than we do (see \footnoteref{fn:TSSRdef} on
  \pref{fn:TSSRdef}), but it is not hard to see that their result
  holds with the same parameters as the differences are absorbed in
  the $O$-notation.} with seed length $d' = O(m' +
\log n' + \log 1/\eps')$ and entropy loss $\Delta' = 4 \log 1/\eps' +
O(1)$, where $n'$ and $m'$ are the input and output string
lengths. Since we will use it for $m' = 8 \log m + 4 \log 1/ \eps' + O(1)$,
we immediately get the following corollary from
\lemref{lem:composition}.

\begin{cor}
  \label{cor:optimalentropyloss2}
  By applying the extractors from \corref{cor:optimalentropyloss} and
  \cite[Theorem 10]{TSSR10} in succession, we get a new function
  $\Ext: \{0,1\}^n \times \{0,1\}^d \to \{0,1\}^m$, which is a
  quantum-proof $(m + 4 \log 1/ \eps + O(1),\eps)$-strong extractor
  with uniform seed of length $d = O(\log^2 (n/\eps) \log m)$.
\end{cor}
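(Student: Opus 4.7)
The plan is to apply the extractor of Corollary \ref{cor:optimalentropyloss} and the Tomamichel--Schaffner--Smith--Renner extractor of \cite{TSSR10} sequentially, feeding the output of the composition into a single invocation of the composition lemma (Lemma \ref{lem:composition}). The first extractor will use up most of the min-entropy of the source but leave an entropy loss of $8\log m + 8\log 1/\eps_1 + O(1)$; the second extractor is then used to squeeze out exactly this remainder, since its own loss is only $4\log 1/\eps_2 + O(1)$. Balancing the two errors at $\eps_1 = \eps_2 = \eps/2$ collapses the combined loss to $4\log 1/\eps + O(1)$.

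Concretely, I would split the target output size as $m = m_1 + m_2$, where $m_2 := 8\log m + 4\log 1/\eps_2 + O(1)$ matches the output length for which the entropy loss of the \cite{TSSR10} extractor is to be absorbed. Invoking Corollary \ref{cor:optimalentropyloss} with error $\eps_1$ gives a quantum-proof $(m_1 + 8\log m + 8\log 1/\eps_1 + O(1),\, \eps_1)$-strong extractor with seed length $d_1 = O(\log^2(n/\eps_1)\log m)$. Its output takes $m_1$ bits, so by composition the residual (smooth) conditional min-entropy of $X$ given the first extractor's output, its seed and the quantum side information $E$ is at least $8\log m + 8\log 1/\eps_1 + O(1)$. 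This residual is then the input-side entropy available for the second stage.

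Next I would apply the \cite{TSSR10} extractor, which is quantum-proof with entropy requirement $m_2 + 4\log 1/\eps_2 + O(1)$, using a fresh independent seed of length $d_2 = O(m_2 + \log n + \log 1/\eps_2)$. Plugging in the value of $m_2$ above, the requirement $m_2 + 4\log 1/\eps_2 + O(1) \le 8\log m + 8\log 1/\eps_1 + O(1)$ is satisfied with room to spare when $\eps_1 = \eps_2 = \eps/2$. Lemma \ref{lem:composition} then yields a single quantum-proof strong extractor with error $\eps_1 + \eps_2 = \eps$, seed length $d_1 + d_2 = O(\log^2(n/\eps)\log m)$ (the $d_2$ term being absorbed into the $d_1$ term since $m_2 = O(\log(m/\eps))$), and total entropy requirement
\[
m_1 + 8\log m + 8\log 1/\eps_1 + O(1) = m - m_2 + 8\log m + 8\log 1/\eps_1 + O(1) = m + 4\log 1/\eps + O(1),
\]
as claimed.

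The main technical point, and the only one that requires care, is the bookkeeping: checking that the chain rule used inside Lemma \ref{lem:composition} gives the residual smooth min-entropy in exactly the form required by the second extractor (including that the first seed and first output act as classical side information next to the quantum $E$), and verifying that the $O(1)$ terms from the two stages combine cleanly so that the $8\log 1/\eps_1$ contribution of the first extractor really does cancel against $m_2$ up to an additive $4\log 1/\eps + O(1)$. Once the parameters are set as above this is mechanical; the rest of the argument is simply a direct application of the composition lemma.
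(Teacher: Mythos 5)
Your proposal is correct and follows essentially the same route as the paper: apply Corollary \ref{cor:optimalentropyloss} first, use the \cite{TSSR10} extractor with output length $m' = 8\log m + 4\log 1/\eps' + O(1)$ to recover the residual entropy, and combine via Lemma \ref{lem:composition}; your explicit error split $\eps_1=\eps_2=\eps/2$ and the bookkeeping showing the $8\log m$ terms cancel match the paper's (more tersely stated) argument.
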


For $\eps = \poly{1/n}$ the seed has length $d = O(\log^3 n)$.

The entropy loss is $\Delta = 4 \log 1/ \eps + O(1)$, which is only a
factor $2$ times larger than the optimal entropy loss.  By
\lemref{lem:smooth-min-entropy} this extractor can produce $m =
\HminSmooth{\eps'}{X|E} - 4 \log 1/ \eps - O(1)$ bits of randomness
with an error $\eps+2\eps'$.

\subsection{Seed of logarithmic size}
\label{subsec:logseed}

The weak design used in \secref{subsec:optimalentropyloss} requires a
seed of length $d = \Theta(t^2 \log m)$, where $t$ is the size of the
seed of the $1$-bit extractor. Since $t$ cannot be less than
$\Omega(\log n)$~\cite{RT00}, a scheme using this design will always
have $d = \Omega(\log^2 n \log m)$. If we want to use a seed of size
$d = O(\log n)$ we need a different weak design, e.g.,
\lemref{lem:logseedweakdesign}, at the cost of extracting less
randomness from the source.

\begin{lem}[\protect{\cite[Lemma 15]{RRV02}}]
  \label{lem:logseedweakdesign} For every $t,m \in \N$
  and $r > 1$, there exists a weak $(t,r)$-design $S_1,\dotsc,S_m
  \subset [d]$ such that $d = t \left\lceil t/\ln r
  \right\rceil = O\left(t^2/\log r \right)$. Moreover, such a
  design can be found in time $\poly{m,d}$ and space $\poly{m}$.
\end{lem}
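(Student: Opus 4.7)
The plan is to build the sets $S_1, \dotsc, S_m$ greedily, using the probabilistic method to guarantee that at each step a good choice exists, and then derandomizing via the method of conditional expectations to get the stated time and space bounds.

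Set $\alpha \coloneqq \lceil t/\ln r \rceil$ and $d \coloneqq t \alpha$, so in particular $t^2/d \leq \ln r$. Choose $S_1 \subset [d]$ to be any set of size $t$. Suppose $S_1, \dotsc, S_{i-1}$ have been chosen, each of size $t$, with $\sum_{j=1}^{j'-1} 2^{|S_j \cap S_{j'}|} \leq rm$ for every $j' \leq i-1$. I want to show that a random uniform $t$-subset $S_i$ of $[d]$ satisfies $\mathbb{E}\bigl[\sum_{j=1}^{i-1} 2^{|S_j \cap S_i|}\bigr] \leq rm$; by averaging, a specific choice then achieves the bound, which allows the induction to proceed.

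The key estimate is that for each fixed $S_j$ with $|S_j| = t$, the random variable $X_j \coloneqq |S_j \cap S_i|$ is hypergeometrically distributed with parameters $(d, t, t)$. By Hoeffding's classical convexity result comparing sampling without and with replacement, for any convex function $f$ one has $\mathbb{E}[f(X_j)] \leq \mathbb{E}[f(B)]$ where $B \sim \mathrm{Bin}(t, t/d)$. Applying this to $f(x) = 2^x$ gives
\begin{equation*}
\mathbb{E}\bigl[ 2^{|S_j \cap S_i|} \bigr] \leq \Bigl(1 + \tfrac{t}{d}\Bigr)^t \leq e^{t^2/d} \leq e^{\ln r} = r.
\end{equation*}
Summing over $j \in \{1, \dotsc, i-1\}$ yields $\mathbb{E}\bigl[\sum_{j=1}^{i-1} 2^{|S_j \cap S_i|}\bigr] \leq (i-1)r \leq rm$, which is exactly what was needed.

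To turn this into an efficient construction, I would derandomize the choice of $S_i$ by building it element by element: maintain a partial set $T \subseteq [d]$ of size less than $t$, and at each step add the element $a \in [d] \setminus T$ that minimizes the conditional expectation of $\sum_{j=1}^{i-1} 2^{|S_j \cap S_i|}$ given that $T \cup \{a\}$ is contained in $S_i$. This conditional expectation splits as a sum of $i-1 \leq m$ terms, each of which is (for a hypergeometric distribution with parameters adjusted by the conditioning) expressible as a single rational combination of binomial coefficients and can be evaluated in time and space $\mathrm{poly}(m,d)$. Since conditional expectations can only decrease under the greedy choice, the final $S_i$ will satisfy $\sum_{j=1}^{i-1} 2^{|S_j \cap S_i|} \leq rm$. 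Iterating $m$ times gives the whole family in total time $\mathrm{poly}(m,d)$ and space $\mathrm{poly}(m)$ (only the $i-1 \leq m$ previous sets and their intersection sizes with $T$ need to be stored).

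The main obstacle to watch out for is that the naive argument using sampling with replacement gives exactly $r$, so there is no slack; it is important that $d$ is large enough that $t^2/d \leq \ln r$ holds with the specific constant implicit in $\alpha = \lceil t/\ln r\rceil$. If one tried a smaller $d$ the moment-generating-function bound $e^{t^2/d}$ would exceed $r$ and the induction would fail. Everything else — the greedy step, Hoeffding's convex-comparison inequality, and the conditional-expectations derandomization — is standard.
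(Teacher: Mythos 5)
Your argument is correct, but it follows a genuinely different route from the proof of \cite[Lemma~15]{RRV02} that the paper cites (the paper itself gives no proof, only the citation). RRV02 impose a product structure: they partition $[d]$ into $t$ disjoint blocks of size $\alpha=\lceil t/\ln r\rceil$ and require each $S_i$ to contain exactly one element per block, chosen independently and uniformly. Then $|S_i\cap S_j|$ is a sum of \emph{independent} Bernoulli indicators, so $\E\bigl[2^{|S_i\cap S_j|}\bigr]=(1+1/\alpha)^t\leq e^{t/\alpha}\leq r$ follows by multiplying one-block expectations, and the method of conditional expectations derandomizes block by block with conditional expectations that factor into explicit products. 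You instead take $S_i$ to be a uniform $t$-subset of $[d]$, which makes $|S_i\cap S_j|$ hypergeometric, and you recover the same bound $(1+t/d)^t\leq e^{t^2/d}\leq r$ via Hoeffding's convex-ordering theorem comparing sampling without and with replacement; this is a valid application ($f(x)=2^x$ is convex and the intersection size is exactly a without-replacement sum over the $\{0,1\}$-valued population indexed by $S_j$), and your element-by-element derandomization with hypergeometric conditional expectations also goes through in time $\poly{m,d}$. The induction is sound because the choice of $S_i$ only affects the constraint indexed by $i$, and the expectation bound for later indices holds for arbitrary fixed predecessors. What the block structure buys is the elimination of the Hoeffding comparison and much simpler conditional expectations; what your version buys is nothing extra here, though it is a clean illustration that the design property does not depend on the block structure. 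One cosmetic caveat: storing the $i-1\leq m$ previous sets as subsets of $[d]$ uses $\Theta(mt\log d)$ bits, which is $\poly{m,d}$ rather than $\poly{m}$ unless one adopts the usual convention (as RRV02 implicitly do) that the workspace bound refers to the regime $d\leq\poly{m}$ or excludes the output; this does not affect correctness.
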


For the $1$-bit extractor we can use the same as in the previous
section, \propref{prop:codeextractor}.

Plugging this into \thmref{thm:security} 
we get a quantum extractor with logarithmic seed length.

\begin{cor}
  \label{cor:logseed}
  If for any constant $0 < \alpha \leq 1$, the source has min-entropy
  $\Hmin{X|E} = n^{\alpha}$, and the desired error is $\eps =
  \poly{1/n}$, then using the extractor $C : \{0,1\}^n
  \times \{0,1\}^t \to \{0,1\}$ from \propref{prop:codeextractor} with error
  $\eps' = \frac{\eps^2}{9m^2}$ and the weak $(t,r)$-design
  $S_1,\dotsc,S_m \subset [d]$ from \lemref{lem:logseedweakdesign}
  with $r = n^\gamma$ for any $0 < \gamma < \alpha$, we have that
  \begin{align*}
    \Ext : \ & \{0,1\}^n  \times \{0,1\}^d \to \{0,1\}^m \\
     & (x,y) \mapsto C(x,y_{S_1}) \dotsb C(x,y_{S_m})
  \end{align*}
  is a quantum-proof $(n^{\gamma}m + 8 \log m + 8 \log 1/ \eps +
  O(1),\eps)$-strong extractor with uniform seed, with $d = O
  \left(\frac{1}{\gamma}\log n \right)$.
\end{cor}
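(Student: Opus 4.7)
The plan is to instantiate Theorem \ref{thm:security} with the specific one-bit extractor from Proposition \ref{prop:codeextractor} and the logarithmic-seed weak design from Lemma \ref{lem:logseedweakdesign}, and then track the parameters. This is a pure parameter-matching exercise, not a new combinatorial or information-theoretic argument.

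First I would fix $\eps' = \eps^2/(9m^2)$ and invoke Proposition \ref{prop:codeextractor} to obtain a classical $(k',\eps')$-strong extractor $C : \{0,1\}^n \times \{0,1\}^t \to \{0,1\}$ with $k' = 3\log 1/\eps'$ and $t = O(\log(n/\eps'))$. Since $\eps = \poly{1/n}$ and (implicitly) $m \leq n^\alpha \leq n$, we have $\log 1/\eps' = 2\log m + 2\log 1/\eps + O(1) = O(\log n)$, hence $t = O(\log n)$. Next I would apply Lemma \ref{lem:logseedweakdesign} with this $t$ and $r = n^\gamma$ to obtain a weak $(t,r)$-design $S_1,\dotsc,S_m \subset [d]$ with
\[ d = O\!\left(\frac{t^2}{\log r}\right) = O\!\left(\frac{\log^2 n}{\gamma \log n}\right) = O\!\left(\frac{\log n}{\gamma}\right), \]
which is the claimed seed length.

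The second step is to feed these two objects into Theorem \ref{thm:security}, which yields a quantum-proof $(k'+rm+\log 1/\eps',\, 3m\sqrt{\eps'})$-strong extractor with uniform seed. The error is $3m\sqrt{\eps'} = 3m \cdot \eps/(3m) = \eps$ by the choice of $\eps'$. For the min-entropy threshold, combine $k' = 3\log 1/\eps'$ with the extra $\log 1/\eps'$ to get $k' + \log 1/\eps' = 4\log 1/\eps' = 8\log m + 8\log 1/\eps + O(1)$, so the total threshold becomes
\[ rm + 4 \log 1/\eps' + O(1) = n^\gamma m + 8\log m + 8\log 1/\eps + O(1), \]
exactly matching the statement.

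There is really no hard step; the only thing to be careful about is that the reduction from $\eps'$ to $\eps$ produces a $\log m$ contribution (via $\log 1/\eps'$) that must be absorbed into the constants, and that bounding $t = O(\log n)$ requires $m$ and $1/\eps$ to be at most polynomial in $n$, which is guaranteed by the hypotheses $m \leq n^\alpha$ and $\eps = \poly{1/n}$. The remaining verification that the resulting $\Ext$ indeed has the form $C(x,y_{S_1})\dotsb C(x,y_{S_m})$ is immediate from Definition \ref{def:genericscheme}.
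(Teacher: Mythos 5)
Your proposal is correct and is exactly the argument the paper intends (the paper itself only says ``plugging this into Theorem~\ref{thm:security}'' and leaves the parameter bookkeeping implicit, mirroring the explicit computation done for Corollary~\ref{cor:optimalentropyloss}): set $\eps'=\eps^2/(9m^2)$, apply Theorem~\ref{thm:security} with Proposition~\ref{prop:codeextractor} and Lemma~\ref{lem:logseedweakdesign} at $r=n^\gamma$, and check that $3m\sqrt{\eps'}=\eps$ and $k'+rm+\log 1/\eps'=n^\gamma m+8\log m+8\log 1/\eps+O(1)$. Your observation that $t=O(\log n)$ (and hence $d=O(\gamma^{-1}\log n)$) relies on $m$ and $1/\eps$ being polynomial in $n$ is the right caveat and is indeed guaranteed by the hypotheses.
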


Choosing $\gamma$ to be a constant results in a seed of length $d =
O(\log n)$. The output length is $m = n^{\alpha - \gamma} - o(1) =
\Hmin{X|E}^{1 - \frac{\gamma}{\alpha}} - o(1)$. By
\lemref{lem:smooth-min-entropy} this can be increased to $m =
\HminSmooth{\eps'}{X|E}^{1 - \frac{\gamma}{\alpha}} - o(1)$ with
an error of $\eps+2\eps'$.

\subsection{Locally computable extractor}
\label{subsec:local}

Another interesting feature of extractors is \emph{locality}, that
is, the $m$-bit output depends only a small subset of the $n$ input
bits. This is useful in, e.g., the bounded storage model (see
\cite{Mau92,Lu04,Vad04} for the case of a classical adversary and
\cite{KR07} for a general quantum treatment), where we assume a huge
source of random bits, say $n$, are available, and the adversary's
storage is bounded by $\alpha n$ for some constant $\alpha < 1$. Legitimate
parties are also assumed to have bounded workspace for computation. In
particular, for the model to be meaningful, the bound is stricter than
that on the adversary. So to extract a secret key from the large
source of randomness, they need an extractor which only reads $\ell
\ll n$ bits. An extractor with such a property is called
$\ell$-local.

\begin{deff}[$\ell$-local extractor]
  \label{def:localextractor}
  An extractor $\Ext: \{0,1\}^n \times \{0,1\}^d \to \{0,1\}^m$ is
  \emph{$\ell$-locally computable} (or \emph{$\ell$-local}), if for
  every $r \in \{0,1\}^d$, the function $x\mapsto \Ext(x,r)$ depends
  on only $\ell$ bits of its input, where the bit locations are
  determined by $r$.
\end{deff}

Lu~\cite{Lu04} modified Trevisan's scheme~\cite{Tre01,RRV02} to use a
local list-decodable code as $1$-bit extractor. Vadhan~\cite{Vad04}
proposes another construction for local extractors, which is optimal
up to constant factors. Both these constructions have similar
parameters in the case of $1$-bit extractors.\footnote{If the
  extractor is used to extract $m$-bits, then Vadhan's scheme reads
  less input bits and uses a shorter seed than Lu's.} We state the
parameters of Vadhan's construction here and refer the interested
reader to \cite{Lu04} for Lu's constructions.

\begin{lem}[\protect{\cite[Theorem 8.5]{Vad04}}]
  \label{lem:localextractor}
  For any $\eps > \exp \left( -n/2^{O(\log^* n)} \right)$, $n \in
  \N$ and constant $0 < \gamma < 1$, there exists an explicit
  $\ell$-local $(k,\eps)$-strong extractor with uniform seed
  $\Ext_{n,\eps,\gamma} : \{0,1\}^n \times \{0,1\}^t \to \{0,1\}$
  with $t = O(\log (n / \eps))$, $k = \gamma n$ and $\ell = O
  (\log 1/\eps)$.
\end{lem}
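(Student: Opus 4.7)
The plan is to follow the \emph{sample-then-extract} paradigm of Nisan and Zuckerman, as refined by Vadhan. The extractor splits its seed $y = (y_1, y_2)$ in two parts: $y_1$ is fed to an averaging sampler $\mathrm{Samp} : \{0,1\}^{t_1} \to [n]^{\ell}$ that selects a short list $S$ of input positions, and $y_2$ is used as the seed of an ordinary strong $1$-bit extractor $E'$ applied to the substring $x_S \in \{0,1\}^\ell$. Locality is then immediate: evaluating the output on $x$ only requires reading $x$ at the $\ell$ coordinates returned by $\mathrm{Samp}(y_1)$, and those coordinates are determined by the seed as required by \defref{def:localextractor}.

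First I would instantiate $\mathrm{Samp}$ as an explicit $(\delta,\theta)$-averaging sampler with $\ell = O(\log 1/\theta)$ samples and seed length $t_1 = O(\log n + \log 1/\theta)$, for a small constant $\delta$ depending only on $\gamma$ and for $\theta = \eps/2$. Then I would invoke Vadhan's \emph{sampling lemma for min-entropy sources}: for any source $X$ with $\Hmin{X} \geq \gamma n$, except with probability $\theta$ over the choice of $S$, the substring $X_S$ has min-entropy at least $(\gamma - O(\delta))\,\ell$. Choosing $\delta$ small enough as a function of $\gamma$ ensures $\Hmin{X_S} \geq k' \coloneqq \gamma' \ell$ for some constant $\gamma' > 0$, with probability at least $1 - \eps/2$ over $y_1$.

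Second, on these $\ell$ bits I would apply an explicit $(k',\eps/2)$-strong $1$-bit extractor $E' : \{0,1\}^\ell \times \{0,1\}^{t_2} \to \{0,1\}$ — for instance the code-based extractor of \propref{prop:codeextractor}. Since $\ell = O(\log 1/\eps)$ and $k' = \Theta(\ell)$, we can take $t_2 = O(\log(\ell/\eps)) = O(\log 1/\eps)$, so that $E'(X_S, y_2)$ is $\eps/2$-close to uniform given $y_2$. Combining the sampler failure probability with the extractor error via the triangle inequality yields the desired $(\gamma n, \eps)$-strong extractor guarantee, with total seed length $t = t_1 + t_2 = O(\log n + \log 1/\eps) = O(\log(n/\eps))$ and exactly $\ell = O(\log 1/\eps)$ bits of $x$ read.

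The main obstacle will be driving the number of samples down to $\ell = O(\log 1/\eps)$ using an \emph{explicit} sampler to which the min-entropy sampling lemma applies. Generic Chernoff samplers give $\ell = O(\log(1/\eps)/\delta^2)$, which is fine once $\delta$ is a fixed constant tied to $\gamma$, but pushing the sampler seed down to $t_1 = O(\log n + \log 1/\eps)$ while keeping $\ell$ logarithmic is delicate and is the source of the restriction $\eps > \exp\!\bigl(-n/2^{O(\log^* n)}\bigr)$: the best known explicit averaging samplers matching both parameters simultaneously are those obtained via the sampler–extractor equivalence, whose range of admissible errors is precisely this one.
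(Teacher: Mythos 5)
The paper does not prove this lemma at all --- it is imported verbatim from \cite[Theorem 8.5]{Vad04} --- and your sketch correctly reconstructs Vadhan's actual sample-then-extract argument, including the correct attribution of the $\eps > \exp(-n/2^{O(\log^* n)})$ restriction to the explicit averaging samplers. The only imprecision is in your statement of the min-entropy sampling lemma: the correct form is that $(S, X_S)$ is $\theta$-close to a distribution in which, conditioned on each value of $S$, the restricted string is a $k'$-source (a closeness-to-convex-combination statement), rather than a guarantee that the marginal $X_S$ itself has high min-entropy for all but a $\theta$ fraction of $S$; the composition with the inner extractor then goes through by the triangle inequality exactly as you describe.
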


Since we assume that the available memory is limited, we also
want the construction of the weak design to be particularly
efficient. For this we can use a construction by Hartman and
Raz~\cite{HR03}.

\begin{lem}[\protect{\cite[Theorem 3]{HR03}}]
  \label{lem:localweakdesign}
  For every $m, t \in \N$, such that $m = \Omega (t^{\log t})$, and
  constant $r > 1$, there exists an explicit weak $(t,r)$-design
  $S_1,\dotsc,S_m \subset [d]$, where $d = O(t^2)$.  Such a design can
  be found in time $\poly{\log m,t}$ and space $\poly{\log m + \log
    t}$.
\end{lem}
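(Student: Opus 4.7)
The plan is to adapt the Nisan--Wigderson polynomial design template so as to achieve $d = O(t^2)$ with a constant $r$. I would fix a prime power $q = \Theta(t)$ and identify $[d]$ with $\mathbb{F}_q \times \mathbb{F}_q$, giving $d = q^2 = O(t^2)$. To each polynomial $p \in \mathbb{F}_q[x]$ of degree less than $h \coloneqq \lceil \log_q m \rceil$ I associate the set $S_p \coloneqq \{(x,p(x)) : x \in \mathbb{F}_q\}$. Then $|S_p| = q = t$, and there are exactly $q^h$ such polynomials. The hypothesis $m = \Omega(t^{\log t})$ will ensure $h \geq \log t$ while keeping $q^h = \Theta(m)$, so the collection $\{S_{p_i}\}_{i=1}^m$ is obtained by enumerating (almost all of) these polynomials in any canonical order, e.g.\ lexicographically by coefficient vector.

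The core of the argument is the weak design estimate. For distinct $p,p'$ of degree less than $h$, $|S_p \cap S_{p'}|$ equals the number of roots of $p - p'$ in $\mathbb{F}_q$, so it is at most $h-1$. More precisely, for a fixed $p_i$ and each $\ell \in \{0,1,\ldots,h-1\}$, the number of $p' \neq p_i$ with $|S_{p'} \cap S_{p_i}| = \ell$ is at most $\binom{q}{\ell} q^{h-\ell}$, since specifying $\ell$ agreement points imposes $\ell$ linear constraints on the $h$ coefficients of $p' - p_i$. Summing this weighted count yields
\begin{equation*}
\sum_{j \neq i} 2^{|S_j \cap S_i|} \;\leq\; \sum_{\ell=0}^{h-1} \binom{q}{\ell}\, q^{h-\ell}\, 2^{\ell} \;\leq\; q^h \sum_{\ell \geq 0} \frac{2^{\ell}}{\ell!} \;\leq\; e^2 q^h \;=\; O(m).
\end{equation*}
Since this bound is independent of both the ordering and of $i$, the prefix sum $\sum_{j<i} 2^{|S_j \cap S_i|}$ is bounded by $rm$ for some absolute constant $r$, establishing the weak design property.

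For the complexity claims, I would encode the index $i \in [m]$ as the coefficient vector of $p_i$ via a mixed-radix expansion in $\mathbb{F}_q^h$. Given $i$, recovering this vector and then evaluating $p_i$ at the $q = O(t)$ points of $\mathbb{F}_q$ to output $S_{p_i}$ takes $\poly{t, \log m}$ time; and since only one set is materialized at a time, the space cost reduces to $\poly{\log t + \log m}$, matching the stated bounds. Finite field arithmetic over $\mathbb{F}_q$ with $q$ a prime power of order $t$ can be performed explicitly within these budgets by the standard constructions of irreducible polynomials.

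The main obstacle is balancing the parameters. The counting argument above only yields a constant $r$ when $q^h = O(m)$, which forces $h \geq \log_q m$. Combined with the requirements $q^h \geq m$ (enough polynomials available) and $q = \Theta(t)$ (so that $|S_i| = t$), this pins $h$ down to $\Theta(\log m / \log t)$ and thereby requires $\log m / \log t \geq \log t$, i.e.\ $m \geq t^{\log t}$. This is precisely the lower bound on $m$ appearing in the lemma, and dropping it would force either $h < \log t$ (violating the intersection count one needs against $2^h$) or a field much larger than $t$ (blowing up $d$ beyond $O(t^2)$).
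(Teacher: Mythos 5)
The paper does not prove this lemma itself --- it is imported verbatim from Hartman and Raz~\cite[Theorem~3]{HR03} --- so your proposal has to be judged on its own merits, and as it stands it has a genuine gap. The Nisan--Wigderson polynomial design and the counting step are fine: for a fixed $p_i$ the number of $p'$ with at least $\ell$ agreements is at most $\binom{q}{\ell}q^{h-\ell}$, and summing $2^{\ell}$ against this indeed gives $\sum_{p'\neq p_i} 2^{|S_{p'}\cap S_{p_i}|}\leq q^h(1+2/q)^q\leq e^2 q^h$. The problem is the assertion that ``$q^h=\Theta(m)$''. With $q=\Theta(t)$ fixed and $h=\lceil\log_q m\rceil$, all you can say is $m\leq q^h< qm$, so for general $m$ (e.g.\ $m=q^{h-1}+1$) your bound is $e^2q^h=\Theta(tm)$, which yields $r=O(t)$ rather than a constant. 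Taking only the first $m$ of the $q^h$ polynomials does not rescue the argument, because your count ranges over \emph{all} difference polynomials in $\mathbb{F}_q^{h}$; bounding $\sum_{j\leq m}2^{|S_j\cap S_i|}$ for a lexicographic prefix requires controlling the number of roots of polynomials whose \emph{high-order coefficients are fixed} and only the low-order ones vary --- and that root-distribution analysis is precisely the technical content of~\cite{HR03}. It is also the true source of the hypothesis $m=\Omega(t^{\log t})$ (it guarantees that the free low-order block has dimension at least roughly $\log_q$ of the relevant quantities). Your stated reason for that hypothesis --- that $h<\log t$ would ``violate the intersection count against $2^h$'' --- does not correspond to any step of your argument: the bound $e^2q^h$ holds for every $h\geq 1$ and nowhere uses $h\geq\log t$.

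Two further points. First, the lemma asserts a weak $(t,r)$-design for \emph{every} constant $r>1$ (and this matters downstream, since the extractable entropy in \corref{cor:localextractor} scales as $1/r$); even in the lucky case $m=q^h$ your argument only delivers one specific constant, roughly $e^2$, and driving $r$ down towards $1$ needs an extra idea (e.g.\ the block construction behind \lemref{lem:logseedweakdesign}, at the price of a larger constant in $d=O(t^2)$). Second, a cosmetic issue: $|S_p|=q$, so either take $q=t$ exactly or restrict the evaluation points to a fixed subset of size $t$ (which only decreases intersections); ``$q=\Theta(t)$ and $|S_p|=q=t$'' as written is inconsistent. The complexity paragraph (computing a single $S_{p_i}$ from $i$ in time $\poly{\log m, t}$ and space $\poly{\log m+\log t}$) is the one part that goes through essentially as you describe, once the set system itself is fixed.
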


\begin{rem}
  For the extractor from \lemref{lem:localextractor} and an error
  $\eps = \poly{1/n}$, this design requires $m = \Omega \left(
    (\log n)^{\log \log n} \right)$. If we are interested in a smaller
  $m$, say $m = \poly{\log n}$, then we can use the weak design from
  \lemref{lem:logseedweakdesign} with $r = n^{\gamma}$. This
  construction would require time and space $\poly{\log n} =
  \poly{\log 1/ \eps}$. The resulting seed would have length only
  $O(\log n)$ instead of $O(\log^2 n)$.
\end{rem}

Plugging these constructions into \thmref{thm:security} we get a
quantum local extractor.

\begin{cor}
  \label{cor:localextractor}
  If for any constant $0 < \alpha \leq 1$, the source has min-entropy
  $\Hmin{X|E} = \alpha n$, then using the weak $(t,r)$-design
  $S_1,\dotsc,S_m \subset [d]$ from \lemref{lem:localweakdesign} for
  any constant $r > 1$, and the extractor $C
  : \{0,1\}^n \times \{0,1\}^t \to \{0,1\}$ from
  \lemref{lem:localextractor} with error $\eps' = \frac{\eps^2}{9m^2}$ and
  any constant $\gamma < \alpha$, we have that
  \begin{align*}
    \Ext : \ & \{0,1\}^n  \times \{0,1\}^d \to \{0,1\}^m \\
     & (x,y) \mapsto C(x,y_{S_1}) \dotsb C(x,y_{S_m})
  \end{align*}
  is a quantum-proof $\ell$-local $(\gamma n + rm + 2 \log m + 2 \log
  1/ \eps + O(1),\eps)$-strong extractor with uniform seed, with $d =
  O(\log^2 (n/\eps))$ and $\ell = O(m \log (m/\eps))$. Furthermore,
  each bit of the output depends on only $O(\log (m /\eps))$ bits of
  the input.
\end{cor}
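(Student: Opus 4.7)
The plan is to obtain this corollary as a direct instantiation of Theorem \ref{thm:security}, by plugging in the local one-bit extractor from Lemma \ref{lem:localextractor} and the efficient weak design from Lemma \ref{lem:localweakdesign}, and then verifying the locality of the resulting construction.

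First I would set the target error of the one-bit extractor to $\eps' = \eps^2/(9m^2)$ so that the amplification factor $3m\sqrt{\eps'}$ in Theorem \ref{thm:security} matches exactly the desired error $\eps$. With the stated $\alpha,\gamma$, Lemma \ref{lem:localextractor} gives a one-bit $(\gamma n, \eps')$-strong extractor $C:\{0,1\}^n\times\{0,1\}^t\to\{0,1\}$ with $t = O(\log(n/\eps'))$, which, absorbing $\log m$ into the $\log(1/\eps)$ term (since $\log 1/\eps' = 2\log m + 2\log 1/\eps + O(1)$ and $m \le n$), is $t = O(\log(n/\eps))$. Lemma \ref{lem:localweakdesign} then produces a weak $(t,r)$-design $S_1,\dots,S_m \subset [d]$ with $d = O(t^2) = O(\log^2(n/\eps))$, for our constant $r>1$, giving the claimed seed length.

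Next I would simply apply Theorem \ref{thm:security}: with $k = \gamma n$ for the one-bit extractor, $\Ext_C$ is a quantum-proof $(k + rm + \log(1/\eps'),\, 3m\sqrt{\eps'})$-strong extractor, i.e., a quantum-proof $(\gamma n + rm + 2\log m + 2\log 1/\eps + O(1),\, \eps)$-strong extractor, as claimed. Uniformity of the seed is inherited from the uniform-seed version of Theorem \ref{thm:security}.

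The remaining point to check is locality, which is not part of the statement of Theorem \ref{thm:security} but follows from the construction of Definition \ref{def:genericscheme}. For each fixed seed $y$, the $i$-th output bit is $C(x, y_{S_i})$, and by Lemma \ref{lem:localextractor} this value depends on only $O(\log 1/\eps') = O(\log(m/\eps))$ bits of $x$, whose positions are determined by $y_{S_i}$ (hence by $y$). This gives the "each bit of the output depends on $O(\log(m/\eps))$ bits of the input" claim, and taking a union over the $m$ output bits gives the overall locality $\ell = O(m \log(m/\eps))$. The one mild subtlety is the condition $m = \Omega(t^{\log t})$ required by Lemma \ref{lem:localweakdesign}; I expect this to be the only genuine thing to watch, and it holds for the regime of interest (where $m$ is a polynomial in $n$ and $t = O(\log(n/\eps))$). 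Once this is noted, everything else is bookkeeping of parameters and no new probabilistic argument beyond Theorem \ref{thm:security} is needed.
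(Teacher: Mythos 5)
Your proposal is correct and matches the paper's (implicit) argument exactly: the corollary is obtained by plugging Lemma \ref{lem:localextractor} and Lemma \ref{lem:localweakdesign} into Theorem \ref{thm:security} with $\eps'=\eps^2/(9m^2)$, and the locality claims follow directly from the per-bit locality of Vadhan's one-bit extractor. Your remark about the $m=\Omega(t^{\log t})$ requirement is also the same caveat the paper handles in the remark following Lemma \ref{lem:localweakdesign}.
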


With these parameters the extractor can produce up to $m = (\alpha -
\gamma)n/r - O(\log 1/\eps)= (\Hmin{X|E} - \gamma n)/r - O(\log
1/\eps)$ bits of randomness, with an error of $\eps = \poly{1/n}$. By
\lemref{lem:smooth-min-entropy} this can be increased to $m =
(\HminSmooth{\eps'}{X|E} - \gamma n)/r - O(\log 1/\eps)$ with
an error of $\eps+2\eps'$.

\subsection{Weak random seed}
\label{subsec:weak-random-seed}

Extractors with weak random seeds typically require the seed to have a
min-entropy linear in its length. \thmref{thm:security.non-uniform}
says that the difference between the length and the min-entropy of the
seed needed in Trevisan's extractor is roughly the same as the
difference between the length and min-entropy of the seed of the
underlying $1$-bit extractor. So we will describe in detail how to
modify the construction from \secref{subsec:logseed} to use a weakly
random seed. As that extractor uses a seed of length $O(\log n)$, this
new construction allows us to preserve the linear loss in the
min-entropy of the seed. Any other version of Trevisan's extractor can
be modified in the same way to use a weakly random seed, albeit with
weaker parameters.

We will use a result by Raz~\cite{Raz05}, which shows how to transform
any extractor which needs a uniform seed into one which can work with
a weakly random seed.

\begin{lem}[\protect{\cite[Theorem 4]{Raz05}}]
  \label{lem:raz}
  For any $(k,\eps)$-strong extractor $\Ext : \{0,1\}^n \times
  \{0,1\}^t \to \{0,1\}^m$ with uniform seed, there exists a
  $(k,2\eps)$-strong extractor $\Ext : \{0,1\}^n \times \{0,1\}^{t'}
  \to \{0,1\}^m$ requiring only a seed with min-entropy $\Hmin{Y} \geq
  \left(\frac{1}{2} + \beta\right)t'$, where $t' = 8t/\beta$.
\end{lem}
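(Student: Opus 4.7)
The plan is to reduce to the uniform-seed case by using the weak seed $Y \in \{0,1\}^{t'}$ (with min-entropy rate strictly above $1/2$) to internally manufacture a near-uniform shorter string of length $t$, which is then fed to the original extractor $\Ext$. The natural tool, given that the min-entropy rate exceeds $1/2$, is a strong two-source extractor applied to two blocks of $Y$.

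Concretely, I would partition $Y$ into two blocks $Y_1, Y_2$ of lengths $\ell_1, \ell_2$ (chosen below), so that a classical min-entropy chain rule guarantees that with probability at least $1-\eps/4$ over the value of $Y_1$, the conditional min-entropy $\Hmin{Y_2 | Y_1 = y_1}$ is at least $(1/2+\beta)t' - \ell_1 - O(\log 1/\eps)$. Combined with a symmetric bound on $\Hmin{Y_1}$, for a suitable split both blocks retain enough min-entropy that a strong two-source extractor $\mathrm{TwoExt}: \{0,1\}^{\ell_1} \times \{0,1\}^{\ell_2} \to \{0,1\}^t$ yields $Z \coloneqq \mathrm{TwoExt}(Y_1, Y_2)$ that is $\eps/2$-close to uniform conditioned on $Y_1$. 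I would then define $\Ext'(x,y) \coloneqq \Ext(x, \mathrm{TwoExt}(y_1, y_2))$.

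The error bound $2\eps$ follows from a standard hybrid argument: first replace $Z$ by a truly uniform $U_t$ independent of everything else, at cost $\eps/2$ by strongness of $\mathrm{TwoExt}$ in its first source; then apply the $(k,\eps)$-strong extractor property of $\Ext$ with uniform seed, at cost $\eps$; then reverse the substitution, at cost $\eps/2$. Strongness of $\Ext'$ with respect to the whole seed $Y$ is inherited from strongness of $\mathrm{TwoExt}$ in $Y_1$ together with strongness of $\Ext$ in its seed; since $Y_2$ is a deterministic function of $Y$ given $Y_1$, the output remains close to uniform conditioned on all of $Y$.

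The main obstacle will be parameter tracking: choosing the split $(\ell_1, \ell_2)$ and a concrete two-source extractor such that the slack $t' = 8t/\beta$ suffices to satisfy the min-entropy requirements of both the chain rule and the two-source extractor simultaneously. The case of small $\beta$ is the delicate one, since classical two-source extractors (the Chor--Goldreich inner product, Bourgain-style constructions, or Raz's own two-source extractor) require the sum of the two input min-entropies to exceed the total source length plus roughly $t + O(\log 1/\eps)$; the factor $8/\beta$ in the statement precisely controls this slack. The entire argument is classical, so no quantum considerations enter despite the context of the paper.
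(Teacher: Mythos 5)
Your reduction breaks at its central step. The string $Z \coloneqq \mathrm{TwoExt}(Y_1,Y_2)$ is a deterministic function of the single source $Y$, and no deterministic function of a source that is merely guaranteed to satisfy $\Hmin{Y}\geq(\frac12+\beta)t'$ can be close to uniform: for any $f:\{0,1\}^{t'}\to\{0,1\}^{t}$, one of the two preimages of the first output bit has size at least $2^{t'-1}$, and the flat source on that preimage has min-entropy $t'-1\geq(\frac12+\beta)t'$ for the parameters at hand, while the first bit of $f$ is constant on it, so $f(Y)$ is at distance $\frac12$ from uniform. The reason your argument appears to evade this impossibility is that a two-source extractor requires its two inputs to be \emph{independent}; the two blocks of a single weak source need not be, and high (even conditional) min-entropy in each block is no substitute. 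Concretely, let $Y_1$ be uniform on $\{0,1\}^{t'/2}$ and let $Y_2$, given $Y_1=y_1$, be uniform on the set of strings with $\langle y_1,y_2\rangle=0$: both blocks have near-maximal min-entropy, the joint min-entropy is $t'-1$, yet the inner-product extractor outputs a constant. Conditioning on $Y_1=y_1$, as you propose, does not help, since the first argument of $\mathrm{TwoExt}$ is then a fixed string and the two-source guarantee says nothing. Hence the claim that $Z$ is $\eps/2$-close to uniform is false, and the hybrid argument that follows has nothing to stand on.

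The paper does not prove this lemma; it imports it verbatim from Raz. Raz's proof uses the one pair of genuinely independent sources available in the problem, namely $X$ and $Y$ themselves: the derived seed is obtained by applying his strong two-source extractor $F$ to the weak seed $Y$ (min-entropy rate above $\frac12$) and to the input $X$ (whose min-entropy $k$ far exceeds the logarithmic threshold $F$ requires of its second source), giving roughly $\Ext'(x,y)=\Ext(x,F(y,x))$. The independence of $X$ and $Y$, an explicit hypothesis of the lemma, is exactly what licenses the two-source extraction, and the strongness of $F$ in its second argument is what controls the correlation between the derived seed $F(Y,X)$ and $X$ so that the original extractor's guarantee can still be invoked. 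If you want to repair your write-up, this substitution of the second source --- $X$ in place of $Y_2$ --- is the missing idea; the remaining parameter bookkeeping and the $2\eps$ error bound via a hybrid are then broadly as you describe, except that you must additionally justify feeding $\Ext$ a seed that is correlated with $X$, which is precisely where the strongness of $F$ enters.
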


By applying this lemma to the $1$-bit extractor given in
\propref{prop:codeextractor}, we obtain the following $1$-bit
extractor.

\begin{cor}
  \label{cor:razNcode}
  For any $\eps > 0$ and $n \in \N$ there exists a $(k,\eps)$-strong
  extractor $\Ext_{n,\eps} : \{0,1\}^n \times \{0,1\}^d \to \{0,1\}$
  requiring a seed with min-entropy $\left(\frac{1}{2}+\beta\right)d$, where
    $d = O(\frac{1}{\beta}\log (n / \eps))$ and $k = 3 \log 1/\eps+3$.
\end{cor}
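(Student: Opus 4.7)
The plan is simply to compose the two results already in hand: take the one-bit extractor from \propref{prop:codeextractor} and feed it into Raz's transformation from \lemref{lem:raz}, tracking how the parameters transform.

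First, I would invoke \propref{prop:codeextractor} with error parameter $\eps/2$, which yields a $(k_0,\eps/2)$-strong extractor $\Ext_0 : \{0,1\}^n \times \{0,1\}^t \to \{0,1\}$ with seed length $t = O(\log(n/\eps)) = O(\log(n/\eps))$ (absorbing the constant from halving $\eps$) and min-entropy threshold $k_0 = 3\log 1/(\eps/2) = 3\log 1/\eps + 3$. Then I would apply \lemref{lem:raz} to $\Ext_0$: this produces a new extractor $\Ext_{n,\eps}$ on the same input alphabet and with the same min-entropy threshold $k = k_0 = 3\log 1/\eps + 3$, but whose seed need only satisfy $\Hmin{Y} \geq (\tfrac12 + \beta)d$ for a seed of length $d = 8t/\beta = O(\tfrac{1}{\beta}\log(n/\eps))$. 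The error incurred by the transformation is a factor of $2$, bringing the total error from $\eps/2$ to $\eps$, exactly as claimed.

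That is the entire proof; there is no real obstacle since both ingredients are black-boxed. The only subtlety worth being explicit about is the factor-of-two bookkeeping on the error (to end up with $\eps$ rather than $2\eps$ in the final statement) and noting that the additive constant $+3$ in the min-entropy threshold is the price of that rescaling, since the threshold from \propref{prop:codeextractor} is $3\log 1/(\eps/2)$. One could alternatively state the result with threshold $3\log 1/\eps$ and error $2\eps$; the version given in the statement just normalizes differently.
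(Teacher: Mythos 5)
Your proposal is correct and matches the paper's (implicit) argument exactly: the corollary is obtained by applying Raz's transformation (\lemref{lem:raz}) to the list-decoding-based one-bit extractor of \propref{prop:codeextractor} instantiated with error $\eps/2$, so that the doubled error is $\eps$ and the threshold becomes $3\log(2/\eps) = 3\log 1/\eps + 3$, with seed length $d = 8t/\beta = O(\tfrac{1}{\beta}\log(n/\eps))$. Your bookkeeping of the factor of two and the $+3$ constant is precisely the right observation.
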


Plugging this and the weak design from \lemref{lem:logseedweakdesign}
in \thmref{thm:security.non-uniform}, we get the following extractor
with weak random seed.

\begin{cor}
  \label{cor:weak-random-seed}
   Let $\alpha > 0$ be a constant such that the source has min-entropy
  $\Hmin{X|E} = n^{\alpha}$, and the desired error is $\eps =
  \poly{1/n}$. Using the extractor $C : \{0,1\}^n
  \times \{0,1\}^t \to \{0,1\}$ from \corref{cor:razNcode} with error
  $\eps' = \frac{\eps^2}{9m^2}$ and the weak $(t,r)$-design
  $S_1,\dotsc,S_m \subset [d]$ from \lemref{lem:logseedweakdesign}
  with $r = n^\gamma$ for any $0 < \gamma < \alpha$, we have that
  \begin{align*}
    \Ext : \ & \{0,1\}^n  \times \{0,1\}^d \to \{0,1\}^m \\
     & (x,y) \mapsto C(x,y_{S_1}) \dotsb C(x,y_{S_m})
  \end{align*}
  is a quantum-proof $(n^{\gamma}m + 8 \log m + 8 \log 1/ \eps +
  O(1),\eps)$-strong extractor with an $s$-bit weak random seed, where
  the seed has length $d = O\left(\frac{1}{\beta^2\gamma}\log n
  \right)$ and min-entropy $s = \left(1 -
    \frac{\frac{1}{2}-\beta}{c}\right)d$, for some constant
  $c$.\footnote{If we work out the exact constant, we find that $c
    \approx d/t \approx \frac{8(1+4a)}{\beta\gamma\ln 2}$, for $\eps =
    n^{-a}$.}
\end{cor}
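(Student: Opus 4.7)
The plan is to invoke Theorem~\ref{thm:security.non-uniform} with the one-bit extractor of Corollary~\ref{cor:razNcode} and the weak design of Lemma~\ref{lem:logseedweakdesign}, and then simplify the resulting parameters. The whole argument is essentially a parameter-matching exercise; the only step that needs any care is the seed min-entropy bound.

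Concretely, I would take the one-bit extractor $C$ with error $\eps' := \eps^2/(9m^2)$, so that the $6m\sqrt{\eps'}$ factor appearing in Theorem~\ref{thm:security.non-uniform} collapses to $O(\eps)$. Since $\eps = \poly{1/n}$ and $m \le \poly{n}$, we have $\log 1/\eps' = O(\log n)$, hence $C$ has seed length $t = O(\frac{1}{\beta}\log n)$, demands seed min-entropy $s_1 = (\frac{1}{2}+\beta)t$, and works against any source with $\Hmin{X|E} \ge k_1 := 3\log 1/\eps' + 3$. Paired with Lemma~\ref{lem:logseedweakdesign} instantiated at $r = n^\gamma$, this yields a weak $(t,r)$-design on $[d]$ with $d = O(t^2/\log r) = O(\frac{1}{\beta^2\gamma}\log n)$, already matching the claimed total seed length.

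Theorem~\ref{thm:security.non-uniform} then certifies a quantum-proof $(k_1 + rm + \log 1/\eps',\, 6m\sqrt{\eps'})$-strong extractor whose seed must have min-entropy at least $d - \bigl(t - s_1 - \log \tfrac{1}{3\sqrt{\eps'}}\bigr)$. Expanding $k_1$, substituting $r = n^\gamma$, and using $\log 1/\eps' = 2\log m + 2\log 1/\eps + O(1)$ collapses the min-entropy threshold to $n^\gamma m + 8\log m + 8\log 1/\eps + O(1)$, and reduces the error to $O(\eps)$ as desired. The only step that requires real attention is the seed min-entropy: substituting $s_1 = (\frac{1}{2}+\beta)t$ rewrites this bound as $\bigl(1 - \tfrac{1/2-\beta}{c}\bigr)d + \log\tfrac{1}{3\sqrt{\eps'}}$, where $c := d/t$ is a constant depending only on $\beta$ and $\gamma$ (and on the exponent in $\eps = n^{-a}$). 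The potential obstacle is that the additive $\log\tfrac{1}{3\sqrt{\eps'}}$ term is of order $\log n$, hence of the same order as $d$ itself; however, since the relative loss $(1/2-\beta)/c$ is a fixed positive constant, this additive contribution can be absorbed by a small upward adjustment of $c$, delivering the stated form $s = \bigl(1 - \tfrac{1/2-\beta}{c}\bigr)d$. Once this cosmetic simplification is performed, the corollary follows directly from Theorem~\ref{thm:security.non-uniform}.
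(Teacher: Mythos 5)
Your proposal is correct and follows exactly the route the paper takes: the paper gives no explicit proof beyond instantiating Theorem~\ref{thm:security.non-uniform} with the one-bit extractor of Corollary~\ref{cor:razNcode} and the weak design of Lemma~\ref{lem:logseedweakdesign}, and your parameter bookkeeping (threshold, error, seed length) matches. You are in fact slightly more careful than the paper on the one delicate point, the additive $\log\frac{1}{3\sqrt{\eps'}} = \Theta(\log n)$ term in the seed min-entropy requirement, which the paper's footnote ($c \approx d/t$) silently drops but which, as you observe, is correctly absorbed by taking the constant $c$ somewhat larger than $d/t$.
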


Choosing $\beta$ and $\gamma$ to be constants results in a seed of
length $d = O(\log n)$ with a possible entropy-loss linear in
$d$. The output length is the same as in \secref{subsec:logseed}, $m =
n^{\alpha - \gamma} - o(1) = \Hmin{X|E}^{1 - \frac{\gamma}{\alpha}} -
o(1)$.

If we are interested in extracting all the min-entropy of the source,
we can combine \lemref{lem:raz} with the extractor from
\secref{subsec:optimalentropyloss}. This results in a new extractor
with seed length $d = O(\log^3 n)$ and seed min-entropy $s = d -
O(\sqrt[3]{d})$.

\section{Outlook}
\label{sec:others}

There exist many results modifying and improving Trevisan's
extractor. We briefly describe a few of them here, and refer to
\cite{Sha02} for a more extensive review.

Some of these constructions still follow the ``design and $1$-bit
extractor'' pattern --- hence our work implies that they are
immediately quantum-proof with roughly the same parameters --- e.g.,
the work of Raz et al.~\cite{RRV02} and Lu~\cite{Lu04}, which were
mentioned in \secref{sec:plugging} and correspond to modifications of
the design and $1$-bit extractor respectively. Other results such as
\cite{RRV02,TZS06,SU05} replace the binary list-decoding codes with
multivariate codes over a field $F$. Raz et al.~\cite{RRV02} use this
technique to reduce the dependence of the seed on the error from
$O(\log^2 1/\eps)$ to $O(\log 1/\eps)$. Ta-Shma et al.~\cite{TZS06}
and Shaltiel and Umans~\cite{SU05} reduce the size of the seed to $d
\leq 2 \log n$ in several constructions with different parameters for
the min-entropy. In these constructions the connection to $1$-bit
extractors is not clear anymore, and it is therefore not guaranteed
that these extractors are quantum-proof.

Raz et al.~\cite{RRV02} extract a little more randomness than we do in
\secref{subsec:optimalentropyloss}. They achieve this by composing (in
the sense described in \appendixref{subsec:compsing-extractors}) the scheme
of \corref{cor:optimalentropyloss} with an extractor by Srinivasan and
Zuckerman~\cite{SZ99}, which has an optimal entropy loss of $\Delta =
2 \log 1/\eps + O(1)$. In the presence of quantum side information
this extractor has been proven to have an entropy loss of $\Delta = 4
\log 1/\eps + O(1)$ in \cite{TSSR10}, hence our slightly weaker result
in \corref{cor:optimalentropyloss2}, which can possibly be improved.

Impagliazzo et al.~\cite{ISW00} and then Ta-Shma et al.~\cite{TUZ01}
modify Trevisan's extractor to work for a sub-polynomial entropy
source, still using a seed of size $d = O(\log n)$. Ta-Shma et
al.~\cite{TUZ01} achieve a construction which can extract all the
min-entropy $k$ of the source with such a seed length, for some $k =
o(n)$. While it is unclear whether these modifications preserve the
``design and $1$-bit extractor'' structure, it is an interesting open
problem to analyze them in the context of quantum side information.

Another research direction consists in making these constructions
practically implementable. Whether the extractor is used for privacy
amplification~\cite{BBR88,BBCM95}, generating true
randomness~\cite{XQMXZL11}, or for randomness recycling~\cite{IZ89},
the extractor has to have a running time which makes it useful. This
does not seem to be the case of Trevisan's
construction~\cite{Sol10}. An important open problem is thus to find
variations which are practical to execute.

It is also of great interest to study quantum-proof two-source
extractors, that is, extractors which can be applied to two
independent sources, each of which is correlated to independent
quantum side information. This has so far only been studied by Roy and
Kempe~\cite{KK10}, and we refer to their work for more details and
open problems.

% \section*{Acknowledgments}

% RR acknowledges support from the Swiss National Science Foundation
% (grant No. 200021-119868). TV is supported in part by NSF grant CCR-0829928.

\appendix
\appendixpage
% \addappheadtotoc

\section{More on extractors}
\label{sec:more}

\subsection{Weak random seed}
\label{subsec:weak-seed}

In \secref{subsec:extractors.defs} we defined extractors as functions
which take a uniformly random seed. This is the most common way of
defining them, but not a necessary condition. Instead we can consider
extractors which use a seed which is only weakly random, but with
bounded min-entropy. We extend \defref{def:extractor} this way.
\begin{deff}[strong extractor with weak random seed \cite{Raz05}]
  \label{def:extractor.weak-seed}
  A function $\Ext: \{0,1\}^n \times \{0,1\}^d \to \{0,1\}^m$ is a
  \emph{$(k,\eps)$-strong extractor with an $s$-bit seed}, if for all
  distributions $X$ with $\Hmin{X} \geq k$ and any seed $Y$
  independent from $X$ with $\Hmin{Y} \geq s$, we have \[\frac{1}{2}
  \trnorm{ \rho_{\Ext(X,Y)Y} - \rho_{U_m} \tensor \rho_Y} \leq \eps, \]
  where $\rho_{U_m}$ is the fully mixed state on a system of dimension
  $2^m$.
\end{deff}

If quantum side information about the input is present in a system
$E$, then as before, we require the seed and the output to be
independent from that side-information.
\begin{deff}[quantum-proof strong extractor with weak random seed]
  \label{def:extractorwithadversary.weak-seed}
  A function $\Ext: \{0,1\}^n \times \{0,1\}^d \to \{0,1\}^m$ is a
  \emph{quantum-proof $(k,\eps)$-strong extractor with an $s$-bit
    seed}, if for all states $\rho_{XE}$ classical on $X$ with
  $\Hmin[\rho]{X|E} \geq k$, and for any seed $Y$ independent from
  $XE$ with $\Hmin{Y} \geq s$, we have \[\frac{1}{2}
  \trnorm{\rho_{\Ext(X,Y)YE} - \rho_{U_m} \tensor \rho_Y \tensor
    \rho_E} \leq \eps, \] where $\rho_{U_m}$ is the fully mixed
  state on a system of dimension $2^m$.
\end{deff}

\lemref{lem:KT08} says that any extractor will work with roughly the
same parameters when classical side information about the input $X$ is
present. The same holds in the case of classical side information
$Z$ about the seed $Y$.

\begin{lem}
  \label{lem:non-uniform-seed}
  Let $\Ext : \{0,1\}^n \times \{0,1\}^d \to \{0,1\}^m$ be a
  quantum-proof $(k,\eps)$-strong extractor with an $s$-bit seed. Then
  for any classical $X$, $Y$ and $Z$, and quantum $E$, such that $XE$
  and $Y$ are independent, $Y \leftrightarrow Z \leftrightarrow E$
  form a Markov chain,\footnote{A ccq state $\rho_{XYE}$ forms a
    Markov chain $X \leftrightarrow Y \leftrightarrow E$ if it can be
    expressed as $\rho_{XYE} = \sum_{x,y}
    P_{XY}(x,y) \proj{x,y} \tensor \rho_E^y.$} $\Hmin{Y|Z} \geq s +
  \log 1/\eps$, and for all $z \in \cZ$, $\Hmin{X|EZ=z} \geq k$, we have
  \[ \frac{1}{2} \trnorm{ \rho_{\Ext(X,Y)YZE} - \rho_{U} \tensor
    \rho_{YZE}} \leq 2\eps.\]
\end{lem}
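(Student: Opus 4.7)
The plan is to reduce the bound to a per-$z$ statement and apply the extractor's security criterion separately for each value of the classical register $Z$. Using the Markov chain $Y\leftrightarrow Z\leftrightarrow E$, the reference state factorizes as $\rho_{YZE}=\sum_z P_Z(z)\proj{z}\otimes\rho_{Y|Z=z}\otimes\rho_{E|Z=z}$, so both $\rho_{\Ext(X,Y)YZE}$ and $\rho_U\otimes\rho_{YZE}$ are block-diagonal in the classical register $Z$ and the trace distance becomes a convex combination over $z$:
\[ \tfrac{1}{2}\trnorm{\rho_{\Ext(X,Y)YZE}-\rho_U\otimes\rho_{YZE}} \;=\; \sum_z P_Z(z)\cdot\tfrac{1}{2}\trnorm{\rho_{\Ext(X,Y)YE|Z=z}-\rho_U\otimes\rho_{Y|Z=z}\otimes\rho_{E|Z=z}}. \]

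I then separate the $z$'s by whether the conditional seed has enough entropy. Since $Z$ is classical, the paper's conditional min-entropy gives $2^{-\Hmin{Y|Z}}=\sum_z P_Z(z)\,2^{-\Hmin{Y|Z=z}}$, so the hypothesis $\Hmin{Y|Z}\geq s+\log 1/\eps$ translates to $\sum_z P_Z(z)\,2^{-\Hmin{Y|Z=z}}\leq\eps\cdot 2^{-s}$, and Markov's inequality yields $P_Z(\cZ_{\mathrm{bad}})\leq\eps$ for the set $\cZ_{\mathrm{bad}}\coloneqq\{z:\Hmin{Y|Z=z}<s\}$. For every $z\notin\cZ_{\mathrm{bad}}$ I have $\Hmin{Y|Z=z}\geq s$ together with $\Hmin{X|E,Z=z}\geq k$ (by the second hypothesis). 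The Markov chain gives $Y\perp E|Z=z$, and combined with the global independence $Y\perp XE$ the conditional state $\rho_{XYE|Z=z}$ satisfies the hypotheses of a quantum-proof $(k,\eps)$-strong extractor with $s$-bit seed; its security bound then yields $\tfrac{1}{2}\trnorm{\cdot}\leq\eps$ per good $z$. Bounding the bad $z$'s trivially by $1$, the total is at most $(1-P_Z(\cZ_{\mathrm{bad}}))\,\eps+P_Z(\cZ_{\mathrm{bad}})\leq 2\eps$, as required.

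The delicate step, and the main obstacle, is the independence claim used to invoke the extractor: from the unconditional $Y\perp XE$ and the Markov chain $Y\perp E|Z$ one needs to conclude $Y\perp(X,E)|Z=z$ for good $z$. Purely classically, $Y\perp X$ together with $Y\perp E|Z$ does not force $Y\perp X|Z$ (take $X,Y$ uniform independent bits and $Z=X\oplus Y$), so the Markov hypothesis has to do real work here. In the lemma's main application (Theorem~\ref{thm:security.non-uniform}), $Z$ plays the role of a substring $W$ of the seed $Y$, making $Z$ a deterministic function of $Y$, and the factorization $\rho_{XYE|Z=z}=\rho_{XE}\otimes\rho_{Y|Z=z}$ then follows directly from $Y\perp XE$. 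The plan is to use the Markov chain assumption precisely to recover this conditional factorization in the general statement, after which the argument in the second paragraph yields the claimed bound.
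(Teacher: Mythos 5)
Your proof is the paper's proof: the same block-diagonal decomposition of the trace distance over the values of $Z$, the same application of Markov's inequality to $2^{-\Hmin{Y|Z}}=\E_z\bigl[2^{-\Hmin{Y|Z=z}}\bigr]$ to isolate a bad set of probability at most $\eps$, and the same per-$z$ invocation of the extractor definition followed by the $\eps+\eps$ accounting.

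The step you flag as the ``main obstacle'' is indeed the only delicate point, and you should know that the paper's own proof does not resolve it either: it derives only $\rho_{YE|Z=z}=\rho_{Y|Z=z}\tensor\rho_{E|Z=z}$ from the Markov chain and then silently applies Definition~\ref{def:extractorwithadversary.weak-seed} to the source $\rho_{XE|Z=z}$ with seed $\rho_{Y|Z=z}$, which requires the full factorization $\rho_{XYE|Z=z}=\rho_{XE|Z=z}\tensor\rho_{Y|Z=z}$. Your counterexample ($X,Y$ independent uniform bits, $Z=X\oplus Y$, $E$ trivial) shows correctly that the hypotheses as literally stated do not imply this; but for the same reason your stated plan --- to ``use the Markov chain assumption to recover this conditional factorization'' --- cannot succeed, since the chain $Y\leftrightarrow Z\leftrightarrow E$ makes no reference to $X$ at all. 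The resolution is the one you already identify for the application: in Theorem~\ref{thm:security.non-uniform} the register $Z$ is the substring $W$ of the seed, hence a deterministic function of $Y$, so $(Y,Z)$ is independent of $XE$ and conditioning on $Z=z$ leaves $\rho_{XE}$ untouched while the Markov chain handles the correlation between $Z$ and the extra advice $G$ absorbed into $E$. For the lemma to be correct as a freestanding general statement one should read (or strengthen) the hypotheses accordingly, e.g.\ require $Y\leftrightarrow Z\leftrightarrow XE$ in place of, or in addition to, the stated chain, or require $Z$ to be determined by $Y$. With that reading your argument is complete and coincides with the paper's.
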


\begin{proof}  For any two classical systems $Y$ and $Z$, we
have \[2^{-\Hmin{Y|Z}} = \E_{z \leftarrow Z}
\left[2^{-\Hmin{Y|Z=z}}\right],\] so by Markov's inequality, \[\Pr_{z
  \leftarrow Z} \left[ \Hmin{Y|Z=z} \leq \Hmin{Y|Z} - \log 1/\eps
\right] \leq \eps.\] And since $Y \leftrightarrow Z \leftrightarrow E$
form a Markov chain, we have for all $z \in \cZ$, \[ \rho_{YE|Z=z} =
\rho_{Y|Z=z} \tensor \rho_{E|Z=z}.\] Hence
  \begin{align*}
    & \frac{1}{2} \trnorm{\rho_{\Ext(X,Y)YEZ} - \rho_{U} \tensor
      \rho_{YEZ}} \\ & \qquad \qquad = \frac{1}{2} \sum_{z \in \cZ} P_Z(z)
    \trnorm{\rho_{\Ext(X,Y)YE|Z=z} - \rho_{U} \tensor
      \rho_{YE|Z=z}} \\
    & \qquad \qquad = \frac{1}{2} \sum_{z \in \cZ} P_Z(z)
    \trnorm{\rho_{\Ext(X,Y)YE|Z=z} - \rho_{U} \tensor \rho_{Y|Z=z}
      \tensor \rho_{E|Z=z}} \leq 2\eps. \qedhere
  \end{align*}\end{proof}

The case of quantum side information correlated to both the
input and the seed is out of the scope of this work.

\subsection{Composing extractors}
\label{subsec:compsing-extractors}

If an extractor does not have optimal entropy loss, a useful approach
to extract more entropy is to apply a second extractor to the original
input, to extract the randomness that remains when the output
of the first extractor is known. This was first proposed in the
classical case by Wigderson and Zuckerman~\cite{WZ99}, and improved by
Raz et al.~\cite{RRV02}. K\"onig and Terhal~\cite{KT08} gave the first
quantum version for composing $m$ times quantum $1$-bit extractors. We
slightly generalize the result of K\"onig and Terhal~\cite{KT08} to
the composition of arbitrary quantum extractors.

\begin{lem}
  \label{lem:composition}
  Let $\Ext_1: \{0,1\}^n \times \{0,1\}^{d_1} \to \{0,1\}^{m_1}$ and
  $\Ext_2: \{0,1\}^n \times \{0,1\}^{d_2} \to \{0,1\}^{m_2}$ be
  quantum-proof $(k,\eps_1)$- and $(k-m_1,\eps_2)$-strong
  extractors. Then the composition of the two, namely
  \begin{align*}
    \Ext_3 : & \{0,1\}^n \times \{0,1\}^{d_1} \times \{0,1\}^{d_2} \to
    \{0,1\}^{m_1} \times \{0,1\}^{m_2} \\
    & (x,y_1,y_2) \mapsto (\Ext_1(x,y_1),\Ext_2(x,y_2)),
  \end{align*}
  is a quantum-proof $(k,\eps_1+\eps_2)$-strong extractor.
\end{lem}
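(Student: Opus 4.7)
The plan is a standard triangle-inequality / hybrid argument, passing through an intermediate state where we replace the output of $\Ext_1$ by a fully mixed state while leaving the output of $\Ext_2$ unchanged. Let $\rho_{XE}$ be any classical-on-$X$ state with $\Hmin[\rho]{X|E}\geq k$, let $Y_1,Y_2$ be independent uniform seeds of lengths $d_1,d_2$ (also independent of $XE$), and write $Z_i \coloneqq \Ext_i(X,Y_i)$. I want to show
\[
  \tfrac{1}{2}\trnorm{\rho_{Z_1 Z_2 Y_1 Y_2 E} - \rho_{U_{m_1}}\tensor\rho_{U_{m_2}}\tensor\rho_{Y_1}\tensor\rho_{Y_2}\tensor\rho_E} \leq \eps_1+\eps_2.
\]

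First I would apply $\Ext_1$ directly. Since $Y_1$ is uniform and independent of $XE$, and $\Ext_1$ is a quantum-proof $(k,\eps_1)$-strong extractor, \defref{def:extractorwithadversary} gives
\[
  \tfrac{1}{2}\trnorm{\rho_{Z_1 Y_1 E} - \rho_{U_{m_1}}\tensor\rho_{Y_1}\tensor\rho_E} \leq \eps_1.
\]
Tensoring both sides with the fixed state $\rho_{U_{m_2}}\tensor\rho_{Y_2}$ (which leaves the trace distance unchanged) bounds the distance between the ``both-ideal'' state and the ``$Z_2$-real, $Z_1$-ideal'' hybrid by $\eps_1$.

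Next I would apply $\Ext_2$, treating $(Z_1,Y_1,E)$ as the side information about $X$. Here is the one point that needs care: to invoke the $(k-m_1,\eps_2)$ guarantee I must check that $\Hmin{X \mid Z_1 Y_1 E}\geq k-m_1$. Since $Y_1$ is independent of $XE$ we have $\Hmin{X\mid Y_1 E}=\Hmin{X\mid E}\geq k$, and since $Z_1$ is a classical register of length $m_1$, the chain rule (\lemref{lem:min-entropy.rule.3}) gives $\Hmin{X\mid Z_1 Y_1 E}\geq k-m_1$. Moreover $Y_2$ is uniform and independent of $(X,Z_1,Y_1,E)$, so \defref{def:extractorwithadversary} applied to $\Ext_2$ with side information system $Z_1 Y_1 E$ yields
\[
  \tfrac{1}{2}\trnorm{\rho_{Z_2 Y_2 Z_1 Y_1 E} - \rho_{U_{m_2}}\tensor\rho_{Y_2}\tensor\rho_{Z_1 Y_1 E}} \leq \eps_2.
\]

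Finally I would combine the two bounds by the triangle inequality, using the hybrid $\rho_{U_{m_2}}\tensor\rho_{Y_2}\tensor\rho_{Z_1 Y_1 E}$ as the intermediate state; the second inequality above controls the distance from the real joint state, and the first (tensored with $\rho_{U_{m_2}}\tensor\rho_{Y_2}$) controls the distance from the target product state. The total is $\eps_1+\eps_2$, as required. There is no real obstacle — the only non-trivial move is the min-entropy bookkeeping in the second step, namely recognizing that conditioning on the classical $m_1$-bit output $Z_1$ costs at most $m_1$ bits of min-entropy, which is exactly the gap between the two extractors' hypotheses.
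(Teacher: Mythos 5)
Your proof is correct and takes essentially the same route as the paper's: the same intermediate hybrid $\rho_{U_{m_2}}\tensor\rho_{Y_2}\tensor\rho_{Z_1Y_1E}$, the same use of \lemref{lem:min-entropy.rule.3} to verify $\Hmin{X|Z_1Y_1E}\geq k-m_1$ before invoking $\Ext_2$, and the same triangle inequality. (One cosmetic slip: your opening sentence describes the hybrid as idealizing the output of $\Ext_1$ while keeping that of $\Ext_2$, whereas your displayed formulas correctly do the opposite --- and the order matters, since only $\Ext_2$ is assumed to tolerate the reduced min-entropy $k-m_1$.)
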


\begin{proof}
  We need to show that for any state $\rho_{XE}$ with $\Hmin{X|E} \geq
  k$, \begin{equation} \label{eq:composition.1} \frac{1}{2} \trnorm{
      \rho_{\Ext_1(X,Y_1)\Ext_2(X,Y_2) Y_1Y_2E} - \rho_{U_1} \tensor
      \rho_{U_2}\tensor \rho_{Y_1} \tensor \rho_{Y_2} \tensor \rho_E}
    \leq \eps_1 + \eps_2.\end{equation}
  The left-hand side of \eqnref{eq:composition.1} can be upper-bounded
  by
  \begin{multline} \label{eq:composition.2} \frac{1}{2} \trnorm{
      \rho_{\Ext_1(X,Y_1)Y_1E} \tensor \rho_{U_2} \tensor \rho_{Y_2} -
      \rho_{U_1} \tensor \rho_{Y_1} \tensor \rho_E \tensor
      \rho_{U_2}\tensor \rho_{Y_2} } \\+ \frac{1}{2} \trnorm{
      \rho_{\Ext_2(X,Y_2)Y_2 \Ext_1(X,Y_1)Y_1E} - \rho_{U_2} \tensor
      \rho_{Y_2} \tensor \rho_{\Ext_1(X,Y_1)Y_1E}}.\end{multline} By
  the definition of $\Ext_1$ the first term in
  \eqnref{eq:composition.2} is upper-bounded by $\eps_1$.  For the
  second term we use \lemref{lem:min-entropy.rule.3} and
  get \begin{multline*} \Hmin{X|\Ext_1(X,Y_1)Y_1E} \geq \Hmin{X|Y_1E}
    - H_0(\Ext_1(X,Y_1)) \\ = \Hmin{X|E} - H_0(\Ext_1(X,Y_1)) \geq k -
    m_1.\end{multline*} By the definition of $\Ext_2$ the second term
  in \eqnref{eq:composition.2} can then be upper-bounded by
  $\eps_2$. \end{proof}

\section{Technical lemmas}
\label{sec:lemmas}

\subsection{Min-entropy chain rules}

We use the following ``chain-rule type'' statement about the
min-entropy. The proofs for the two first can be found in
\cite{Ren05}.

\begin{lem}[\protect{\cite[Lemma 3.1.10]{Ren05}}]
  \label{lem:min-entropy.rule.1}
  For any state $\rho_{ABC}$,
  \[\Hmin{A|BC} \geq \Hmin{AC|B} - H_0(C),\] where $H_0(C) = \log
  \rank{\rho_C}$.
\end{lem}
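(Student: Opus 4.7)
The plan is to work directly from the operator-inequality definition of conditional min-entropy in \defref{def:min-entropy} and construct an explicit witness state on $BC$ from a witness on $B$.

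First I would set $\lambda = \Hmin[\rho]{AC|B}$ and invoke the definition to obtain some $\sigma_B \in \no{B}$ satisfying
\[
2^{-\lambda}\, \1_{AC} \otimes \sigma_B \;\geq\; \rho_{ABC}.
\]
Let $P_C$ denote the projector onto the support of $\rho_C$, so that $\tr P_C = \rank \rho_C = 2^{H_0(C)}$. The natural candidate for a witness on $BC$ is then
\[
\tau_{BC} \;\coloneqq\; 2^{-H_0(C)}\, \sigma_B \otimes P_C,
\]
which is normalized and hence lies in $\no{BC}$.

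Next I would show $2^{-(\lambda - H_0(C))}\,\1_A \otimes \tau_{BC} \geq \rho_{ABC}$, which rearranges to
\[
2^{-\lambda}\, \1_A \otimes \sigma_B \otimes P_C \;\geq\; \rho_{ABC}.
\]
To get this from the inequality above, I would conjugate the starting inequality by the projector $\1_A \otimes \1_B \otimes P_C$. On the left-hand side this simply replaces $\1_C$ by $P_C$ since $P_C \1_C P_C = P_C$. On the right-hand side, because $\rho_{ABC}$ is supported on $\hilbert_A \otimes \hilbert_B \otimes (P_C \hilbert_C)$, the conjugation acts as the identity on $\rho_{ABC}$. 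Operator inequalities are preserved under conjugation by any Hermitian operator (in particular a projector), so the desired inequality follows. Plugging $\tau_{BC}$ back into the definition of $\Hmin{A|BC}$ yields $\Hmin{A|BC} \geq \lambda - H_0(C)$, as claimed.

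The only mildly subtle step is the support argument for $\rho_{ABC}$, i.e., justifying that $(\1_{AB} \otimes P_C)\rho_{ABC}(\1_{AB} \otimes P_C) = \rho_{ABC}$. This is standard: since $\rho_C = \trace[AB]{\rho_{ABC}}$, any vector in $\ker \rho_C$ lifted to $\hilbert_{ABC}$ lies in $\ker \rho_{ABC}$ (because $\rho_{ABC}$ is positive semidefinite and its partial trace along $AB$ dominates $\rho_C$ in the relevant sense). Beyond this, the argument is a direct manipulation of the defining inequality, so I would expect no real obstacle.
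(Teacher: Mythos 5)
Your proof is correct. The paper itself gives no proof of this lemma---it only cites \cite[Lemma~3.1.10]{Ren05}---and your argument (take an optimal $\sigma_B$ witnessing $\Hmin{AC|B}$, promote it to the normalized witness $2^{-H_0(C)}\,\sigma_B\otimes P_C$ on $BC$, and use that $\rho_{ABC}$ is left invariant by conjugation with $\1_{AB}\otimes P_C$ because its reduced state on $C$ is supported on $P_C$) is exactly the standard proof from that reference, with all steps, including the support argument, checking out.
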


\begin{lem}[\protect{\cite[Lemma 3.1.9]{Ren05}}]
  \label{lem:min-entropy.rule.2}
  For any state $\rho_{ABZ}$ classical on $Z$, \[\Hmin{AZ|B} \geq
  \Hmin{A|B}.\]
\end{lem}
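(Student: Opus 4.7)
The plan is to work directly from \defref{def:min-entropy} and exploit the fact that a normalization $\sigma_B$ which witnesses $\Hmin{A|B} = \lambda$ also witnesses the desired lower bound on $\Hmin{AZ|B}$; no new normalization on the $Z$ register is needed.

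Concretely, I would start by letting $\lambda \coloneqq \Hmin[\rho]{A|B}$ and invoking the definition to obtain $\sigma_B \in \no{B}$ with
\[
  2^{-\lambda}\, \1_A \tensor \sigma_B \;\geq\; \rho_{AB}.
\]
Using the classicality of $Z$, I would next write
\[
  \rho_{ABZ} \;=\; \sum_{z} p_z\, \rho_{AB}^z \tensor \proj{z}_Z, \qquad \rho_{AB} = \sum_z p_z\, \rho_{AB}^z,
\]
with each $\rho_{AB}^z \in \no{AB}$. Since the summands $p_z \rho_{AB}^z$ are positive semidefinite, we have $\rho_{AB} \geq p_z \rho_{AB}^z$ for every $z$, so the chosen $\sigma_B$ satisfies $2^{-\lambda}\, \1_A \tensor \sigma_B \geq p_z \rho_{AB}^z$ for each $z$ individually.

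From there, the conclusion follows by tensoring the per-$z$ inequality with the rank-one projector $\proj{z}_Z \geq 0$ and summing over $z$:
\[
  2^{-\lambda}\, \1_{AZ} \tensor \sigma_B
  \;=\; \sum_z 2^{-\lambda}\, \1_A \tensor \sigma_B \tensor \proj{z}_Z
  \;\geq\; \sum_z p_z\, \rho_{AB}^z \tensor \proj{z}_Z
  \;=\; \rho_{ABZ},
\]
using the fact that the tensor product of PSD operators is PSD and that sums of PSD operators are PSD. This exhibits the same $\sigma_B$ as a valid witness for the conditional min-entropy of $AZ$ given $B$, proving $\Hmin[\rho]{AZ|B} \geq \lambda = \Hmin[\rho]{A|B}$.

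There is no real obstacle here: the only subtlety is ensuring that the extension to the $Z$ register is done via $\1_Z = \sum_z \proj{z}_Z$, so that we can transfer the operator inequality fiber-by-fiber; this is exactly where classicality of $Z$ is used, and it is the reason the statement would fail for general quantum $Z$ (where $\rho_{AB} \not\geq p_z \rho_{AB}^z$ makes no sense without an eigenbasis).
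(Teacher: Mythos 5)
Your proof is correct and follows essentially the same route as the cited reference (the paper itself only cites \cite[Lemma~3.1.9]{Ren05} without reproducing the argument): decompose $\rho_{ABZ}$ along the classical basis of $Z$, observe that each block $p_z\,\rho_{AB}^z$ is dominated by $\rho_{AB}$ and hence by $2^{-\lambda}\1_A\tensor\sigma_B$, and reassemble using $\sum_z\proj{z}_Z\leq\1_Z$ to see that the same $\sigma_B$ witnesses the bound for $AZ|B$.
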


\begin{lem}
  \label{lem:min-entropy.rule.3}
  For any state $\rho_{ABZ}$ classical on $Z$, \[\Hmin{A|BZ} \geq \Hmin{A|B} - H_0(Z),\]
  where $H_0(Z) = \log \rank{\rho_Z}$.
\end{lem}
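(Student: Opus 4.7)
The plan is to reduce to the definition of conditional min-entropy and build the needed witness state by hand. Let $\lambda = \Hmin[\rho]{A|B}$, so by \defref{def:min-entropy} there exists a normalized $\sigma_B \in \no{B}$ with
\begin{equation*}
2^{-\lambda}\, \1_A \tensor \sigma_B \geq \rho_{AB}.
\end{equation*}
Our goal is to exhibit a normalized $\tau_{BZ} \in \no{BZ}$ witnessing that $\Hmin[\rho]{A|BZ} \geq \lambda - H_0(Z)$, i.e.
\begin{equation*}
2^{-(\lambda - H_0(Z))}\, \1_A \tensor \tau_{BZ} \geq \rho_{ABZ}.
\end{equation*}

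Since the state is classical on $Z$, write $\rho_{ABZ} = \sum_z P_Z(z)\, \rho_{AB}^z \tensor \proj{z}$, so that $\rho_B^Z$ is supported on a subspace of dimension $2^{H_0(Z)}$. The natural candidate is the product $\tau_{BZ} \coloneqq \sigma_B \tensor \pi_Z$, where $\pi_Z \coloneqq 2^{-H_0(Z)} \Pi_Z$ is the fully mixed state on the support of $\rho_Z$ (here $\Pi_Z$ is the projector onto that support). By construction $\tau_{BZ} \in \no{BZ}$.

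It then remains to verify the desired operator inequality block-by-block in the classical register $Z$. For each $z$ in the support of $\rho_Z$, the $\proj{z}$-block of the right-hand side $2^{-(\lambda - H_0(Z))}\, \1_A \tensor \sigma_B \tensor \pi_Z$ equals $2^{-\lambda}\, \1_A \tensor \sigma_B$, whereas the corresponding block of $\rho_{ABZ}$ is $P_Z(z)\, \rho_{AB}^z$. Since $P_Z(z)\, \rho_{AB}^z \leq \sum_{z'} P_Z(z')\, \rho_{AB}^{z'} = \rho_{AB} \leq 2^{-\lambda}\, \1_A \tensor \sigma_B$, the inequality holds on each such block; on the orthogonal complement of $\Pi_Z$ both sides vanish. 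Assembling the blocks gives the required operator inequality, and hence $\Hmin[\rho]{A|BZ} \geq \lambda - H_0(Z)$.

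The proof is essentially routine once one notes the right witness $\tau_{BZ} = \sigma_B \tensor \pi_Z$; the only mild subtlety is ensuring $\tau_{BZ}$ is normalized, which is why we pick the uniform state on the \emph{support} of $\rho_Z$ rather than on the entire $Z$-register. No other obstacle arises.
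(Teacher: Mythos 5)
Your proof is correct, but it takes a different route from the paper's. The paper disposes of this lemma in one line by chaining two imported facts: \lemref{lem:min-entropy.rule.1} ($\Hmin{A|BZ} \geq \Hmin{AZ|B} - H_0(Z)$, Renner's Lemma 3.1.10) and \lemref{lem:min-entropy.rule.2} ($\Hmin{AZ|B} \geq \Hmin{A|B}$ for classical $Z$), passing through the intermediate quantity $\Hmin{AZ|B}$. You instead work directly from \defref{def:min-entropy}, exhibiting the explicit witness $\tau_{BZ} = \sigma_B \tensor \pi_Z$ and verifying the operator inequality blockwise; the key steps --- that $\tr \pi_Z = 2^{-H_0(Z)}\rank{\rho_Z} = 1$, that classicality on $Z$ makes both sides block-diagonal in the $\{\ket{z}\}$ basis, and that $P_Z(z)\rho_{AB}^z \leq \rho_{AB}$ since the summands are positive --- are all sound. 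In effect you have inlined, for this special case, the proofs of the two cited lemmas. What the paper's route buys is brevity and reuse of standard chain rules; what yours buys is a self-contained argument that makes visible exactly where the $H_0(Z)$ penalty enters (the normalization of $\pi_Z$ on the support of $\rho_Z$) and where classicality of $Z$ is used (the block decomposition). Only a cosmetic quibble: in one place you write $\rho_B^Z$ where you clearly mean $\rho_Z$.
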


\begin{proof}
  Immediate by combining \lemref{lem:min-entropy.rule.1} and
  \lemref{lem:min-entropy.rule.2}.
\end{proof}

\subsection{Reduction step}
\label{sec:security.reduction}

To show that a player who can distinguish the output of $\Ext_C$
(defined in \defref{def:genericscheme} on \pref{def:genericscheme})
from uniform can also guess the output of the extractor $C$, we first
show that such a player can guess one of the bits of the output of
$\Ext_C$ given some extra classical information. This is a quantum
version of a result by Yao~\cite{Yao82}.

\begin{lem} \label{lem:Yao's-lem} Let $\rho_{ZB}$ be a cq-state,
  where $Z$ is a random variable on $m$-bit strings. If
  $\trnorm{\rho_{ZB} - \rho_{U_m} \tensor \rho_B} > \eps$, then there
  exists an $i \in [m]$ such that
  \begin{equation} \label{eq:Yao's-lem}
    \trnorm{\sum_{\substack{z \in \cZ \\ z_{i} = 0}} p_z
    \proj{z_{[i-1]}} \tensor \rho^z_B - \sum_{\substack{z \in \cZ \\ z_{i}
        = 1}} p_z \proj{z_{[i-1]}} \tensor \rho^z_B } >
  \frac{\eps}{m}.
  \end{equation}
\end{lem}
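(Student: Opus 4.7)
The plan is to prove this via a standard hybrid (telescoping) argument, which is the quantum analog of Yao's classical next-bit predictor theorem. For each $i \in \{0,1,\dotsc,m\}$, I would introduce a hybrid cq-state $\rho^{(i)}_{ZB}$ in which the first $i$ bits of $Z$ are distributed according to the marginal $P_{Z_{[i]}}$ (with $B$ correlated accordingly), and the last $m-i$ bits are independent uniform. Concretely,
\[ \rho^{(i)}_{ZB} \coloneqq \sum_{z_{[i]} \in \{0,1\}^i} P_{Z_{[i]}}(z_{[i]}) \proj{z_{[i]}} \tensor \rho^{z_{[i]}}_B \tensor \rho_{U_{m-i}}, \]
where $\rho^{z_{[i]}}_B$ is the state of $B$ conditioned on $Z_{[i]} = z_{[i]}$. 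By construction $\rho^{(m)}_{ZB} = \rho_{ZB}$ and $\rho^{(0)}_{ZB} = \rho_{U_m} \tensor \rho_B$.

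Next I would apply the triangle inequality to the telescoping sum
\[ \eps < \trnorm{\rho^{(m)}_{ZB} - \rho^{(0)}_{ZB}} \leq \sum_{i=1}^m \trnorm{\rho^{(i)}_{ZB} - \rho^{(i-1)}_{ZB}}, \]
so there exists $i \in [m]$ with $\trnorm{\rho^{(i)}_{ZB} - \rho^{(i-1)}_{ZB}} > \eps/m$. In both $\rho^{(i)}$ and $\rho^{(i-1)}$ the bits $Z_{i+1},\dotsc,Z_m$ appear as the same uniform tensor factor $\rho_{U_{m-i}}$ independent from the rest; hence these registers factor out of both states and the trace distance equals that of the reduced states on $Z_{[i]}B$.

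To finish, I would compute the difference on $Z_{[i]}B$ explicitly. Using $P_{Z_{[i-1]}}(z_{[i-1]}) \rho^{z_{[i-1]}}_B = \sum_{b} P_{Z_{[i]}}(z_{[i-1]}b) \rho^{z_{[i-1]}b}_B$, a short calculation gives
\[ \rho^{(i)}_{Z_{[i]}B} - \rho^{(i-1)}_{Z_{[i]}B} = \sum_{z_{[i-1]}} \tfrac{1}{2}\bigl(\proj{z_{[i-1]}0} - \proj{z_{[i-1]}1}\bigr) \tensor \Delta_{z_{[i-1]}}, \]
where $\Delta_{z_{[i-1]}} \coloneqq P_{Z_{[i]}}(z_{[i-1]}0)\rho^{z_{[i-1]}0}_B - P_{Z_{[i]}}(z_{[i-1]}1)\rho^{z_{[i-1]}1}_B$. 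The orthogonality of the computational basis states makes the trace norm split as a sum over $z_{[i-1]}$, yielding $\sum_{z_{[i-1]}} \trnorm{\Delta_{z_{[i-1]}}}$, which is exactly the trace norm appearing on the left-hand side of \eqnref{eq:Yao's-lem} after summing the conditional decomposition $\rho^{z_{[i-1]}b}_B$ back into the $\rho^z_B$ form. Combining with the lower bound from the triangle inequality completes the proof.

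There is no real obstacle; the only subtlety is verifying that the apparently weaker statement in \eqnref{eq:Yao's-lem} (which keeps only $z_{[i-1]}$ in the classical register, not $z_{[i]}$) matches the hybrid gap. This is resolved by the observation above that the last $m-i$ bits factor off as an independent uniform tensor factor in both hybrids and that the orthogonality of $\proj{z_{[i-1]}0}$ and $\proj{z_{[i-1]}1}$ makes the $\frac{1}{2}$ from the hybrid cancel correctly.
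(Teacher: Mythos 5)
Your proposal is correct and follows essentially the same route as the paper: the same hybrid states (your $\rho^{(i)}_{ZB}$ coincides with the paper's $\sigma_i$ after carrying out the sum over the uniform padding bits), the same telescoping/triangle-inequality step, and your explicit computation of the hybrid gap is precisely the ``rearranging'' the paper leaves to the reader. No gaps.
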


Using the fact that for any \emph{binary} random
    variable $X$ and quantum system $Q$ with $\rho_{XQ} =
    \sum_{i=0,1} p_i \proj{i} \tensor \rho^i_Q$, the following equality holds:
  $\trnorm{\rho_{XQ} - \rho_{U_1} \tensor \rho_Q } = \trnorm{p_0
    \rho^0_Q - p_1 \rho^1_Q}$, \eqnref{eq:Yao's-lem} can be
  rewritten as $\trnorm{\rho_{Z_{i[i-1]}B} - \rho_{U_1} \tensor
    \rho_{Z_{[i-1]}B}} > \frac{\eps}{m}$. \lemref{lem:Yao's-lem} can thus be
  interpreted as saying that if a player holding $B$ can distinguish
  $Z$ from uniform with probability greater than $\eps$, then there
  exists a bit $i \in [m]$ such that when given the previous $i-1$
  bits of $Z$, he can distinguish the \ith{i} bit of $Z$ from uniform
  with probability greater than $\frac{\eps}{m}$.

\begin{proof}
  The proof uses a hybrid argument. Let \[\sigma_i =
  \sum_{\substack{z \in \cZ \\ r \in \{0,1\}^m}} \frac{p_z}{2^m}
  \proj{z_{[i]}, r_{\{i+1,\dotsc,m\}}} \tensor \rho^z_B.\] Then
  \begin{align*}
    \eps & < \trnorm{\rho_{ZB} - \rho_{U_m} \tensor \rho_B} \\
      & = \trnorm{\sigma_m - \sigma_0} \\
      & \leq \sum_{i = 1}^m \trnorm{\sigma_i - \sigma_{i-1}} \\
      & \leq m \max_i  \trnorm{\sigma_i - \sigma_{i-1}}.
  \end{align*}

  By rearranging $\trnorm{\sigma_i - \sigma_{i-1}}$ we get the lhs of
  \eqnref{eq:Yao's-lem}.
\end{proof}

We now need to bound the size of this extra information, the
``previous $i-1$ bits'', and show that when averaging over all the
seeds of $\Ext_C$, we average over all the seeds of $C$, which means
that guessing a bit of the output of $\Ext_C$ corresponds to
distinguishing the output of $C$ from uniform. For the reader's 
convenience we now restate~\propref{prop:guessing-cx} and give its proof.

\begin{prop} \label{prop:guessing-cx-bis}[\propref{prop:guessing-cx}]
  Let $X$ be a classical random variable correlated to some quantum
  system $E$, let $Y$ be a (not necessarily uniform) seed, independent
  from $XE$, and let
  \begin{equation} \label{eq:guessing-cx.lem.1} \trnorm{\rho_{\Ext_C(X,Y)E}
      - \rho_{U_m} \tensor \rho_Y \tensor \rho_E} > \eps,
  \end{equation} where $\Ext_C$ is the extractor from
  \defref{def:genericscheme}. Then there exists a fixed partition of the
  seed $Y$ in two substrings $V$ and $W$, and a classical random
  variable $G$, such that $G$ has size $H_0(G) \leq rm$,
  where $r$ is one of the parameters of the weak design
  (\defref{def:weakdesign}), $V \leftrightarrow W
  \leftrightarrow G$ form a Markov chain, and
  \begin{equation} \label{eq:guessing-cx.lem.2}
     \trnorm{\rho_{C(X,V)VWGE} - \rho_{U_1} \tensor
         \rho_{VWGE}} > \frac{\eps}{m}.
   \end{equation}
\end{prop}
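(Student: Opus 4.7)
The plan is to combine the quantum Yao-type hybrid (Lemma~\ref{lem:Yao's-lem}) with the standard ``precomputed-block'' advice trick from the classical Trevisan/Raz et al.\ analysis, adapted so that the advice respects the required Markov structure $V\leftrightarrow W\leftrightarrow G$. First I would apply Lemma~\ref{lem:Yao's-lem} to the cq-state $\rho_{ZB}$ obtained by setting $Z=\Ext_C(X,Y)=C(X,Y_{S_1})\cdots C(X,Y_{S_m})$ and $B=YE$. The hypothesis~\eqref{eq:guessing-cx.lem.1} gives distinguishing advantage $>\eps$, so the lemma produces an index $i\in[m]$ with
\[
 \trnorm{\rho_{Z_i\,Z_{[i-1]}YE} - \rho_{U_1}\tensor \rho_{Z_{[i-1]}YE}} > \frac{\eps}{m}.
\]
Noting that $Z_i=C(X,Y_{S_i})$, this already isolates the underlying 1-bit extractor; what remains is to repackage $Y$ and the earlier bits $Z_{[i-1]}$ so that the seed of $C$ plays the role of $V$ and everything else lies in the conditioning.

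Second, I would fix the partition of the seed suggested by $S_i$: set $V\coloneqq Y_{S_i}$ and $W\coloneqq Y_{[d]\setminus S_i}$, so that $Y=(V,W)$. For each $j<i$ the block $Y_{S_j}$ decomposes into a part $Y_{S_j\cap S_i}$ (which is a substring of $V$) and a part $Y_{S_j\setminus S_i}$ (which is a substring of $W$). Therefore, once $W$ and $X$ are fixed, the bit $C(X,Y_{S_j})$ is a function of $V_{S_j\cap S_i}$ taking at most $2^{|S_j\cap S_i|}$ values. The advice $G$ I would define is the concatenation, over $j=1,\dots,i-1$, of the full lookup table of this function of $|S_j\cap S_i|$ bits. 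Then $G=g(X,W)$ is a deterministic function of $X$ and $W$, and given $(V,W,G)$ one can reconstruct every bit of $Z_{[i-1]}$ by table lookup. Applying the data-processing inequality for the trace distance to the quantum-to-classical map $(V,W,G,E)\mapsto (V,Z_{[i-1]},Y,E)$ transfers the advantage from the Yao step to
\[
 \trnorm{\rho_{C(X,V)VWGE} - \rho_{U_1}\tensor \rho_{VWGE}} > \frac{\eps}{m},
\]
which is~\eqref{eq:guessing-cx.lem.2}.

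Third, I would verify the two structural properties. The size bound follows directly from the weak design axiom: $H_0(G)\le \sum_{j=1}^{i-1}2^{|S_j\cap S_i|}\le rm$ by Definition~\ref{def:weakdesign}. For the Markov chain $V\leftrightarrow W\leftrightarrow G$, observe that $Y=(V,W)$ is independent of $X$ by hypothesis, hence $X$ is independent of $V$ conditioned on $W$; since $G$ is a deterministic function of $(X,W)$, this yields $P_{G|VW}(g|v,w)=\Pr[g(X,w)=g\mid V=v,W=w]=\Pr[g(X,w)=g\mid W=w]=P_{G|W}(g|w)$, as required. (The seed need not be uniform for this argument—only the independence of $Y$ from $XE$ is used.)

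The main obstacle, and the point where the weak design axiom has to be used in a precisely matched way, is the simultaneous control of two things in the construction of $G$: its total description length must be $\le rm$, and it must be a function only of $(X,W)$—never of $V$—so that the Markov condition holds. The lookup-table encoding achieves both because the $|S_j\cap S_i|$-bit input $V_{S_j\cap S_i}$ is enumerated rather than plugged in, so $V$ itself is not part of the advice. I do not expect any quantum-specific subtlety beyond the initial appeal to Lemma~\ref{lem:Yao's-lem}: once that reduces matters to a trace-distance statement with classical conditioning, the hybrid argument and the design-based advice construction proceed exactly as in the classical analysis of~\cite{Tre01,RRV02}.
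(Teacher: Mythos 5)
Your proposal is correct and follows essentially the same route as the paper's proof: apply the quantum hybrid lemma (Lemma~\ref{lem:Yao's-lem}) to isolate an index $i$, split the seed as $V=Y_{S_i}$, $W=Y_{[d]\setminus S_i}$, take as advice $G$ the lookup tables of the functions $v_{S_j\cap S_i}\mapsto C(X,Y_{S_j})$ for $j<i$, and use monotonicity of the trace distance (the paper phrases your data-processing step as ``providing a complete description of $g^{w,x}$ instead of its value at $v$ can only increase the trace distance'') together with the weak-design bound $\sum_{j<i}2^{|S_j\cap S_i|}\le rm$. Your explicit verification of the Markov property $V\leftrightarrow W\leftrightarrow G$ from the independence of $Y$ and $X$ is a point the paper only asserts, and it is correct.
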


% Note that if $Y$ is uniform, then $V$ is independent from $W$ and
% hence from $G$ as well, so \eqnref{eq:guessing-cx.lem.2} can be
% rewritten as
% \[ \trnorm{\rho_{C(X,V)VWGE} - \rho_{U_1} \tensor \rho_V \tensor \rho_{WGE}} >
% \frac{\eps}{m}. \]

\begin{proof}
  We apply \lemref{lem:Yao's-lem} to \eqnref{eq:guessing-cx.lem.1} and get
  that there exists an $i \in [m]$ such that
  \begin{multline} \label{eq:guessing-cx.1}
      \left\| \sum_{\substack{x,y \\ C(x,y_{S_i}) =
          0}} p_xq_y \proj{ C(x,y_{S_1}) \dotsb C(x,y_{S_{i-1}} ),y}
      \tensor \rho^x \right. \\ - \left. \sum_{\substack{x,y \\
          C(x,y_{S_i}) = 1}} p_xq_y \proj{ C(x,y_{S_1}) \dotsb
        C(x,y_{S_{i-1}} ),y} \tensor \rho^x \right\|_{\text{tr}} \\
    > \frac{\eps}{m},
  \end{multline} where $\{p_x\}_{x \in \cX}$ and $\{q_y\}_{y \in \cY}$ are the probability
  distributions of $X$ and $Y$ respectively.

  We split $y \in \{0,1\}^d$ in two strings of $t = |S_i|$ and $d -
  t$ bits, and write $v \coloneqq y_{S_i}$ and $w \coloneqq y_{[d]
    \setminus S_i}$. To simplify the notation, we set $g(w,x,j,v)
  \coloneqq C(x,y_{S_j})$. Fix $w$, $x$ and $j$, and consider the
  function $g(w,x,j,\cdot) : \{0,1\}^t \to \{0,1\}$. This function
  only depends on $|S_j \cap S_i|$ bits of $v$. So to describe this
  function we need a string of at most $2^{|S_j \cap S_i|}$ bits. And to
  describe $g^{w,x}(\cdot) \coloneqq g(w,x,1,\cdot) \dotsb
  g(w,x,i-1,\cdot)$, which is the concatenation of the bits of
  $g(w,x,j,\cdot)$ for $1 \leq j \leq i-1$, we need a string of length
  at most $\sum_{j=1}^{i-1} 2^{|S_j \cap S_i|}$. So a system $G$
  containing a description of $g^{w,x}$ has size at most $H_0(G) \leq
  \sum_{j=1}^{i-1} 2^{|S_j \cap S_i|}$. We now rewrite
  \eqnref{eq:guessing-cx.1} as
  \begin{multline*}
    \left\| \sum_{\substack{x,v,w \\ C(x,v) = 0}}
      p_xq_{v,w}  \proj{g^{w,x}(v),v,w}
      \tensor \rho^x \right. \\ \left. - \sum_{\substack{x,v,w \\
          C(x,v) = 1}} p_xq_{v,w} \proj{g^{w,x}(v),v,w} \tensor \rho^x
    \right\|_{\text{tr}} > \frac{\eps}{m}.
  \end{multline*}

  By providing a complete description of $g^{w,x}$ instead of its
  value at the point $v$, we can only increase the trace distance,
  hence
  \begin{multline*}
    \left\| \sum_{\substack{x,v,w \\ C(x,v) = 0}}
      p_xq_{v,w} \proj{g^{w,x},v,w}
      \tensor \rho^x \right. \\ \left. - \sum_{\substack{x,v,w \\
          C(x,v) = 1}} p_xq_{v,w} \proj{g^{w,x},v,w} \tensor \rho^x
    \right\|_{\text{tr}} > \frac{\eps}{m}.
  \end{multline*}
  
  By rearranging this a little more we finally get
  \[ \trnorm{\rho_{C(X,V)VWGE} - \rho_{U_1} \tensor \rho_{VWGE}} >
  \frac{\eps}{m},\] where $G$ is a classical system of size $H_0(G)
  \leq \sum_{j=1}^{i-1} 2^{|S_j \cap S_i|}$, and $V \leftrightarrow W
  \leftrightarrow G$ form a Markov chain. By the definition of weak
  designs, we have for all $i \in [m]$, $\sum_{j=1}^{i-1} 2^{|S_j \cap
    S_i|} \leq rm$ for some $r \geq 1$. So $H_0(G) \leq rm$.
\end{proof}

\section{List-decodable codes are one-bit extractors}
\label{sec:codesRextractors}

A standard error correcting code guarantees that if the error is
small, any string can be uniquely decoded. A list-decodable code
guarantees that for a larger (but bounded) error, any string can be
decoded to a list of possible messages.

\begin{deff}[list-decodable code \cite{Sud00}]
  A code $C : \{0,1\}^n \to \{0,1\}^{\bar{n}}$ is said to be
  $(\eps,L)$-list-decodable if every Hamming ball of relative
  radius $1/2 - \eps$ in $\{0,1\}^{\bar{n}}$ contains at most
  $L$ codewords.
\end{deff}

Neither Trevisan~\cite{Tre01} nor Raz et al.~\cite{RRV02} state it
explicitly, but both papers contain
an implicit proof that if $C : \{0,1\}^n \to \{0,1\}^{\bar{n}}$ is a
$(\eps,L)$-list-decodable code, then \begin{align*}
    \Ext : \{0,1\}^n \times [\bar{n}] & \to \{0,1\} \\
      (x,y) & \mapsto C(x)_y,
  \end{align*}
  is a $(\log L + \log 1/2\eps, 2 \eps)$-strong extractor (according
  to \defref{def:extractor}). We have rewritten their proof as
  \thmref{thm:codesRextractors} for completeness.\footnote{A slightly
    more general proof, that \emph{approximate} list-decodable codes
    are $1$-bit extractors can be found in~\cite[Claim 3.7]{DV10}.}

There exist list-decodable codes with following parameters.

% \begin{lem} \label{lem:ecc} For every $n \in \N$ and $\delta > 0$
%   there is a code $C_{n,\delta} : \{0,1\}^n \to \{0,1\}^{\bar{n}}$,
%   where $\bar{n} = \poly{ n,1/\delta}$, such that every Hamming ball
%   of relative radius $1/2 - \delta$ in $\{0,1\}^{\bar{n}}$ contains at
%   most $1/\delta^2$ codewords. Furthermore, $C_{n,\delta}$ can be
%   evaluated in time $\poly{n,1/\delta}$ and $\bar{n}$ can be assumed
%   to be a power of $2$.
% \end{lem}

% Trevisan~\cite[Lemma 6]{Tre01} sketches the proof of \lemref{lem:ecc}
% and Raz et al.~\cite[Lemma 7]{RRV02} prove it in more detail with
% $\bar{n} = 4^{\lceil \log \frac{n}{\delta^2} \rceil} = O
% \left(\frac{n^2}{\delta^4}\right)$, by concatenating a Reed-Solomon
% code with a Hadamard code.

\begin{lem} \label{lem:ecc} For every $n \in \N$ and $\delta > 0$
  there is a code $C_{n,\delta} : \{0,1\}^n \to \{0,1\}^{\bar{n}}$,
  which is $(\delta,1/\delta^2)$-list-decodable, with $\bar{n} = \poly{
    n,1/\delta}$. Furthermore, $C_{n,\delta}$ can be evaluated in time
  $\poly{n,1/\delta}$ and $\bar{n}$ can be assumed to be a power of
  $2$.
\end{lem}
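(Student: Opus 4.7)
The plan is to take $C_{n,\delta}$ to be a concatenation of an outer Reed--Solomon code with an inner Hadamard code, and to invoke the Johnson bound to establish list-decodability. I would work over a finite field $\mathbb{F}_q$ with $q$ a power of two and $q = \Theta(n/\delta^2)$, view each input $x \in \{0,1\}^n$ as the coefficient vector of a polynomial $p_x$ of degree at most $n-1$ over $\mathbb{F}_q$, and map $x$ to the evaluation tuple $(p_x(\alpha))_{\alpha \in \mathbb{F}_q}$. I would then concatenate with the binary Hadamard code of length $q$ (identifying $\mathbb{F}_q$ with $\{0,1\}^{\log q}$), producing a binary codeword of length $\bar n = q^2 = \poly{n, 1/\delta}$, automatically a power of two since $q$ is.

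The relative distance of the concatenated code is straightforward to lower bound: two distinct polynomials of degree at most $n-1$ agree on at most $n-1$ points of $\mathbb{F}_q$, and at each disagreeing point the corresponding Hadamard codewords differ on exactly half of their $q$ coordinates. Hence the agreement fraction of any two distinct codewords of $C_{n,\delta}$ is at most $(n-1)/q + (1 - (n-1)/q)/2 \leq 1/2 + \delta^2/2$, for $q$ chosen as above, giving a relative Hamming distance of at least $1/2 - \delta^2/2$.

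Next I would invoke the binary Johnson bound: any binary code of relative distance at least $1/2 - \delta^2/2$ is $(\delta, 1/\delta^2)$-list-decodable, i.e., every Hamming ball in $\{0,1\}^{\bar n}$ of relative radius $1/2 - \delta$ contains at most $1/\delta^2$ codewords. Efficiency of evaluation is then immediate, since Reed--Solomon encoding (polynomial evaluation at the $q$ points of $\mathbb{F}_q$) and Hadamard encoding (multiplication by a $q \times q$ Hadamard matrix) both run in time $\poly{n, 1/\delta}$.

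The only mildly delicate point is matching the constant exactly, i.e., $L \leq 1/\delta^2$ as opposed to $L = O(1/\delta^2)$; this is absorbed by inflating $q$ by a suitable constant factor in the Johnson-bound calculation, which does not affect the $\poly{n, 1/\delta}$ bound on $\bar n$. The main obstacle is therefore not conceptual: the whole construction reduces to the standard Reed--Solomon/Hadamard concatenation together with a textbook Johnson-bound computation. If one prefers to bypass the Johnson bound, an alternative route is Guruswami--Sudan list-decoding of Reed--Solomon combined with brute-force inner decoding, which yields comparable parameters.
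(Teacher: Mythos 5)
Your construction is exactly the one the paper relies on: the statement is not proved in the text but attributed to Guruswami et al.\ (GHSZ02), whose code is precisely the Reed--Solomon/Hadamard concatenation analyzed via the Johnson bound, as you do. Your parameters check out --- with relative distance $\frac{1}{2}-\frac{\delta^2}{2}$ the Johnson bound already gives list size at most $\frac{1}{3\delta^2}\leq\frac{1}{\delta^2}$ at radius $\frac{1}{2}-\delta$, so even the constant-matching you flag as delicate needs no extra work.
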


For example, Guruswami et al.~\cite{GHSZ02} combine a Reed-Solomon
code with a Ha\-da\-mard code, obtaining such a list-decodable code with
$\bar{n} = O(n/\delta^4)$.

Such codes require all bits of the input $x$ to be read to compute any
single bit $C(x)_i$ of the output. If we are interested in so-called
$\emph{local}$ codes, we can use a construction by Lu~\cite[Corollary 1]{Lu04}.

% \begin{lem}[\protect{\cite[Corollary 1]{Lu04}}] \label{lem:localecc}
%   For every $n \in \N$, $0 < \delta < 1/m$ and constant $0 < \gamma <
%   1$, there is a code $C_{n,\delta,\gamma} : \{0,1\}^n \to
%   \{0,1\}^{\bar{n}}$, which is $\left(\delta,2^{\gamma n}/\delta^2
%   \right)$-list-decodable, with $\bar{n} =
%   \poly{n,1/\delta}$. Furthermore, for every $i \in [\bar{n}]$,
%   $C_{n,\delta,\gamma}(x)_i$ is the parity of $O(\log (1/m\delta))$ bits
%   of $x$.
% \end{lem}

\begin{thm}
  \label{thm:codesRextractors}
  Let $C : \{0,1\}^n \to \{0,1\}^{\bar{n}}$ be an
  $(\eps,L)$-list-de\-co\-dable code. Then the function 
  \begin{align*}
    C' : \{0,1\}^n \times [\bar{n}] & \to \{0,1\} \\
      (x,y) & \mapsto C(x)_y,
  \end{align*}
  is a $(\log L + \log 1/2\eps, 2 \eps)$-strong
  extractor.\footnote{This theorem still holds in the presence of
    classical side information with exactly the same parameters.}
  % against a classical adversary.
\end{thm}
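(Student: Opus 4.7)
The plan is to run the standard reconstruction/list-decoding argument by contradiction: assume the output of $C'$ on a high min-entropy source is $2\eps$-far from uniform, extract a ``predictor'' for the codeword bits, and then invoke list-decodability together with the min-entropy bound to derive a contradiction.

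First, I would unpack the extractor condition. Suppose toward contradiction that there exists a classical random variable $X$ with $\Hmin(X) \geq \log L + \log \frac{1}{2\eps}$ for which $\tfrac{1}{2}\|\rho_{C'(X,Y)Y} - \rho_{U_1}\otimes\rho_Y\|_{\text{tr}} > 2\eps$, where $Y$ is uniform on $[\bar n]$. Expanding the trace distance with $P_Y(y)=1/\bar n$ yields
\[
  \frac{1}{\bar n}\sum_{y\in[\bar n]} \bigl|\,\Pr_X[C(X)_y=0]-\tfrac{1}{2}\,\bigr| \;>\; 2\eps .
\]
Define the deterministic string $b\in\{0,1\}^{\bar n}$ by letting $b_y$ be the more likely value of $C(X)_y$. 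Then $\mathbb{E}_y\bigl[\Pr_X[C(X)_y=b_y]\bigr] > \tfrac{1}{2}+2\eps$, which rewriting as an average over $x$ gives $\mathbb{E}_{x\leftarrow X}[\mathrm{agree}(b,C(x))/\bar n] > \tfrac{1}{2} + 2\eps$, where $\mathrm{agree}$ counts matching coordinates.

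Next, I would localize this excess agreement using a Markov-type averaging. Let $B = \{x : \mathrm{agree}(b,C(x))/\bar n \geq \tfrac{1}{2}+\eps\}$ and set $p = \Pr_X[B]$. Splitting the expectation over $B$ and its complement gives
\[
  \tfrac{1}{2}+2\eps \;<\; p\cdot 1 + (1-p)\bigl(\tfrac{1}{2}+\eps\bigr),
\]
whence $p > \eps/(\tfrac{1}{2}-\eps) \geq 2\eps$. Now each $x\in B$ satisfies $d(C(x),b)/\bar n \leq \tfrac{1}{2}-\eps$, so by $(\eps,L)$-list-decodability the Hamming ball of relative radius $\tfrac{1}{2}-\eps$ around $b$ contains at most $L$ codewords, which, together with the fact that any such code must be injective for $\eps>0$, yields $|B|\leq L$.

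Finally, I would combine the size bound with the min-entropy hypothesis. Since $\Hmin(X)\geq k$ means $P_X(x)\leq 2^{-k}$ for all $x$, we get $2\eps < p = \Pr_X[B] \leq L\cdot 2^{-k}$, i.e.\ $k < \log L + \log\frac{1}{2\eps}$, contradicting our assumption. Taking the contrapositive establishes that $C'$ is a $(\log L + \log\tfrac{1}{2\eps},\,2\eps)$-strong extractor. The only delicate step is the predictor-to-list-decoding reduction, specifically choosing the threshold $\tfrac{1}{2}+\eps$ and verifying that the constants line up so that the Markov step yields $p>2\eps$ exactly, matching the $\log L + \log\tfrac{1}{2\eps}$ threshold in the statement; everything else is routine. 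Extending to classical side information (the footnote) is automatic by conditioning on each value of the side information and applying the bound $\Hmin(X|E=e)\geq k$ pointwise.
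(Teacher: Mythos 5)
Your proposal is correct and follows essentially the same route as the paper: define the coordinate-wise majority string, convert the distinguishing advantage into excess expected agreement, apply a Markov-type averaging to find a $>2\eps$-probability set of inputs whose codewords lie in a relative-radius-$(\tfrac12-\eps)$ ball, and invoke list-decodability plus the pointwise bound $P_X(x)\le 2^{-k}$ (the paper phrases this last step as a guessing strategy succeeding with probability $2\eps/L$, which is equivalent). The only cosmetic difference is your explicit appeal to injectivity of the code when bounding $|B|\le L$ --- a point the paper's proof also relies on implicitly when it identifies codewords in the ball with candidate values of $x$.
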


To prove this theorem we first show that a player who can
distinguish the bit of $C'(X,Y)$ from uniform can construct a string
$\alpha$ which is close to $C(X)$ on average (over $X$). Then using
the error correcting properties of the code $C$, he can reconstruct
$X$. Hence a player who can break the extractor must have low
min-entropy about $X$.

\begin{lem}
  \label{lem:c-approximating-cx} Let $X$ and $Y$ be two independent
  random variables with alphabets $\{0,1\}^n$ and $[n]$
  respectively. Let $Y$ be uniformly distributed and $X$ be
  distributed such that $\frac{1}{2}| X_Y \circ Y - U_1 \circ Y| >
  \delta$, where $U_1$ is uniformly distributed on $\{0,1\}$. Then
  there exists a string $\alpha \in \{0,1\}^n$ with \[ \Pr
  \left[d(X,\alpha) \leq \frac{1}{2} - \frac{\delta}{2} \right] >
  \delta, \] where $d(\cdot,\cdot)$ is the relative Hamming distance.
\end{lem}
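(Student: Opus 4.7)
The assumption is a statement about the joint distribution of $(X_Y, Y)$, and the conclusion is about a fixed advice string $\alpha$ close in Hamming distance to $X$. The natural bridge is the \emph{plurality} (i.e., bitwise majority) string: define $\alpha \in \{0,1\}^n$ by setting $\alpha_i$ to whichever of $0,1$ is more likely to appear as the $i$-th bit of $X$. The plan is to show first that the distinguishing advantage translates into a sizable average coordinate-bias $\sum_i |\Pr[X_i=0] - 1/2|$, then to use this to control $\mathbb{E}[d(X,\alpha)]$, and finally to turn an expectation bound into the required probability bound by an elementary Markov-type argument.

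\textbf{Step 1: Unfolding the TV distance.} Since $Y$ is uniform on $[n]$ and independent of $X$, the hypothesis
\[ \tfrac{1}{2} \bigl| (X_Y, Y) - (U_1, Y) \bigr|_{\mathrm{TV}} > \delta \]
expands, using binarity of $X_Y$, to $\tfrac{1}{n} \sum_{i=1}^{n} |p_i - 1/2| > \delta$, where $p_i := \Pr[X_i = 0]$. Equivalently $\sum_{i=1}^{n} |p_i - 1/2| > n\delta$.

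\textbf{Step 2: From average bias to expected agreement.} Take $\alpha_i := 0$ if $p_i \geq 1/2$ and $\alpha_i := 1$ otherwise. Then $\Pr[X_i = \alpha_i] = 1/2 + |p_i - 1/2|$, and by linearity of expectation the expected number of agreements between $X$ and $\alpha$ is $n/2 + \sum_i |p_i - 1/2| > n(1/2 + \delta)$. Dividing by $n$, the expected relative Hamming distance satisfies $\mathbb{E}[d(X,\alpha)] < 1/2 - \delta$.

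\textbf{Step 3: Markov-type conversion.} Let $p := \Pr[d(X,\alpha) \leq 1/2 - \delta/2]$. Since $d(X,\alpha) \in [0,1]$ and takes values strictly greater than $1/2 - \delta/2$ on the complementary event,
\[ \tfrac{1}{2} - \delta > \mathbb{E}[d(X,\alpha)] \geq \bigl(\tfrac{1}{2} - \tfrac{\delta}{2}\bigr)(1 - p), \]
which rearranges to $p > \delta/(1 - \delta) > \delta$. This is exactly the claim.

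The argument is almost entirely bookkeeping; there is no genuine obstacle. The only mildly delicate step is Step 3, where one must not throw away the base $0$ contribution of the good event too aggressively — keeping the inequality $d \geq 1/2 - \delta/2$ on the bad event (rather than $d \geq 1/2$, say) is what makes the constants come out as stated.
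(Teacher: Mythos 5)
Your proof is correct and follows essentially the same route as the paper's: the same plurality string $\alpha$, the same computation showing $\mathbb{E}[d(X,\alpha)] < \tfrac12 - \delta$ (your $\tfrac1n\sum_i|p_i-\tfrac12|>\delta$ is just a rewriting of the paper's $\tfrac1n\sum_y \max_b P_{X_y}(b) > \tfrac12+\delta$), and the same Markov-type step yielding $p > \delta/(1-\delta) \geq \delta$. No gaps.
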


\begin{proof}
  Define $\alpha \in \{0,1\}^n$ to be the concatenation of the most
  probable bits of $X$, i.e., $\alpha_y \coloneqq \argmax_{b} P_{X_y}(b)$,
  where $P_{X_y}(b) = \sum_{\substack{x \in \{0,1\}^n \\ x_y = b}}
  P_X(x)$. 

  The average relative Hamming distance between $X$ and $\alpha$ is
  \begin{align*}
    \sum_{x \in \{0,1\}^n} P_X(x) d(x,\alpha) & = \frac{1}{n} \sum_{x \in \{0,1\}^n}
    P_X(x) \sum_{y =1}^n |x_y - \alpha_y| \\
    & = \frac{1}{n} \sum_{\substack{x,y \\ x_y \neq \alpha_y}} P_X(x)  =  1 -  \frac{1}{n} \sum_{y = 1}^n P_X(\alpha_y).
  \end{align*}

  And since $\frac{1}{2}| X_Y \circ Y - U_1
  \circ Y| > \delta$ is equivalent to $\frac{1}{n} \sum_{y = 1}^n
  \max_{b \in \{0,1\}} P_{X_y}(b) > \frac{1}{2} + \delta$, we have
  \begin{equation} \label{eq:c-lem.1} \sum_{x \in \{0,1\}^n} P_X(x)
    d(x,\alpha) < \frac{1}{2} - \delta. \end{equation}

  We now wish to lower bound the probability that the relative Hamming
  distance is less than $\frac{1}{2}-\frac{\delta}{2}$. Let $B
  \coloneqq \{x : d(x,\alpha) \leq \frac{1}{2} - \frac{\delta}{2}$\}
  be the set of values $x \in \{0,1\}^n$ meeting this
  requirement. Then the weight of $B$, $w(B) \coloneqq \sum_{x \in B}
  P_{X}(x)$, is the quantity we wish to lower bound. It is at its
  minimum if all $x \in B$ have Hamming distance $d(x,\alpha) = 0$. In
  which case the average Hamming distance is
  \begin{equation} \label{eq:c-lem.2} \sum_{x \in \{0,1\}^n} P_{X}(x)
    d \left( x,\alpha \right) > (1- w(B)) \left( \frac{1}{2} -
      \frac{\delta}{2} \right). \end{equation} Combining
  \eqnsref{eq:c-lem.1} and \eqref{eq:c-lem.2} we get
  \[ w(B) > \frac{\delta}{1-\delta} \geq \delta. \qedhere \] \end{proof}

We are now ready to prove \thmref{thm:codesRextractors}.

\begin{proof}[Proof of \thmref{thm:codesRextractors}.]
  We will show that if it is possible to distinguish $C'(X,Y)$ from
  uniform with probability at least $2 \eps$, then $X$ must have
  min-entropy $\Hmin{X} < \log L + \log 1/2\eps$.

  If $\frac{1}{2} \left| C'(X,Y) \circ Y - U_1 \circ Y
 \right| > 2 \eps$, then by \lemref{lem:c-approximating-cx} we
 know that there exists an $\alpha \in \{0,1\}^{\bar{n}}$ such that \[ \Pr \left[ d\left( C(X), \alpha \right)
  \leq \frac{1}{2} - \eps \right] > 2\eps,\]
where $d(\cdot,\cdot)$ is the relative Hamming distance.

This means that with probability at least $2\eps$, $X$ 
takes values $x$ such that the relative Hamming distance $d(C(x),\alpha) \leq
\frac{1}{2} - \eps$. So for these values of $X$, if we choose one
of the codewords in the Hamming ball of relative radius $\frac{1}{2} -
\eps$ around $\alpha$ uniformly at random as our guess for $x$, we will
have chosen correctly with probability at least $1/L$, since the
Hamming ball contains at most $L$ code words. The total probability of
guessing $X$ is then at least $2 \eps/L$.

Hence by \eqnref{eq:hmin=pguess}, $\Hmin{X} < \log L + \log
1/2\eps$.
\end{proof}

%\bibliographystyle{eprintunsrt} 
% \bibliographystyle{eprintalpha}
% \bibliography{classical,quantum}
\newcommand{\etalchar}[1]{$^{#1}$}
% Define empty bibhead if not already defined
\providecommand{\bibhead}[1]{}
% Define tocrefpdfbookmark if not already defined
\expandafter\ifx\csname pdfbookmark\endcsname\relax%
  \providecommand{\tocrefpdfbookmark}{}
\else\providecommand{\tocrefpdfbookmark}{%
   \hypertarget{tocreferences}{}%
   \pdfbookmark[1]{References}{tocreferences}}%
\fi

\tocrefpdfbookmark

\end{document}